\title{Refined Characterizations of Approval-Based Committee Scoring Rules}
\author {
    Chris Dong, 
    Patrick Lederer
}
    \def\multiset#1#2{\ensuremath{\left(\kern-.3em\left(\genfrac{}{}{0pt}{}{#1}{#2}\right)\kern-.3em\right)}}
	\theoremstyle{definition}
	\theoremstyle{plain}
	\newtheorem{lemma}{Lemma}
	\newtheorem{proposition}{Proposition}
\begin{document}

\maketitle

\begin{abstract}
In approval-based committee (ABC) elections, the goal is to select a fixed size subset of the candidates, a so-called committee, based on the voters' approval ballots over the candidates. One of the most popular classes of ABC voting rules are ABC scoring rules, which have recently been characterized by \citet{LaSk21a}. However, this characterization relies on a model where the output is a ranking of committees instead of a set of winning committees and no full characterization of ABC scoring rules exists in the latter standard setting. We address this issue by characterizing two important subclasses of ABC scoring rules in the standard ABC election model, thereby both extending the result of \citet{LaSk21a} to the standard setting and refining it to subclasses. In more detail, by relying on a consistency axiom for variable electorates, we characterize \emph{(i)} the prominent class of Thiele rules and \emph{(ii)} a new class of ABC voting rules called ballot size weighted approval voting. Based on these theorems, we also infer characterizations of three well-known ABC voting rules, namely multi-winner approval voting, proportional approval voting, and satisfaction approval voting.
\end{abstract}

\section{Introduction}

An important problem for multi-agent systems is collective decision making: given the voters' preferences over a set of alternatives, a common decision has to be made. This problem has traditionally been studied by economists for settings where a single candidate is elected \citep{ASS02a}, but there is also a multitude of applications where a fixed number of the candidates needs to be elected. The archetypal example for this is the election of a city council, but there are also technical applications such as recommender systems \citep{SFL16a,GaFa22a}. In social choice theory, this type of elections is typically called \emph{approval-based committee (ABC) elections} and has recently attracted significant attention \citep[e.g.,][]{ABC+16a,FSST17a,LaSk22b}. In more detail, the research on these elections focuses on \emph{ABC voting rules}, which are functions that choose a set of winning committees (i.e., fixed size subsets of the candidates) based on the voters' approval ballots (i.e., the sets of candidates that the voters approve).

One of the most important classes of ABC voting rules are ABC scoring rules \citep[see, e.g.,][]{LaSk21a}. These rules generalize the idea of single-winner scoring rules to ABC elections: each voter assigns points to each committee according to some scoring function and the winning committees are those with the maximal total score. There are many well-known examples of ABC scoring rules, such as multi-winner approval voting (\texttt{AV}), satisfaction approval voting (\texttt{SAV}), Chamberlin-Courant approval voting (\texttt{CCAV}), and proportional approval voting (\texttt{PAV}). Moreover, ABC scoring rules are a superset of the prominent class of Thiele rules. 

In a recent breakthrough result, \citet{LaSk21a} have formalized the relation between ABC scoring rules and single-winner scoring rules by characterizing ABC scoring rules with almost the same axioms as \citet{Youn75a} uses for his influential characterization of single-winner scoring rules. In more detail, \citet{LaSk21a} show that ABC scoring rules are the only ABC ranking rules that satisfy the axioms of anonymity, neutrality, continuity, weak efficiency, and consistency. However, this result discusses ABC ranking rules, which return transitive rankings of committees, whereas the literature on ABC elections typically focuses on sets of winning committees as output. Hence, this theorem does not allow for characterizations of ABC scoring rules in the standard ABC voting setting.

While \citet{LaSk21b} also present a result for the standard ABC election setting, the proof of this result is incomplete.\footnote{Roughly, the proof of \citet{LaSk21b} works by constructing an ABC ranking rule $g$ based on an ABC voting rule $f$ that satisfies the given axioms. Then, \citet{LaSk21b} show that $g$ is an ABC scoring rule, which implies that $f$ is an ABC scoring rule in the choice setting. However, the authors never show that $g$ returns transitive rankings, which is required by definition of ABC ranking rules. Closing this gap seems surprisingly difficult.} Moreover, even when the proof could be fixed, this result is not a full characterization of ABC scoring rules as it needs a technical axiom called $2$-non-imposition. This axiom is, e.g., violated by \texttt{AV} and \texttt{SAV} and hence, characterizations of important ABC voting rules---and more generally tools to easily infer such results---are still missing. \citeauthor{LaSk21b} (\citeyear{LaSk21b}, p. 16) also acknowledge this shortcoming by writing that ``a full characterization of ABC scoring rules within the class of ABC choice rules remains as important future work''. 

\paragraph{Our contribution.} We address this problem by presenting full axiomatic characterizations of two important subclasses of ABC scoring rules, namely Thiele rules and ballot size weighted approval voting (BSWAV) rules, in the standard ABC election setting. Hence, our results refine the result of \citet{LaSk21a} to subclasses and extend it to the standard ABC voting setting. {Thiele rules} are ABC scoring rules that do not depend on the ballot size and have attracted significant attention \citep[e.g.,][]{ABC+16a,SFL16a,BLS18a}. On the other hand, {BSWAV rules} generalize multi-winner approval voting by weighting voters depending on the size of their ballots. So far, the class of BSWAV rules has only been studied for single-winner elections \citep{AlVo09a} but not for ABC elections. For example, \texttt{PAV} and \texttt{CCAV} are Thiele rules, \texttt{SAV} is a BSWAV rule, and \texttt{AV} is in both classes. Moreover, every ABC scoring rule that has been studied in the literature is in one of our two classes.

For our results, we mainly rely on the axioms of \citet{LaSk21a}: anonymity, neutrality, continuity, weak efficiency, and consistency. The first four of these axioms are mild standard conditions that are satisfied by every reasonable ABC voting rule. By contrast, consistency is central for our proofs. This axiom requires that if some committees are chosen for two disjoint elections, then precisely these committees should win in a joint election, and it features in several prominent results in social choice theory \citep[e.g.,][]{Youn75a,YoLe78a,Fish78d}. 

To characterize Thiele rules, we need one more axiom called {independence of losers}. This condition requires that a winning committee $W$ stays winning if some voters change their ballot by disapproving ``losing'' candidates outside of $W$ as, intuitively, the quality of $W$ should only depend on its members.
Similar conditions are well-known for single-winner elections \citep[e.g.,][]{BrPe19a} and this axiom has recently been adapted to ABC voting by \citet{DoLe22a}. We then show that \emph{an ABC voting rule is a Thiele rule if and only if it satisfies anonymity, neutrality, consistency, continuity, and independence of losers (\Cref{thm:Thiele}).} 

For our characterization of BSWAV rules, we introduce a new axiom called {choice set convexity}. This condition requires that if two committees are chosen, then all committees ``in between'' those committees are chosen, too: if $W$ and $W'$ are chosen, then all committees $W''$ with $W\cap W'\subseteq W''\subseteq W\cup W'$ are also chosen. We believe that this axiom is reasonable for excellence-based elections (where only the individual quality of the candidates matters) as a tie between committees indicates that they are equally good and the candidates in $W\setminus W'$ and $W'\setminus W$ are thus exchangeable.
We then prove that \emph{an ABC voting rule is a BSWAV rule if and only if it satisfies anonymity, neutrality, consistency, continuity, weak efficiency, and choice set convexity (\Cref{thm:BSWAV}).} 

While our theorems are intuitively related to the results of \citet{LaSk21b,LaSk21a}, they are logically independent. In particular, in contrast to their results, our theorems allow for simple characterizations of \emph{all} Thiele rules and BSWAV rules in the \emph{standard ABC voting model}. We also demonstrate this point in \Cref{sec:specific} by axiomatizing \texttt{AV}, \texttt{SAV}, and \texttt{PAV}. In more detail, we obtain full characterizations of these rules by analyzing axioms for party-list profiles (where candidates are partitioned into parties and each voter approves all candidates of a single party) that formalize when all candidates of a party are chosen. To the best of our knowledge, the result for \texttt{SAV} is the first full characterization of this rule. An overview of our results is given in \Cref{fig:overview}.

\begin{figure}[t]
		\centering
	      \begin{tikzpicture}
		  \tikzstyle{arrow}=[->,>=angle 60, shorten >=1pt,draw]
		  \tikzstyle{mynode}=[align=center, anchor=north]

		  		\node[mynode] (Scoring) at (0,-0.2)   {ABC scoring rules};
		  		\node[mynode] (Thiele) at (-2.1, -1){Thiele rules\\\emph{\small Anonymity, Neutrality, }\\\emph{\small Consistency, Continuity,}\\\emph{\small Independence of Losers}};
				\node[mynode] (BSW) at (2.1, -1) {BSWAV rules\\\emph{\small Anonymity, Neutrality, Consistency,}\\\emph{\small Continuity, Choice Set Convexity,}\\\emph{\small Weak Efficiency}};
				
				\node[mynode] (SAV)  at (1.6, -4.1){\texttt{SAV}};
				\node[mynode] (AV)  at (-0.9, -4.1){\texttt{AV}};
				\node[mynode] (PAV)  at (-3.3, -4.1){\texttt{PAV}};
		
		  		\draw[-latex] (Scoring) to (-1.8,-1);
		  		\draw[-latex] (Scoring) to (1.8,-1);
[]				\draw[-latex] (-0.9,-2.7) to node[midway, right,align=center]{\emph{\small Excellence}\\\emph{\small criterion}} (AV);
				\draw[-latex] (-3.3,-2.7) to node[midway, right,align=center]{\emph{\small Party-propor-}\\\emph{\small tionality}} (PAV);
				\draw[-latex] (1.6, -2.7) to node[midway, right,align=center]{\emph{\small Party-proportionality,}\\\emph{\small aversion to unanimous}\\\emph{\small committees}} (SAV);
	      \end{tikzpicture}
		  \caption{Overview of our results. An arrow from $X$ to $Y$ means that $Y$ is a subset or an element of $X$. The axioms written on an arrow from $X$ to $Y$ characterize the rule $Y$ within the class $X$. The axioms written below Thiele rules and BSWAV rules characterize these classes of ABC voting rules.}
		  \label{fig:overview}
\end{figure}
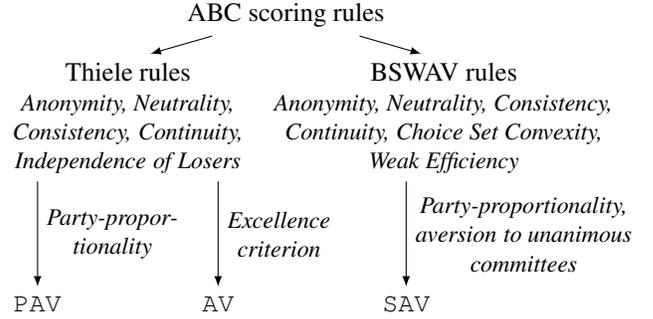 

\paragraph{Related work.}
The lack of axiomatic characterizations is one of the major open problems in the field of ABC voting \citep[see, e.g.,][Q1]{LaSk22b}, and there are thus only few closely related papers. Maybe the most important one is due to \citet{LaSk21a} who characterize ABC scoring rules in the context of ABC ranking rules; however, this result does not allow for characterizations of ABC scoring rules in the standard setting. The follow-up paper by \citet{LaSk21b} tries to fix this issue, but its proof is incomplete and the main result requires a technical auxiliary condition that rules out important rules such as \texttt{AV} and \texttt{SAV}. Moreover, \citet{DoLe22a} characterize committee monotone ABC voting rules, which can be seen as greedy approximations of ABC scoring rules. Finally, committee scoring rules have also been analyzed for the case that voters report ranked ballots, but the results for this setting are also restricted to characterizations of committee ranking rules \citep{SFS19a} or partial characterizations within the class of committee scoring rules \citep{EFSS17a,FSST19a}. 

Furthermore, a large amount of papers studies axiomatic properties of ABC scoring rules \citep[e.g.,][]{LaSk18a,ABC+16a,SaFi19a,BLS18a,LaSk20a}. For instance, \citet{ABC+16a} investigate Thiele rules with respect to how fair they represent groups of voters with similar preferences, and \citet{SaFi19a} study monotonicity conditions for several ABC scoring rules. Another important aspect of these rules is their computational complexity. In particular, it is known that all Thiele rules but \texttt{AV} are $\mathtt{NP}$-hard to compute on the full domain \citep{AGG+15a,SFL16a}. There is thus significant work on how to compute these rules by, e.g., restricting the domain of preference profiles \citep{ElLa15a,Pete18b}, studying approximation algorithms \citep{DMMS20a,BFGG22a}, or designing FPT algorithms \citep{BFK+20a}. For a more detailed overview on ABC scoring rules, we refer to the survey by \citet{LaSk22b}. 

Finally, in the broader realm of social choice, there are numerous conceptually related results as consistency features in many prominent theorems: for instance, \citet{Youn75a} has characterized scoring rules for single-winner elections based on this axiom \citep[see also][]{Smit73a,Myer95b,Piva13a}, numerous characterizations of single-winner approval voting rely on consistency \citep{Fish78d,BrPe19a}, \citet{YoLe78a} have characterized Kemeny's rule with the help of this axiom, and \citet{Bran13a} characterize a randomized voting rule called maximal lotteries based on this condition.

\section{Preliminaries}

Let $\mathbb{N}=\{1,2,\dots\}$ denote an infinite set of voters and let $\mathcal{C}=\{c_1,\dots, c_m\}$ denote a set of $m\geq 2$ candidates. Intuitively, we interpret $\mathbb{N}$ as the set of all possible voters and a concrete electorate $N$ is a finite and non-empty subset of $\mathbb{N}$. We thus define $\mathcal{F}(\mathbb{N})=\{N\subseteq\mathbb{N} \colon N \textit{ is non-empty and finite}\}$ as the set of all possible electorates. Given an electorate $N\in\mathcal{F}(\mathbb{N})$, we assume that each voter $i\in N$ reports her preferences over the candidates as \emph{approval ballot} $A_i$, i.e., as a non-empty subset of $\mathcal{C}$. $\mathcal{A}$ is the set of all possible approval ballots. An \emph{(approval) profile} $A$ is a mapping from $N$ to $\mathcal{A}$, i.e., it assigns an approval ballot to every voter in the given electorate. Next, we define $\mathcal{A}^*=\bigcup_{N\in\mathcal{F}(\mathbb{N})}\mathcal{A}^N$ as the set of all approval profiles. For every profile $A\in\mathcal{A}^*$, $N_A$ denotes the set of voters that submit a ballot in $A$. Finally, two approval profiles $A,A'$ are called \emph{disjoint} if $N_A\cap N_{A'}=\emptyset$ and for disjoint profiles $A,A'$, we define the profile $A''=A+A'$ by $N_{A''}=N_A\cup N_{A'}$, $A''_i=A_i$ for $i\in N_A$, and $A''_i=A'_i$ for $i\in N_{A'}$.

Given an approval profile, our aim is to elect a \emph{committee}, i.e., a subset of the candidates of predefined size. We denote the target committee size by $k\in\{1,\dots,m-1\}$ and the set of all size-$k$ committees by $\mathcal{W}_k=\{W\subseteq\mathcal{C}\colon |W|=k\}$. For determining the winning committees for a given preference profile, we use \emph{approval-based committee (ABC) voting rules} which are mappings from $\mathcal A^* $ to $2^{\mathcal{W}_k}\setminus\{\emptyset\}$. Note that we define ABC voting rules for a fixed committee size and may return multiple committees. The first condition is for notational convenience and the second one is necessary to satisfy basic fairness conditions.

\subsection{ABC Voting Rules}\label{subsec:rules}

We focus in this paper on two classes of ABC voting rules, namely Thiele rules and BSWAV rules, which are both refinements of the class of ABC scoring rules. 

\paragraph{ABC scoring rules.} ABC scoring rules rely on a scoring function according to which voters assign points to committees and choose the committees with maximal total score. Formally, a \emph{scoring function} $s(x,y)$ is a mapping from $\{0,\dots,k\}\times \{1,\dots, m\}$ to $\mathbb{R}$ such that $s(x,y)\geq s(x',y)$ for all $x,x'\in \{\max(0, k+y-m), \dots, \min(k,y)\}$ with $x\geq x'$. We define the score of a committee $W$ in a profile $A$ as $\hat s(A,W)=\sum_{i\in N_A} s(|A_i\cap W|, |A_i|)$. Then, an ABC voting rule $f$ is an \emph{ABC scoring rule} if there  is a scoring function $s$ such that $f(A)=\{W\in\mathcal{W}_k\colon \forall W'\in\mathcal{W}_k\colon \hat s(A,W)\geq \hat s(A,W')\}$ for all profiles $A\in\mathcal{A}^*$. The set $\{\max(0,k+y-m), \dots, \min(k,y)\}$ contains all ``active'' intersection sizes: a committee of size $k$ and a ballot of size $y$ intersect at least in $\max(0, k+y- m)$ candidates and at most in $\min(k,y)$ candidates.

\paragraph{Thiele rules.} Arguably the most prominent subclass of ABC scoring rules are Thiele rules. These rules, which have first been suggested by their namesake \citet{Thie95a}, are ABC scoring rules that ignore the ballot size. Hence, Thiele rules are defined by a non-decreasing \emph{Thiele scoring function} $s:\{0,\dots, k\}\rightarrow \mathbb{R}$ with $s(0)=0$, and choose the committees that maximize the total score. Formally, an ABC voting rule $f$ is a \emph{Thiele rule} if there is a Thiele scoring function $s$ such that $f(A)=\{W\in\mathcal{W}_k\colon \forall W'\in\mathcal{W}_k\colon \hat s(A,W)\geq \hat s(A,W')\}$ for all profiles $A\in\mathcal{A}^*$, where $\hat s(A,W)=\sum_{i\in N_A} s(|A_i\cap W|)$. There are numerous important Thiele rules such as multi-winner approval voting (\texttt{AV}; defined by $s_{\texttt{AV}}(x)=x$), proportional approval voting (\texttt{PAV}; defined by $s_{\texttt{PAV}}(x)=\sum_{z=1}^x \frac{1}{z}$ for $x>0$), and Chamberlin-Courant approval voting (\texttt{CCAV}; defined by $s_{\texttt{CCAV}}(x)=1$ for $x>0$).

\paragraph{BSWAV rules.} Ballot size weighted approval voting rules form a new subclass of ABC scoring rules which generalize \texttt{AV} by weighting voters based on their ballot size. Formally, a \emph{ballot size weighted approval voting (BSWAV) rule} $f$ is defined by a weight vector $\alpha\in \mathbb{R}^m_{\geq 0}$ and chooses for every profile $A$ the committees $W$ that maximize $\hat s(A,W)=\sum_{i\in N_A} \alpha_{|A_i|}|A_i\cap W|$. The score of a committee $W$ for a BSWAV rule can be represented as the sum of the scores of individual candidates $c\in W$ since $\sum_{i\in N_A} \alpha_{|A_i|}|A_i\cap W|=\sum_{c\in W}\sum_{i\in N_A\colon c\in A_i} \alpha_{|A_i|}$. Clearly, \texttt{AV} is the BSWAV rule defined by $\alpha_x = 1$ for all $x\in \{1,\dots,m\}$. Another well-known BSWAV rule is satisfaction approval voting (\texttt{SAV}) defined by $\alpha_x=\frac{1}{x}$ for $x\in \{1,\dots,m\}$. This rule has been popularized by \citet{BrKi14a} for ABC elections, but it has been studied before by, e.g., \citet{AlVo09a} and \citet{KiMa12a}.
\medskip

We note that Thiele rules and BSWAV rules are diametrically opposing subclasses of ABC scoring rules: Thiele rules do not depend on the ballot size at all, whereas BSWAV rules only depend on this aspect. Consequently, if $k<m-1$, the sets of BSWAV rules and Thiele rules only intersect in \texttt{AV} and the trivial rule \texttt{TRIV} (which always chooses all size $k$ committees). So, \texttt{AV} is the only non-trivial ABC voting rule that is in both classes; \emph{non-triviality} means here that there is a profile $A$ such that $f(A)\neq \texttt{TRIV}(A)$. Moreover, both classes are proper subsets of the set of ABC scoring rules if $1<k<m-1$. By contrast, the set of BSWAV rules is equivalent to the set of ABC scoring rules if $k\in \{1, m-1\}$. 

\subsection{Basic Axioms}\label{subSec:Axioms}

Next, we introduce the axioms used for our characterizations.

\paragraph{Anonymity.} Anonymity is one of the most basic fairness properties and requires that all voters should be treated equally. Formally, we say an ABC voting rule $f$ is \emph{anonymous} if $f(A)=f(\pi(A))$ for all profiles $A\in\mathcal{A}^*$ and permutations $\pi:\mathbb{N}\rightarrow\mathbb{N}$. Here, we denote by $A'=\pi(A)$ the profile with $N_{A'}=\{\pi(i)\colon i\in N_A\}$ and $A'_{\pi(i)}=A_i$ for all $i\in N_A$. 

\paragraph{Neutrality.} Similar to anonymity, \emph{neutrality} is a fairness property for the candidates. This axiom requires of an ABC voting rule $f$ that $f(\tau(A))=\{\tau(W)\colon W\in f(A)\}$ for all profiles $A\in\mathcal{A}^*$ and permutations $\tau:\mathcal{C}\rightarrow\mathcal{C}$. This time, $A'=\tau(A)$ denotes the profile with $N_{A'}=N_A$ and $A'_{i}=\tau(A_i)$ for all $i\in N_A$.

\paragraph{Weak Efficiency.} Weak efficiency requires that unanimously unapproved candidates can never be ``better'' than approved ones. Formally, we say an ABC voting rule $f$ is \emph{weakly efficient} if $W\in f(A)$ for a committee $W\in\mathcal{W}_k$ with $c\in W\setminus (\bigcup_{i\in N_A} A_i)$ implies that $(W\cup \{c'\})\setminus \{c\}\in f(A)$ for all candidates $c'\in \mathcal{C}\setminus W$.

\paragraph{Continuity.} The intuition behind continuity is that a large group of voters should be able to enforce that some of its desired outcomes are chosen. Hence, an ABC voting rule $f$ is \emph{continuous} if for all profiles $A,A'\in\mathcal{A}^*$, there is $\lambda\in \mathbb{N}$ such that $f(\lambda A+A')\subseteq f(A)$. Here, $\lambda A$ denotes the profile consisting of $\lambda$ copies of $A$; the names of the voters in $N_{\lambda A}$ will not matter as we will focus on anonymous rules. 

\paragraph{Consistency.} The central axiom for our results is consistency. This condition states that if some committees are chosen for two disjoint profiles, then precisely those committees are chosen in the joint profile. Formally, an ABC voting rule $f$ is \emph{consistent} if $f(A+A')=f(A)\cap f(A')$ for all disjoint profiles $A, A'\in\mathcal{A}^*$ with $f(A)\cap f(A')\neq \emptyset$. Consistency and the previous four axioms have been introduced by \citet{LaSk21b} for ABC elections. Moreover, except consistency, all these axioms are very mild and satisfied by almost all commonly considered ABC voting rules. 

\paragraph{Independence of Losers.} Independence of losers has been adapted to ABC elections by \citet{DoLe22a} and requires of an ABC voting rule $f$ that a winning committee $W$ should still be a winning committee if voters disapprove candidates outside of $W$. Or, put differently, whether a committee $W$ wins should not depend on the  voters' approvals of ``losing'' candidates not in $W$. We hence say an ABC voting rule $f$ is \emph{independent of losers} if $W\in f(A)$ implies that $W\in f(A')$ for all profiles $A,A'\in\mathcal{A}^*$ and committees $W\in\mathcal{W}_k$ such that $N_A=N_{A'}$ and $W\cap A_i=W\cap A_i'$ and $A_i'\subseteq A_i$ for all voters $i\in N_A$. The motivation for this axiom is that the quality of $W$ should only depend on the candidates in $W$. So, if some voters disapprove candidates $x\not\in W$, the quality of $W$ is not affected and a chosen committee $W$ should stay chosen. All commonly studied ABC voting rules that are independent of the ballot size (e.g., Thiele rules, Phragm\'en's rule, and sequential Thiele rules) satisfy this axiom, whereas all BSWAV rules except \texttt{AV} fail~it.

\paragraph{Choice Set Convexity.} Finally, we introduce a new condition called choice set convexity: an ABC voting rule $f$ is \emph{choice set convex} if $W,W'\in f(A)$ implies that $W''\in f(A)$ for all committees $W, W', W''\in\mathcal{W}_k$ and profiles $A\in\mathcal{A}^*$ such that $W\cap W'\subseteq W''\subseteq W\cup W'$. More informally, this axiom states that if a rule chooses two committees $W$ and $W'$, then all committees ``between'' $W$ and $W'$ are also chosen. We believe that choice set convexity is reasonable in elections in which only the individual quality of the elected candidates matters. For example, if we want to hire $3$ applicants for independent jobs based on the interviewers' preferences, it seems unreasonable that the sets $\{c_1,c_2,c_3\}$ and $\{c_1,c_4,c_5\}$ are good enough to be hired but $\{c_1, c_2, c_4\}$ is not. More generally, we can interpret the membership of a candidate in a chosen committee as certificate for its quality and all candidates $c\in (W\setminus W')\cup (W'\setminus W)$ are then equally good. Many commonly considered voting rules fail this axiom, but one can always compute the ``convex hull'' of a choice set.

\section{Results}\label{sec:classes}

We are now ready to state our results. In particular, we discuss the characterizations of Thiele rules and BSWAV rules in \Cref{subsec:Thiele,subsec:BSWAV}, respectively. Moreover, we present characterizations of \texttt{AV}, \texttt{PAV}, and \texttt{SAV} in \Cref{sec:specific}. Due to space constraints, we defer most proofs to the appendix and give proof sketches instead. 

\subsection{Characterization of Thiele Rules}
\label{subsec:Thiele}

We now turn to our first characterization: Thiele rules are the only ABC voting rules that are anonymous, neutral, consistent, continuous, and independent of losers. We thus turn the result of \citet{LaSk21a} into a characterization of Thiele rules in the standard ABC voting model by replacing weak efficiency with independence of losers. 

\begin{restatable}{theorem}{Thiele}\label{thm:Thiele}
    An ABC voting rule is a Thiele rule if and only if it satisfies anonymity, neutrality, consistency, continuity, and independence of losers. 
\end{restatable}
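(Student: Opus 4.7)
The plan is to handle the two directions separately. The forward direction is routine: for any Thiele rule with scoring function $s$, anonymity, neutrality and consistency are immediate because the total score $\hat s(A,W)=\sum_{i\in N_A}s(|A_i\cap W|)$ is a sum of voter-symmetric, candidate-symmetric contributions. Continuity holds because for sufficiently many copies of $A$ the score differences on $A$ dominate any bounded perturbation from $A'$. Independence of losers is immediate: if $A_i'\subseteq A_i$ and $A_i'\cap W=A_i\cap W$ for every $i$, then $s(|A_i'\cap W|)=s(|A_i\cap W|)$, so $\hat s(\cdot,W)$ is unchanged while $\hat s(\cdot,W')$ can only weakly decrease for other committees $W'$, preserving the winning status of $W$.

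For the backward direction, let $f$ be any ABC voting rule satisfying the five axioms. The first stage is to derive an ABC scoring rule representation of $f$. Using anonymity, I encode each profile as a ballot-count vector $n(A)\in\mathbb{Z}_{\geq 0}^\mathcal{A}$. Consistency implies that for every committee $W$ the set $C_W=\{n(A):W\in f(A)\}$ is closed under addition of compatible profiles, and continuity provides a topological closure condition on the corresponding cones in $\mathbb{R}^\mathcal{A}$. A Young-style separating-hyperplane argument then produces, for every pair $W\neq W'$, a linear functional on $\mathbb{R}^\mathcal{A}$ whose sign encodes whether $W$ is strictly preferred to $W'$, tied with $W'$, or strictly dispreferred. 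Combining these functionals yields a representation $f(A)=\arg\max_{W\in\mathcal{W}_k}\sum_{i\in N_A}\sigma(A_i,W)$. Neutrality forces $\sigma(B,W)$ to depend only on $(|B\cap W|,|B|)$, giving a scoring function $s(x,y)$ whose monotonicity in $x$ for fixed $y$ follows from the underlying structural properties; this exhibits $f$ as an ABC scoring rule.

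The second stage uses independence of losers to collapse the dependence on $y$. Given a profile $A$ with $W\in f(A)$, I shrink each voter's ballot $A_i$ to any subset $A_i'$ containing $A_i\cap W$; by independence of losers, $W$ still wins in the resulting profile $A'$. Writing out $\hat s(A',W)-\hat s(A',W')\geq 0$ for arbitrary $W'$ and selecting profiles in which the only meaningful change is a single voter whose ballot size shifts from $y$ to $y'$ at fixed intersection size $x$ with $W$, I isolate and equate $s(x,y)$ with $s(x,y')$ on all valid ranges. Normalizing so that $s(0)=0$ then delivers a Thiele scoring function inducing $f$.

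The main obstacle is the Young-style representation in the first stage, since working in the choice model rather than the ranking model makes the separation argument delicate; this is precisely the gap that arose in previous attempts in the literature. Independence of losers is expected to help here by allowing me to purify profiles so that the targeted committee can be forced to become the unique winner, restoring enough pairwise information for the separating-hyperplane argument to proceed; extracting the size-independence of $s$ in the second stage also requires careful profile construction to ensure that the relevant score changes really do isolate a single $(x,y)$ coordinate without inadvertently activating axioms in unintended directions.
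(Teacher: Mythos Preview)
Your forward direction is fine and matches the paper. The backward direction has a genuine gap at precisely the point you flag as ``the main obstacle'' but do not actually resolve.

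The critical step you gloss over is ``combining these functionals yields a representation $f(A)=\arg\max_{W}\sum_{i}\sigma(A_i,W)$.'' The separating-hyperplane argument gives, for each pair $(W^i,W^j)$, a vector $\hat u^{i,j}$ with $\bar R_i^f=\{v: v\hat u^{i,j}\ge 0 \text{ for all } j\}$. For a single scoring function $\sigma$ to exist one needs the additive coherence $\hat u^{i,j}_\ell=\sigma(B(\ell),W^i)-\sigma(B(\ell),W^j)$ across \emph{all} pairs simultaneously. In Young's ranking model, transitivity of the output forces this coherence; in the choice model nothing does, and this is exactly the gap that sank prior attempts. Your proposed remedy---using independence of losers to ``purify profiles so that the targeted committee can be forced to become the unique winner''---corresponds only to establishing non-imposition, which is necessary but nowhere near sufficient for coherence.

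The paper's route is quite different from your two-stage plan. It does \emph{not} first obtain an ABC scoring rule and then use independence of losers to kill the $y$-dependence. Instead, independence of losers is used early, at the hyperplane level, to show that for adjacent committees ($|W^i\setminus W^j|=1$) the entries $\hat u^{i,j}_\ell$ depend only on $|B(\ell)\cap W^i|$ and $|B(\ell)\cap W^j|$ and \emph{not} on $|B(\ell)|$. The hard technical core then remains: for committees at distance $t>1$, one must prove that $\hat u^{i,j}=\delta\sum_{x=1}^t \hat u^{j_{x-1},j_x}$ for some $\delta>0$ along any adjacent-step path $W^{j_0}=W^i,\dots,W^{j_t}=W^j$. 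This is done via a linear-independence/dependence argument (showing the adjacent-step vectors are independent, then that adjoining $\hat u^{i,j}$ forces dependence, then pinning down the coefficients), where the dependence step proceeds by constructing, under the contrary assumption, a vector lying in no $\bar R_i^f$. Your proposal contains no analogue of this argument, and your Stage~2---which presupposes a scoring function $s(x,y)$ already in hand---cannot substitute for it.
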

\begin{proof}[Proof Sketch]
    First, suppose that $f$ is a Thiele rule and let $s(x)$ denote its Thiele scoring function. Clearly, $f$ is anonymous, neutral, consistent, and continuous as all ABC scoring rules satisfy these axioms. So, we will only show that $f$ is independent of losers. For this, consider two profiles $A,A'\in\mathcal{A}^*$ and a committee $W\in f(A)$ such that $N_A=N_{A'}$ and $A_i'\subseteq A_i$ and $W\cap A_i'=W\cap A_i$ for all $i\in N_A$. It holds that $\hat s(A',W)=\hat s(A,W)$ since $W\cap A_i'=W\cap A_i$ for all $i\in N_A$. Moreover, $\hat s(A,W)\geq \hat s(A,W')$ for all $W'\in\mathcal{W}_k$ because $W\in f(A)$. Finally, $\hat s(A, W')\geq \hat s(A',W')$ for all $W'\in\mathcal{W}_k$ as $s(x)$ is non-decreasing and $A_i'\subseteq A_i$ for all $i\in N_A$. By chaining the inequalities, we conclude that $\hat s(A',W)\geq \hat s(A',W')$ for all committees $W'\in\mathcal{W}_k$, so $W\in f(A')$ and $f$ satisfies independence of losers.

    For the other direction, we suppose that $f$ is an ABC voting rule that satisfies all axioms of the theorem and aim to show that $f$ is a Thiele rule. For this, we will use the separating hyperplane theorem for convex sets similar to the works of, e.g., \citet{Youn75a} and \citet{SFS19a}. For this, we note first that, if $f$ is trivial, it is the Thiele rule defined by $s(x)=0$ for all $x$. So, we suppose that $f$ is non-trivial and show that for every committee $W\in\mathcal{W}_k$, there is a profile $A\in\mathcal{A}^*$ such that $f(A)=\{W\}$. To apply the separating hyperplane theorem for convex sets, we next extend $f$ to a function $\hat g$ of the type $\mathbb{Q}^{|\mathcal{A}|}\rightarrow 2^{\mathcal{W}_k}\setminus \{\emptyset\}$ while keeping all its properties intact. We then define the sets $R_i^f=\{v\in\mathbb{Q}^{|\mathcal{A}|}\colon W^i\in \hat g(v)\}$ for all $W^i\in\mathcal{W}_k$ and let $\bar R_i^f$ denote the closure of $R_i^f$ with respect to $\mathbb{R}^{|\mathcal{A}|}$. It follows from the properties of $\hat g$ that the sets $\bar R_i^f$ are convex and have disjoint interiors. The separating hyperplane theorem for convex sets thus shows that there are non-zero vectors $\hat u^{i,j}\in\mathbb{R}^{|\mathcal{A}|}$ such that $v\hat u^{i,j}\geq 0$ if $v\in \bar R_i$ and ${v\hat u^{i,j}\leq 0}$ if $v\in \bar R_j$. Moreover, we will show that $\bar R_i^f=\{v\in\mathbb{R}^{|\mathcal{A}|}\colon \forall W^j\in\mathcal{W}_k\setminus \{W^i\}\colon v\hat u^{i,j}\geq 0\}$, so we study the vectors $\hat u^{i,j}$ next.

    For this, we first infer from neutrality and independence of losers that there is a function $s^1(x,y)$ such that $\hat u^{i,j}_\ell=s^1(|W^i\cap A_\ell|, |W^j\cap A_\ell|)$ for all ballots $A_\ell$ and committees $W^i, W^j$ with $|W^i\setminus W^j|=1$. If $k\in \{1,m-1\}$, this insight is already enough for the proof. By contrast, if $k\in \{2,\dots,m-2\}$, we need to analyze the vectors $\hat u^{i,j}$ for committees $W^i, W^j$ with $|W^i\setminus W^j|=t>1$. To this end, we construct a sequence of committees $W^{j_0},\dots, W^{j_t}$ by replacing the candidates in $W^i\setminus W^j$ one after another with those in $W^j\setminus W^i$. By studying the linear (in)dependence of the vectors $\hat u^{i,j}$ and $\hat u^{j_{x-1}, j_{x}}$ for $x\in \{1,\dots, t\}$, we then show that $\hat u^{i,j}=\delta \sum_{x=1}^{t} \hat u^{j_{x-1}, j_{x}}$ for some $\delta>0$. Based on this insight, we can now define the score function $s$ of $f$: $s(0)=0$ and $s(x)=s(x-1)+s^1(x,x-1)$ for $x\geq 1$. By our previous observations, it follows that $\hat u^{i,j}_\ell=\delta(s(|W^i\cap A_\ell|)-s(|W^j\cap A_\ell|))$, so $\bar R_i^f=\{v\in \mathbb{R}^{|\mathcal{A}|}\colon \forall W^j\in\mathcal{W}_k\colon \hat s(v,W^i)\geq \hat s(v,W^j)\}$. From this, we infer that $\hat g(v)\subseteq \{W^i\in\mathcal{W}_k\colon v\in \bar R_i^f\}=\{W^i\in\mathcal{W}_k\colon \forall W^j\in \mathcal{W}_k\setminus \{W^i\}\colon\hat s(v,W^i)\geq \hat s(v,W^j)\}$ for all $v\in \mathbb{Q}^{|\mathcal{A}|}$. Thus, $f(A)\subseteq \{W^i\in\mathcal{W}_k\colon \forall W^j\in \mathcal{W}_k\setminus \{W^i\}\colon\hat s(A,W^i)\geq \hat s(A,W^j)\}$ and, as the last step, continuity shows $f$ is the Thiele rule induced by $s$.
\end{proof}

\paragraph{Remark 1.} All axioms are required for \Cref{thm:Thiele}. If we omit independence of losers, \texttt{SAV} satisfies all remaining axioms. If we omit continuity, we can refine Thiele rules by applying a second Thiele rule as tie-breaker in case of multiple chosen committees. If we only omit consistency, sequential Thiele rules satisfy all given axioms. These rules compute the winning committees iteratively by always adding the candidate to a winning committee which increases the score the most. If we omit neutrality or anonymity, biased Thiele rules that double the points of every committee that contains a specific candidate or the points assigned by specific voters to the committees satisfy all other axioms.

\subsection{Characterization of BSWAV Rules}\label{subsec:BSWAV}

Next, we discuss the characterization of BSWAV rules: these are the only ABC voting rules that satisfy anonymity, neutrality, consistency, continuity, choice set convexity, and weak efficiency. The central axiom for this characterization (aside of consistency) is choice set convexity as it enforces that candidates can be exchanged between chosen committees.  

\begin{restatable}{theorem}{BSWAV}\label{thm:BSWAV}
    An ABC voting rule is a BSWAV rule if and only if it satisfies anonymity, neutrality, consistency, continuity, choice set convexity, and weak efficiency. 
\end{restatable}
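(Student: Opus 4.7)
The plan is to mirror the separating-hyperplane strategy of Theorem~\ref{thm:Thiele}, but to replace independence of losers by choice set convexity, which---together with weak efficiency---will force the per-ballot scoring function to be \emph{linear in the intersection size} rather than merely non-decreasing in it. This linearity is precisely what distinguishes BSWAV rules from general ABC scoring rules.

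The ``if'' direction is routine. Every BSWAV rule is an ABC scoring rule, so anonymity, neutrality, consistency, and continuity are immediate. Weak efficiency holds because an unapproved candidate $c$ contributes $0$ to every ballot's intersection count, so swapping $c$ for any $c' \notin W$ weakly increases $\hat s(A,W)$. For choice set convexity, I would write the BSWAV score as $\hat s(A,W) = \sum_{c \in W} \sigma(c)$ with $\sigma(c) = \sum_{i : c \in A_i} \alpha_{|A_i|}$: a standard exchange argument shows that $W, W' \in f(A)$ forces every candidate in $W \triangle W'$ to share a common $\sigma$-value, so every committee sandwiched between $W$ and $W'$ attains the same maximal score.

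For the ``only if'' direction, I assume $f$ satisfies all six axioms and is non-trivial (the trivial rule is the BSWAV rule with $\alpha = 0$). Following the proof of Theorem~\ref{thm:Thiele}, I first establish that for every committee $W^i \in \mathcal{W}_k$ there is a profile with $f(A) = \{W^i\}$. I then extend $f$ to a rational-valued map $\hat g : \mathbb{Q}^{|\mathcal{A}|} \to 2^{\mathcal{W}_k} \setminus \{\emptyset\}$ preserving the relevant axioms, introduce the preference cones $R_i^f = \{v : W^i \in \hat g(v)\}$ with closures $\bar R_i^f \subseteq \mathbb{R}^{|\mathcal{A}|}$, and invoke the separating hyperplane theorem to produce non-zero vectors $\hat u^{i,j}$ satisfying $v \hat u^{i,j} \geq 0$ on $\bar R_i^f$ and $v \hat u^{i,j} \leq 0$ on $\bar R_j^f$. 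Anonymity, neutrality, and consistency yield convexity of the cones and disjoint interiors, while neutrality further reduces each coordinate $\hat u^{i,j}_\ell$ to a function of $A_\ell$ and the combinatorial type of $(W^i, W^j)$.

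The main obstacle, and the step that diverges from the Thiele argument, will be exploiting choice set convexity to force $\hat u^{i,j}$ to decompose \emph{linearly} over individual candidates. Concretely, for single-swap pairs $W^i = S \cup \{c\}$ and $W^j = S \cup \{c'\}$, the goal is to show that $\hat u^{i,j}_\ell$ depends only on $A_\ell$ and on the unordered pair $\{c,c'\}$---not on the shared core $S$ nor on $|S \cap A_\ell|$. The lever is that whenever $|W^i \triangle W^j| \ge 4$, choice set convexity applied on the boundary $\bar R_i^f \cap \bar R_j^f$ forces every intermediate committee to also win, which imposes linear dependence relations among the corresponding hyperplane normals. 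Combining these relations with the chaining argument from the proof of Theorem~\ref{thm:Thiele}, I expect to obtain per-candidate, per-ballot scores $\beta_\ell(c)$ such that $\hat u^{i,j}_\ell = \sum_{c \in W^i} \beta_\ell(c) - \sum_{c \in W^j} \beta_\ell(c)$ for every pair $(i,j)$. Neutrality then reduces $\beta_\ell(c)$ to two values per ballot size---one for approved and one for unapproved candidates---and weak efficiency pins the ``unapproved'' value to $0$. Setting $\alpha_y$ equal to the ``approved'' value at ballot size $y$ and passing from $\hat g$ back to $f$ by continuity finally identifies $f$ as the BSWAV rule with weight vector $\alpha$.
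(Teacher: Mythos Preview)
Your proposal is correct and follows essentially the same route as the paper: the separating-hyperplane setup is identical, and your key insight---that choice set convexity forces the single-swap normals $\hat u^{i,j}$ to be independent of the shared core $S$ (equivalently, $s^1_r(x+1,x)$ to be constant in $x$)---is exactly what the paper proves in its central Lemma~\ref{lem:convexEquidistance} before running the same chaining and linear-(in)dependence argument you describe. Two places where the paper invests more care than your sketch suggests are worth flagging: non-imposition is obtained from weak efficiency via a dedicated inductive construction (Lemma~\ref{lem:weakEfficientTriviality}) rather than for free, and the ``boundary'' application of choice set convexity is made rigorous by building explicit profiles $\lambda A^1 + A^2$ whose choice set is exactly $[W^i,W^j]$, rather than by passing to limits in $\bar R_i^f \cap \bar R_j^f$.
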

\begin{proof}[Proof Sketch]
    First, we assume that $f$ is a BSWAV rule and let $\alpha=(\alpha_1, \dots, \alpha_m)$ denote its weight vector. It is simple to verify that $f$ is neutral, anonymous, continuous, and consistent. Moreover, $f$ is weakly efficient as the weights $\alpha_i$ are all non-negative. Finally, we show that $f$ is choice set convex.
    For this, we consider a profile $A$ and two distinct committees $W,W'\in f(A)$. Next, we choose two candidates $a\in W\setminus W'$ and $b\in W'\setminus W$ and let $W'' = (W\setminus \{a\})\cup \{b\}$. The central observation is now that BSWAV scores are additive, i.e., $\hat s(A,W)=\sum_{x\in W} \hat s(A,x)$ for $\hat s(A,x)=\sum_{i\in N_A\colon x\in A_i} \alpha_{|A_i|}$. Since $W\in f(A)$, $0\leq \hat s(A,W)-\hat s(A,W'') = \hat s(A,a)- \hat s(A,b)$. By applying this argument also to $W'$ and $W'''=(W'\setminus \{b\})\cup \{a\}$, we obtain $0\leq \hat s(A,b)- \hat s(A,a)$, so $\hat s(A,a) = \hat s(A,b)$ and $\hat s(A,W)=\hat s(A,W'')$. This proves that $W''\in f(A)$ and by repeating the argument, we infer that $\bar W\in f(A)$ for all $\bar W$ with $W\cap W'\subseteq \bar W\subseteq W\cup W'$.

    For the converse direction, we give again only a rough proof sketch and note that the outline of this proof is very similar to the one of \Cref{thm:Thiele} as mainly the technical details differ. In more detail, we first extend $f$ to a function $\hat g$ on $\mathbb{Q}^{|\mathcal{A}|}$ and then use the same hyperplane argument as for \Cref{thm:Thiele}. Hence, we will again analyze the sets $R_i^f=\{v\in\mathbb{Q}^{|\mathcal{A}|}\colon W^i\in \hat g(v)\}$ and the vectors $\hat u^{i,j}$ with $v\hat u^{i,j}\geq 0$ if $v\in \bar R_{i}^f$ and $v\hat u^{i,j}\leq 0$ if $v\in \bar R_j^f$. In particular, based on choice set convexity, we show for every ballot size $r\in \{1,\dots, m\}$ that there is a constant $\alpha_r\geq 0$ such that $\hat u^{i,j}_\ell=\alpha_r$ for all ballots $A_\ell\in\mathcal{A}$ with $|A_\ell|=r$ and committees $W^i,W^j\in\mathcal{W}_k$ with $|W_i\cap A_\ell|=|W_j\cap A_\ell|+1$. Based on this insight, it is simple to complete the proof if $k\in \{1,m-1\}$. On the other hand, if $k\in \{2,\dots, m-2\}$, we again consider committees $W^i,W^j$ such that $|W^i\setminus W^j|=t>1$. Just as for \Cref{thm:Thiele}, we consider a sequence of committees $W^{j_0},\dots, W^{j_t}$ such that $W^{j_0}=W^i$, $W^{j_t}=W^j$, and $|W^{j_{x-1}}\setminus W^{j_{x}}|=1$ for $x\in \{1,\dots, t\}$, and show that $\hat u^{i,j}=\delta \sum_{x=1}^{t} \hat u^{j_{x-1}, j_x}$ for some $\delta>0$. This implies that $\hat u^{i,j}_\ell=\alpha_r(|W^i\cap A_\ell|-|W^j\cap A_\ell|)$ for all committees $W^i, W^j\in\mathcal{W}_k$ and ballots $A_\ell\in\mathcal{A}$ with $|A_\ell|=r$. Finally, we can now prove that $f$ is the BSWAV rule defined by the score function $s(|W\cap A_\ell|, |A_\ell|)=\alpha_{|A_\ell|} |W\cap A_\ell|$.
\end{proof}

\paragraph{Remark 2.} All axioms are required for \Cref{thm:BSWAV}. For anonymity, neutrality, and continuity, we can define examples similar to the ones given for Thiele rules. When omitting consistency, the ``convex hull'' of Phragm\'en's rule satisfies all remaining axioms and is no BSWAV rule. The rule that elects the $k$ candidates with minimal approval scores satisfies all given axioms but weak efficiency. Finally, every Thiele rule other than \texttt{AV} only fails choice set convexity.

\paragraph{Remark 3.} \texttt{AV} is the only non-trivial ABC voting rule that is both a BSWAV rule and a Thiele rule if $k\leq m-2$. \Cref{thm:Thiele,thm:BSWAV} thus characterize \texttt{AV} as the only non-trivial ABC voting rule that is anonymous, neutral, continuous, consistent, independent of losers, and choice set convex if $k\le m-2$.

\paragraph{Remark 4.}
   We define ABC voting rules for a fixed committee size $k$, but in the literature $k$ is often considered as part of the input. For such rules, \Cref{thm:Thiele,thm:BSWAV} imply that for every $k\in \{1,\dots, m-1\}$, $f(A,k)$ is a Thiele rule or a BSWAV rule, respectively, if it satisfies the required axioms. However, our conditions do not enforce consistency with respect to the committee size, so we can, e.g., use \texttt{AV} for $k=2$ and \texttt{PAV} for $k=3$. It is not difficult to exclude such rules. For instance, the well-known axiom of committee monotonicity \citep[][]{EFSS17a} entails for every BSWAV rule that it must use the same weight vector for every committee size $k$. Similar, committee separability, an axiom introduced by \citet{DoLe22a}, can be used to enforce that non-imposing Thiele rules use the same Thiele scoring function for every committee size. Thus, our results can be easily extended to the setting where the committee size is part of the input.

\subsection{Characterizations of \texttt{AV}, \texttt{PAV}, and \texttt{SAV}}\label{sec:specific}

Finally, we demonstrate in this section how \Cref{thm:Thiele,thm:BSWAV} can be used to characterize specific ABC voting rules. To this end, we first note that there are numerous characterizations of ABC voting rules within the class of Thiele rules in the literature, and \Cref{thm:Thiele} can typically be used to extend these results to full characterizations. For instance, \citet{LaSk18a} characterize \texttt{AV} among the class of Thiele rules based on a strategyproofness notion and it is easy to extend this result to a full characterization of \text{AV} based on \Cref{thm:Thiele}. Similar claims are true for, e.g., the characterization of \texttt{AV} based on committee monotonicity \citep[][]{Jans16a}, the characterization of \texttt{PAV} based on D'Hondt proportionality \citep[e.g.,][]{BLS18a}, or characterizations of \texttt{CCAV} \citep[e.g.,][]{DDE+22a}. In this paper, we will, however, give characterizations of three ABC scoring rules (namely \texttt{AV}, \texttt{PAV}, and \texttt{SAV}) that are largely independent of the literature. The reason for this is that our technique seems rather universal and may thus also be used to characterize further Thiele rules or BSWAV rules. Finally, we will state our results only within the class of Thiele rules and BSWAV rules, respectively; \Cref{thm:Thiele,thm:BSWAV} then generalize these results to full characterizations.

In more detail, for all three results in this subsection, we study axioms defined for special profiles. To this end, we say a profile $A\in\mathcal{A}^*$ is a \emph{party-list profile} if there is a partition $\mathcal{P}_A=\{P_1,\dots,P_\ell\}$ of the candidates such that every voter approves all candidates in one set $P_j$, i.e., for every voter $i\in N_A$, there is a set $P_j\in\mathcal{P}_A$ such that $A_i=P_j$. Less formally, in a party-list profile, the candidates are grouped into disjoint parties and every voter supports a single party by approving all of its members. We denote by $n_j$ the number of voters who support party $P_j$ in a party-list profile $A$. For these profiles, we will investigate the question when a voting rule elects all members of a party. The reason for this design choice is twofold: firstly, this will lead to rather mild axioms which makes our characterizations only stronger. Secondly, on party-list profiles, BSWAV rules typically elect one party after another by first electing all members of the first party, then electing all members of the second party, and so on. Hence, axioms describing when all candidates of a party are elected are well-suited for characterizing these rules.

Clearly, any justification for when all members of a party should be chosen needs to consider the purpose of the election. For instance, if the goal of an election is to find the best $k$ candidates only based on their individual quality (a setting known as excellence-based elections), the main criterion for deciding whether to choose a candidate is the number of voters supporting it. Hence, if a party $P_i$ is approved by more voters than another party $P_j$, then every candidate in $P_i$ seems better than every candidate in $P_j$. Thus, if all candidates of party $P_j$ are chosen, all candidates of party $P_i$ should also be chosen. We formalize this idea as follows: An ABC voting rule $f$ satisfies the \emph{excellence criterion} if for all party-list profiles $A$, committees $W\in f(A)$, and parties $P_i,P_j\in\mathcal{P}_A$ with $n_i<n_j$, it holds that $P_i\subseteq W$ implies that $P_j\subseteq W$. As we show next, this condition characterizes \texttt{AV} among Thiele rules.

\begin{proposition}\label{prop:AV}
    \texttt{AV} is the only Thiele rule that satisfies the excellence criterion.
\end{proposition}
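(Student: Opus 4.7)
The plan is to prove both directions of the equivalence. The forward direction, that \texttt{AV} satisfies the excellence criterion, is a direct calculation: on any party-list profile $A$ the \texttt{AV}-score of candidate $c \in P_j$ equals $n_j$. If a winning committee $W$ contained $P_i$ but not all of $P_j$ for some $n_i < n_j$, then swapping any $c \in P_i \cap W$ with any $c' \in P_j \setminus W$ would strictly increase the total score by $n_j - n_i > 0$, contradicting $W \in f(A)$.

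For the converse, let $f$ be a Thiele rule with scoring function $s$ and set $\Delta(i) := s(i) - s(i-1)$. The goal is to show $\Delta(i) = s(1)$ for every $i \in \{2, \ldots, k\}$, which forces $s(x) = x \cdot s(1)$ on $\{0, \ldots, k\}$. Combined with ruling out the case $s \equiv 0$---the trivial rule accepts every committee and hence fails excellence on any party-list profile with two single-candidate parties of distinct voter counts---this yields $s(1) > 0$, so $s$ is a positive scalar multiple of $s_{\texttt{AV}}$, and therefore $f = \texttt{AV}$.

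I would prove $\Delta(i) = s(1)$ by separately deriving $\Delta(i) \leq s(1)$ and $\Delta(i) \geq s(1)$ via contraposition. Assume $\Delta(i) > s(1)$ and construct a party-list profile containing a party $P_1$ of size $i$ supported by $n_1$ voters, a party $P_2$ of size $1$ supported by $n_2$ voters with $n_1 < n_2$, additional single-candidate ``filler'' parties $P_3, \ldots, P_{k-i+2}$ each supported by $n_2$ voters, and any remaining candidates placed in a zero-voter party; this construction is feasible because $m \geq k+1$. Choose $n_1 < n_2$ with $n_1 \Delta(i) \geq n_2 s(1)$, which is possible precisely because $\Delta(i) > s(1)$. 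Then I would argue that $W^* = P_1 \cup \bigcup_{j \geq 3} P_j$ is a winner: swapping a member of $P_1$ for $P_2$'s candidate changes the score by $-n_1 \Delta(i) + n_2 s(1) \leq 0$, and swapping any filler's candidate for $P_2$'s candidate is score-neutral because all non-$P_1$ parties carry the same weight $n_2$. Since $W^*$ contains $P_1$ but not $P_2$ while $n_1 < n_2$, excellence is violated. The symmetric case $\Delta(i) < s(1)$ uses $|P_1| = 1$ and $|P_2| = i$, with the inequality $n_1 s(1) \geq n_2 \Delta(i)$.

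The main obstacle is certifying that $W^*$ is truly a winner against every competing committee, not merely against the single most obvious alternative. The filler parties with matching weight $n_2$ are essential for this: they ensure that every committee which includes $P_2$ by dropping a filler is precisely tied with $W^*$ rather than strictly superior to it. Without the fillers, $W^*$ could always be improved by substituting $c_{P_2}$ for a zero-weight dummy candidate, and the marginal inequality alone would not suffice to make $W^*$ a global maximizer.
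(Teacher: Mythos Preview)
Your proposal is correct and follows the same overall strategy as the paper---construct party-list profiles with one multi-candidate party and singletons, then show that any deviation of the Thiele scoring function from linearity forces a violation of the excellence criterion. The difference is that the paper proceeds by minimal counterexample: it takes the least $\ell$ with $s(\ell)\neq\ell$, assumes $s(x)=x$ for all $x<\ell$, and then compares the committee containing all of $P_1$ against every committee with $|W'\cap P_1|=\ell'<\ell$, which requires the inductive hypothesis to evaluate $s(\ell')$. You instead treat each marginal $\Delta(i)$ independently by choosing exactly $k-i$ filler singletons (plus $P_2$), so that among committees avoiding zero-voter candidates the intersection $|W\cap P_1|$ is pinned to $\{i-1,i\}$ and only the single comparison $n_1\Delta(i)$ versus $n_2 s(1)$ is relevant. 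This sidesteps induction entirely and is arguably cleaner. To turn the sketch into a full proof you should make the global-optimality step explicit: first argue that any committee containing a zero-voter candidate is weakly dominated (swap it for an unused singleton or an unused $P_1$-member, using only that $s$ is non-decreasing), and then note that the counting constraint $a+b=k$ with $b\leq k-i+1$ forces $a\geq i-1$, so the only non-dominated alternatives to $W^*$ have $|W\cap P_1|=i-1$, which your marginal inequality handles.
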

\begin{proof}
    Clearly, \texttt{AV} satisfies the excellence criterion and we thus focus on the converse direction. For this, let $f$ denote a Thiele rule that satisfies the excellence criterion and let $s$ denote its Thiele scoring function. Our first goal is to show that $s(1)>0$ and we consider for this the party-list profile $A$ in which $2$ voters approve $P_1=\{c_1\}$ and $1$ voter approves $P_2=\{c_2,\dots, c_{k+1}\}$. Now, if $s(1)=0$, then $P_2\in f(A)$ as $s$ is non-decreasing. This, however, violates the excellence criterion as there is a winning committee that contains all members of $P_2$ but none of $P_1$, even though $n_1>n_2$. Hence, $s(1)>0$ and we subsequently suppose that $s(1)=1$ as Thiele rules are invariant under scaling the scoring function.

    Next, we assume for contradiction that there is an index $\ell\in \{2,\dots, k\}$ such that $s(\ell)\neq \ell$ and $s(x)=x$ for all $x<\ell$. Moreover, we define $\Delta=|s(\ell)-\ell|\neq 0$ and let $t\in\mathbb{N}$ such that $t\geq 2$ and $t\Delta>k$. We now use a case distinction with respect to $s(\ell)$ and first suppose that $s(\ell)=\ell+\Delta$. In this case, consider the party-list profile $A$ where $t$ voters approve $P_1=\{c_1,\dots, c_\ell\}$ and each other candidate $c\in \mathcal{C}\setminus P_1$ is uniquely approved by $t+1$ voters. It is easy to verify that every committee $W$ with $P_1\subseteq W$ has a score of $\hat s(A,W)=ts(\ell)+(k-\ell)(t+1)=t\Delta+t\ell +(k-\ell)(t+1)>k+tk$. By contrast, every committee $W'$ with $\ell'=|P_1\cap W'|<\ell$ has a score of $\hat s(A,W')=t s(\ell')+(k-\ell') (t+1)=t\ell'+(k-\ell')(t+1)\leq tk+k$. Thus, $f(A)=\{W\in\mathcal{W}_k\colon P_1\subseteq W\}$. However, this contradicts the excellence criterion since $P_1\subseteq W$ for every $W\in f(A)$ and there is a party $P_j=\{c\}$ with $c\not\in W$ and $n_j>n_1$. 
    
    For the second case, we suppose that $s(\ell)=\ell-\Delta$ and consider the profile $A$ in which $t$ voters approve the party $P_1=\{c_1,\dots, c_\ell\}$ and each candidate $c\in \mathcal{C}\setminus P_1$ is uniquely approved by $t-1$ voters. We compute again the scores of committees $W\in\mathcal{W}_k$: if $P_1\subseteq W$, then $\hat s(A,W)=t s(\ell)+(k-\ell)(t-1)=t\ell-t\Delta+(k-\ell)(t-1)<tk-k$, and if $|W'\cap P_1|=\ell-1$, then $\hat s(A,W')=t s(\ell-1)+(k-\ell+1)(t-1)=t(\ell-1)+(k-\ell+1)(t-1)\geq kt-k$. Hence, $P_1\not\subseteq W$ for all $W\in f(A)$. However, this violates the excellence criterion since for every $W\in f(A)$, there is a party $P_j=\{c\}$ with $P_j\subseteq W$ and $n_j<n_1$. We thus have a contradiction in both cases, so $s(\ell)= \ell$ and $f$ is \texttt{AV}.
\end{proof}

Another frequent goal in committee elections is proportional representation: the chosen committee should proportionally represent the voters' preferences. To this end, we note that if a party $P_i$ with $n_i$ votes gets $x_i$ seats in the chosen committee, then each of the elected candidates in $P_i$ represents on average $\nicefrac{n_i}{x_i}$ voters. Hence, if $\nicefrac{n_i}{x_i}<\nicefrac{n_j}{x_j+1}$, then reassigning one seat from party $P_i$ to party $P_j$ intuitively results in a more representative outcome. We will formalize this intuition with a new proportionality notion since we aim to show that \texttt{SAV} is more proportional than \texttt{AV}, but \texttt{SAV} violates all commonly considered proportionality axioms. In more detail, we say that an ABC voting rule $f$ is \emph{party-proportional} if for all party-list profiles $A$, committees $W\in f(A)$, and parties $P_i, P_j\in\mathcal{P}_A$ with $\nicefrac{n_i}{|P_i|}<\nicefrac{n_j}{|P_j|}$, it holds that $P_i\subseteq W$ implies $P_j\subseteq W$. Intuitively, this axiom states that we can only choose all members of a party if there is no party that represents on average more voters and is not fully chosen yet. Hence, this axiom combines the idea of proportionality with the native behavior of BSWAV rules. Even though party-proportionality is a rather weak axiom as it is, e.g., implied by D'Hondt proportionality \citep{LaSk21a}, we show next that this condition characterizes \texttt{PAV} within the class of Thiele rules. This demonstrates that our new axiom is indeed a reasonable and non-trivial proportionality notion. 

\begin{restatable}{proposition}{PAV}\label{prop:PAV}
    \texttt{PAV} is the only Thiele rule that satisfies party-proportionality. 
\end{restatable}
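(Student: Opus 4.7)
The plan is to prove both directions by reducing everything to a single marginal quantity: for a Thiele scoring function $s$, define $d_\ell := s(\ell) - s(\ell - 1)$ for $\ell \in \{1, \ldots, k\}$. After normalizing $s(1) = 1$, party-proportionality will force $d_\ell = 1/\ell$ by induction, yielding $s(\ell) = H_\ell = s_{\texttt{PAV}}(\ell)$, so that $f = \texttt{PAV}$.

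For the forward direction, I would verify directly that \texttt{PAV} is party-proportional. On a party-list profile $A$ the \texttt{PAV}-score decomposes as $\sum_j n_j H_{|W \cap P_j|}$, and if $P_i \subseteq W$, $P_j \not\subseteq W$, and $n_i/|P_i| < n_j/|P_j|$, then swapping any $a \in P_i$ out of $W$ for any $b \in P_j \setminus W$ changes the score by $-n_i/|P_i| + n_j/(|W \cap P_j|+1)$, which is strictly positive since $|W \cap P_j| + 1 \le |P_j|$ gives $n_j/(|W \cap P_j|+1) \ge n_j/|P_j| > n_i/|P_i|$. This contradicts $W \in f(A)$.

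For the converse, let $f$ be a Thiele rule with scoring function $s$ satisfying party-proportionality. First I would show $s(1) > 0$: otherwise, on a profile of $k$ singleton parties with distinct voter counts and one zero-voter party absorbing all remaining candidates, every committee has score $0$, so one can select a winning $W$ containing a low-vote singleton but missing a high-vote one, contradicting party-proportionality. Normalize so that $s(1) = 1$, which gives $d_1 = 1$. I would then induct on $\ell \in \{2, \ldots, k\}$, assuming $s(x) = H_x$ for $x < \ell$ and showing $d_\ell = 1/\ell$.

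The inductive step splits into two symmetric cases. If $d_\ell > 1/\ell$, choose integers $n_1, n_2$ with $n_1/\ell < n_2 < n_1 d_\ell$ and form a party-list profile containing a party $P_1$ of size $\ell$ with $n_1$ voters, a singleton $P_2$ with $n_2$ voters, $k-\ell$ further singletons each with $n^* \gg n_1, n_2$ voters, and a zero-voter party absorbing the remaining candidates. Large $n^*$ forces the high-vote singletons into every winning committee, and comparing the two remaining candidate configurations, the one containing all of $P_1$ wins because $n_1 d_\ell > n_2$; but this winner has $P_1 \subseteq W$, $P_2 \not\subseteq W$ with $n_2 > n_1/\ell$, violating party-proportionality. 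Symmetrically, if $d_\ell < 1/\ell$, pick an integer $n_0 \in (n_1 d_\ell, n_1/\ell)$ and use the same skeleton, with the extra singleton now playing the role of an under-represented party $P_0$; the analogous winning committees include $P_0$ but drop a member of $P_1$, again violating party-proportionality since $n_0 < n_1/\ell$. Both cases close the induction. The main (though routine) obstacle is ensuring integer voter counts lie strictly inside these open intervals, which is achieved by scaling $n_1$ large enough that $n_1 |d_\ell - 1/\ell| > 1$; modulo this, all arguments reduce to computing Thiele scores of two explicit committees.
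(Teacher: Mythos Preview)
Your proof is correct and follows essentially the same strategy as the paper's: the forward direction via a swap argument, and the converse by induction on $\ell$, constructing party-list profiles that force a party-proportionality violation whenever the scoring function deviates from $H_\ell$. The paper works directly with $s(\ell)$ versus $H_\ell$ and uses a uniform construction (every candidate outside $P_1$ is a singleton party with the same voter count $t\pm 1$), whereas you isolate the marginal $d_\ell = s(\ell) - s(\ell-1)$ and use a two-tier construction (high-vote ``anchor'' singletons forced into every winning committee, plus one test singleton). Both routes are valid; yours makes the reduction to a two-committee comparison more transparent and only needs monotonicity of $s$ (rather than the full inductive hypothesis) to rule out the remaining committees, while the paper's avoids the auxiliary large parameter $n^*$.

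One small technicality: your argument for $s(1) > 0$ uses ``$k$ singleton parties with distinct voter counts,'' but for $k = 1$ this yields only a single positive-vote singleton, so there is no low-vote/high-vote pair to compare. The fix is trivial---use at least two singletons regardless of $k$, as the paper effectively does with its two-party profile---but as written the case $k=1$ is not covered.
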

\begin{proof}[Proof Sketch]
    First, we show that $\texttt{PAV}$ is party-prop\-or\-tio\-nal. To this end, let $A$ denote a party-list profile, consider two parties $P_i$, $P_j\in\mathcal{P}_A$ with $\frac{n_i}{|P_i|}<\frac{n_j}{|P_j|}$, and suppose for contradiction that there is a committee $W\in\texttt{PAV}(A)$ such that $P_i\subseteq W$, $P_j\not\subseteq W$. In this case, exchanging a candidate $x\in W\cap P_i$ with a candidate $y\in P_j\setminus W$ leads to a committee $W'$ with higher \texttt{PAV}-score than $W$, which contradicts that $W\in\texttt{PAV}(A)$. Thus, \texttt{PAV} is party-proportional. For the other direction, we proceed similarly to the proof of \Cref{prop:AV} and let $f$ denote a Thiele rule that is party-proportional and $s$ its Thiele scoring function. First, we show that $s(1)>0$ by the same construction as in the proof of \Cref{prop:AV} and rescale $s$ such that $s(1)=1$. Then, we construct two profiles showing that $f$ fails party-proportionality if $s(\ell)\neq \sum_{x=1}^\ell \nicefrac{1}{x}$ for some $\ell\in \{2,\dots, k\}$. So, $f$ is indeed \texttt{PAV}. 
\end{proof}

It is easy to see that \texttt{SAV} satisfies---in contrast to \texttt{AV}---party-pro\-por\-tiona\-li\-ty, so \texttt{SAV} is more proportional than \texttt{AV}. Even more, party-proportionality characterizes \texttt{SAV} within the class of BSWAV rules when only allowing voters to approve at most $k$ candidates. However, if there is a party $P_i$ with $|P_i|>k$, this is no longer true as not all member of such parties can be elected. We thus introduce another axiom to characterize \texttt{SAV}: an ABC voting rule $f$ satisfies \emph{aversion to unanimous committees} if for all party-list profiles $A$ and parties $P_i\in\mathcal{P}_A$, it holds that $W\subseteq P_i$ for all $W\in f(A)$ implies that $\frac{n_i}{|P_i|}>n_j$ for all other parties $P_j\in\mathcal{P}_A$ with $|P_j|=1$. Intuitively, this axiom is a mild diversity criterion which requires that a single party can only get all seats in the chosen committee if it is approved by a sufficient number of voters when compared to singleton parties. We next characterize \texttt{SAV} based on this this axiom and party-proportionality.

\begin{restatable}{proposition}{SAV}\label{prop:SAV}
    \texttt{SAV} is the only BSWAV rule that satisfies party-proportionality and aversion to unanimous committees.
\end{restatable}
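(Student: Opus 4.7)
In \texttt{SAV}, every candidate $c\in P_i$ contributes individual score $n_i/|P_i|$, so committee scores are additive over candidates. If some winner $W\in\texttt{SAV}(A)$ contained $P_i$ yet missed $y\in P_j$ with $n_i/|P_i|<n_j/|P_j|$, swapping any $x\in W\cap P_i$ for $y$ would strictly raise the SAV-score, contradicting the optimality of $W$; hence party-proportionality holds. Aversion is immediate: $W\subseteq P_i$ for every $W\in\texttt{SAV}(A)$ forces each candidate of $P_i$ to strictly outscore every outsider, yielding $n_i/|P_i|>n_j$ for every singleton party~$P_j$.

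\textbf{Reverse direction, setup and case $s\ge k$.} Let $f$ be a BSWAV rule with weight vector $\alpha=(\alpha_1,\dots,\alpha_m)$; on a party-list profile, candidate $c\in P_i$ has individual score $n_i\alpha_{|P_i|}$, and winners are the $k$-subsets maximising this score. The goal is to show $s\alpha_s=\alpha_1$ for every $s\in\{1,\dots,m\}$ with $\alpha_1>0$. For $s\ge k$, consider the profile with one party $P$ of size $s$ (with $n$ voters), one singleton $\{c\}$ (with $n'$ voters), and empty singletons for the remaining candidates; note that $|P|+1>k$, so no winner contains both. When $s=k$, the orientation $(P,\{c\})$ of party-proportionality forces $P\not\subseteq W$ for every winner whenever $n<n'k$, which for a BSWAV rule is equivalent to $n'\alpha_1>n\alpha_k$; the other orientation symmetrically yields $n\alpha_k>n'\alpha_1$ whenever $n>n'k$. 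Taking $n=n'k\mp 1$ and letting $n'\to\infty$ sandwiches $\alpha_k$ to $\alpha_1/k$. When $s>k$, the constraint $|P|>k$ already excludes $P\subseteq W$, so only the $(\{c\},P)$-orientation yields information and gives $s\alpha_s\ge\alpha_1$ by the same limit; the matching upper bound comes from aversion to unanimous committees, since $W\subseteq P$ for every winner (equivalent to $n\alpha_s>n'\alpha_1$) forces $n/s>n'$, and setting $n=n's$ yields $s\alpha_s\le\alpha_1$.

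\textbf{Case $s<k$ and conclusion.} For $s<k$ with $s+k\le m$, apply the same two-party argument to sizes $(s,k)$: since $s+k>k$, no winner contains both parties, and party-proportionality in both orientations establishes that $n_1\alpha_s<n_2\alpha_k$ if and only if $n_1 k<n_2 s$ for positive integers $(n_1,n_2)$; rational density then forces $s\alpha_s=k\alpha_k=\alpha_1$. The only remaining sizes are $s\in\{m-k+1,\dots,k-1\}$, which can arise only when $k>m/2$; for these, iterate the two-party argument with sizes $(s,s')$ where $s'\in[k-s+1,m-s]$ is chosen from the already-established set—first $s'\le m-k$ (valid when $s\ge 2k-m+1$) and then progressively widening—so that every size is covered after finitely many iterations. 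Lastly, $\alpha_1>0$, since otherwise $\alpha\equiv0$ makes $f$ trivial and violates party-proportionality on any party-list profile with two singleton parties of different voter counts. Rescaling so that $\alpha_1=1$ identifies $f$ with \texttt{SAV}. The main obstacle is covering the intermediate range $\{m-k+1,\dots,k-1\}$: no single two-party construction suffices there, and the iterative (or, equivalently, three-party) bookkeeping is the most technical piece; the remaining steps are comparatively routine.
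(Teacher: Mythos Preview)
Your forward direction is fine, and your strategy of pinning down each $\alpha_s$ via carefully chosen party-list profiles is sound in spirit. The gap is in the iterative two-party step for the intermediate range $s\in\{m-k+1,\dots,k-1\}$: the iteration you describe need not terminate.

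Concretely, take $k=m-1$ (with $m\ge 4$). After your direct arguments, the established sizes below $k$ consist only of $\{1\}$ (since $s+k\le m$ forces $s\le 1$), so the uncovered range is $\{2,\dots,m-2\}$. Your iteration requires some $s'$ in the established set with $s+s'>k=m-1$ and $s+s'\le m$; together these force $s'=m-s$, which for every $s\in\{2,\dots,m-2\}$ lies again in the uncovered range. Hence no new size ever gets established. Using $s'\ge k$ does not help either: for $s'=k=m-1$ one needs $s+s'\le m$, i.e.\ $s\le 1$; and for $s'>k$ the orientation $(P_2,P_1)$ of party-proportionality is vacuous while aversion gives nothing (the only singleton parties present are empty, so the conclusion $n_2/s'>0$ is trivial). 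Thus with only two non-empty parties you obtain at most the one-sided bound $s\alpha_s\le \alpha_1$, never the matching lower bound. The smallest instance is $m=4$, $k=3$, $s=2$: no two-non-empty-party profile constrains $\alpha_2$ against $\alpha_1$ in both directions.

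The paper avoids this entirely by using one party $P_1$ of size $\ell$ together with \emph{non-empty} singletons for all remaining candidates (with voter counts $t\pm 1$ versus $\ell t$ for $P_1$). Because there are many singleton parties, party-proportionality always bites against one of them when $\ell\le k$, and aversion bites when $\ell>k$; this handles every $\ell\in\{2,\dots,m-1\}$ uniformly without any iteration. Your parenthetical ``three-party'' hint points in this direction, but the actual argument you wrote down---iterated two-party profiles with empty extras---does not suffice. (A minor additional point: your $s\ge k$ construction does not exist for $s=m$; this is harmless since $\alpha_m$ does not affect the rule, but it should be said.)
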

\begin{proof}[Proof Sketch]
    First, it follows immediately from the definition of \texttt{SAV} that it satisfies party-proportionality and aversion to unanimous committees. For the other direction, we consider a BSWAV rule $f$ that satisfies the given axioms and let $\alpha\in\mathbb{R}^{m}_{\geq 0}$ denote its weight vector. From here on, the proof proceeds again just as the one of \Cref{prop:AV}: we first show that $\alpha_1>0$, rescale such that $\alpha_1=1$, and then use a similar construction to infer that $\alpha_\ell=\frac{1}{\ell}$ for all $\ell\in \{1,\dots,m\}$.
\end{proof}

\paragraph{Remark 5.} 
We note that \texttt{PAV} fails aversion to unanimous committees as the ratio is chosen too restrictive: there are party-list profiles $A$ with a party $P_i$ such that $W\subseteq P_i$ for all $W\in\texttt{PAV}(A)$ even though $n_j>\nicefrac{n_i}{|P_i|}$ for a singleton party $P_j$. However, for all such profiles, it holds that $\nicefrac{n_i}{k}>n_j$. In the context of proportional representation, this bound seems more reasonable as it states that each elected member of $P_i$ represents more voters than the single member of $P_j$. Interestingly, party-proportionality together with a variant of this condition (for all party-list profiles $A$ and parties $P_i$, it holds that $W\subseteq P_i$ for all $W\in f(A)$ if and only if $n_j<\nicefrac{n_i}{k}$ for all singleton parties $P_j$) characterize the BSWAV rule defined by the weight vector $\alpha_\ell=\max(\nicefrac{1}{\ell},\nicefrac{1}{k})$ for all $\ell$. This rule is known as modified satisfaction approval voting \citep{KiMa12a} and this observation shows that it might be more desirable than \texttt{SAV}.

\section{Conclusion}

In this paper, we axiomatically characterize two important classes of approval-based committee (ABC) voting rules, namely Thiele rules and BSWAV rules. Thiele rules choose the committees that maximize the total score according to a score function that only depends on the intersection size of the considered committee and the ballots of the voters. On the other hand, BSWAV rules are a new generalization of multi-winner approval voting which weight voters depending on the size of their ballot. For both of our characterizations, the central axiom is consistency which has famously been used by \citet{Youn75a} for a characterization of single-winner scoring rules or by \citet{LaSk21a} for a characterization of ABC scoring rules in the context of committee ranking rules. In particular, our results allow for simple characterizations of all important ABC scoring rules as all such rules belong to one of our classes. We also demonstrate this point by characterizing the well-known ABC voting rules \texttt{AV}, \texttt{SAV}, and \texttt{PAV}. In particular, the result for \texttt{SAV} is, to the best of our knowledge, the first full characterization of this rule. \Cref{fig:overview} shows a more detailed overview of our results. 

Our paper offers several directions for future work. Firstly, our main results allow, of course, to characterize further ABC scoring rules. Secondly, characterizations of many important ABC voting rules (e.g., Phragm\'en's rule and the method of equal shares) are still missing and some of our ideas might be helpful to derive such results. Finally, even though all relevant ABC scoring rules belong to one of our classes, we would find a full characterization of the set of ABC scoring rules still interesting.

\section*{Acknowledgements}
We thank Felix Brandt and Dominik Peters for helpful feedback. This work was supported by the Deutsche Forschungsgemeinschaft under grants \mbox{BR 2312/11-2} and \mbox{BR 2312/12-1}.

\newpage
\newpage
\appendix

\section{Omitted Proofs from Section 3}

In this appendix, we will prove \Cref{thm:Thiele,thm:BSWAV}. Since the proofs of these theorems are rather involved, we organize them in subsections: first, we discuss the hyperplane argument for both theorems jointly in \Cref{app:hyperplane}, then consider the (sligthly simpler) proof of \Cref{thm:BSWAV} in \Cref{app:BSWAV}, and finally present the proof of \Cref{thm:Thiele} in \Cref{app:Thiele}.

Moreover, for our proofs, we use additional notation. In particular, we suppose that there is a bijection $B:\{1,\dots,|\mathcal{A}|\}\rightarrow \mathcal{A}$ that enumerates all our ballots. This function allows us to represent profiles $A$ by vectors $v$ such that the $\ell$-th entry of $v$ states how often the ballot $B(\ell)$ is reported in $A$. When specifying the vector of a specific profile $A$, we typically write $v(A)$, but we also consider arbitrary vectors $v\in\mathbb{R}^{|\mathcal{A}|}$ which usually have the same interpretation. Finally, we define the permutation of vectors as follows: $\tau(v)_{\ell_1}=v_{\ell_2}$ for all permutations $\tau:\mathcal{C}\rightarrow\mathcal{C}$, vectors $v$, and indices $\ell_1,\ell_2\in \{1,\dots,|\mathcal{A}|\}$ such that $B(\ell_1)=\tau(B(\ell_2))$. Put differently, if there are $v_2$ ballots of type $B(\ell_2)$ in $v$, then there are $\tau(v)_{\ell_1}=v_{\ell_2}$ ballots of type $B(\ell_1)=\tau(B(\ell_2))$ in $\tau(v)$.

\subsection{The Hyperplane Argument}\label{app:hyperplane}

In this subsection, we will show the hyperplane argument sketched in the proof of \Cref{thm:Thiele} and additionally investigate some of its consequences. Note that our subsequent arguments do not rely on independence of losers, weak efficiency, or choice set convexity and thus form the basis of the proofs of both \Cref{thm:BSWAV,thm:Thiele}. However, we will focus in this section on \emph{non-imposing} ABC voting rules, which requires that for every committee $W\in\mathcal{W}^k$, there is a profile $A\in\mathcal{A}^*$ such that $f(A)=\{W\}$.

Now, let $f$ denote an anonymous, neutral, consistent, and non-imposing ABC voting rule. As first step, we aim to extend the domain of $f$ from approval profiles to $\mathbb{Q}^{|\mathcal{A}|}$. To this end, we recall that, since $f$ is anonymous, there is a function $g:\mathbb{N}^{|\mathcal{A}|}\rightarrow \mathcal{W}_k$ such that $f(A)=g(v(A))$ for all profiles $A\in\mathcal{A}^*$. We show next how to extend this function to $\mathbb{Q}^{|\mathcal{A}|}$ while preserving its desirable properties.

\begin{restatable}{lemma}{domain}\label{lem:domain}
    Let $f$ denote a non-imposing ABC voting rule that satisfies anonymity, neutrality, and consistency. There is a function $\hat g:\mathbb{Q}^{|\mathcal{A}|}\rightarrow 2^{\mathcal{W}_k}\setminus \{\emptyset\}$ that satisfies neutrality, consistency, and $\hat g(v(A))=f(A)$ for all $A\in\mathcal{A}^*$. 
\end{restatable}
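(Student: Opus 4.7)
The plan is to extend $f$, viewed via anonymity as a function $g : \mathbb{N}^{|\mathcal{A}|} \setminus \{0\} \to 2^{\mathcal{W}_k} \setminus \{\emptyset\}$ with $f(A) = g(v(A))$, to $\hat g : \mathbb{Q}^{|\mathcal{A}|} \to 2^{\mathcal{W}_k} \setminus \{\emptyset\}$. The construction proceeds in two stages, mirroring the template used by \citet{Youn75a} for single-winner scoring rules.

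In Stage 1, I extend $g$ to non-negative rational vectors. Applying consistency inductively to copies of the same profile yields $g(\lambda v) = g(v)$ for every $v \in \mathbb{N}^{|\mathcal{A}|}$ and $\lambda \in \mathbb{N}$. Hence, for each nonzero $v \in \mathbb{Q}^{|\mathcal{A}|}_{\geq 0}$, I define $\hat g(v) := g(\lambda v)$ for any $\lambda \in \mathbb{N}$ clearing the denominators of $v$; well-definedness follows by comparing two choices $\lambda_1, \lambda_2$ through their common multiple $\lambda_1 \lambda_2$. I separately set $\hat g(0) := \mathcal{W}_k$. Consistency on this restricted domain is immediate: given $v, v' \in \mathbb{Q}^{|\mathcal{A}|}_{\geq 0}$, passing to a common integer denominator reduces the identity $\hat g(v + v') = \hat g(v) \cap \hat g(v')$ directly to consistency of $g$.

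In Stage 2, I extend to vectors with negative components using a cancellation argument. For $v \in \mathbb{Q}^{|\mathcal{A}|}$, write $v = v^+ - v^-$ with $v^\pm \in \mathbb{Q}^{|\mathcal{A}|}_{\geq 0}$ having disjoint supports. The desired consistency identity $\hat g(v^+) = \hat g(v + v^-) = \hat g(v) \cap \hat g(v^-)$ (when the intersection is non-empty) constrains $\hat g(v)$: in the ``compatible'' case $\hat g(v^+) \subseteq \hat g(v^-)$, the natural choice $\hat g(v) := \hat g(v^+)$ is forced; otherwise the consistency equation can only hold vacuously, so one takes $\hat g(v)$ to be a non-empty subset of $\mathcal{W}_k \setminus \hat g(v^-)$, whose existence in this case follows from $\hat g(v^+) \not\subseteq \hat g(v^-)$. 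Non-impositionality supplies the singleton witnesses needed to ensure the freedom in this choice is exercisable consistently.

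The main obstacle is to verify that the resulting $\hat g$ satisfies consistency for arbitrary pairs $v, v' \in \mathbb{Q}^{|\mathcal{A}|}$, not merely on the non-negative cone. Since scaling does not eliminate negative components, one cannot directly reduce to consistency of $g$ on $\mathbb{N}^{|\mathcal{A}|}$; instead, the argument requires a careful case analysis on the sign patterns of $v$, $v'$, and $v+v'$ and of their positive and negative parts. In each case the Stage 2 definition together with Stage 1 consistency (applied after clearing denominators on the non-negative pieces) either establishes the desired identity or renders it vacuous due to empty intersection. Neutrality of $\hat g$ is comparatively straightforward, as candidate permutations commute with scaling, with the positive/negative decomposition, and with the set-theoretic operations used, and $g$ is neutral by hypothesis.
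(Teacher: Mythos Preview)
Your Stage~1 is fine and essentially matches the paper's scaling step. The gap is in Stage~2. The paper does \emph{not} use the disjoint-support decomposition $v=v^+-v^-$; instead it exploits a single key observation you are missing: for the all-ones vector $v^*=v(A^*)$ (the profile in which every ballot is reported once), anonymity and neutrality force $g(v^*)=\mathcal{W}_k$. Consequently one may extend to all of $\mathbb{Z}^{|\mathcal{A}|}$ (and then $\mathbb{Q}^{|\mathcal{A}|}$) simply by translation: set $\bar g(v-\ell v^*):=g(v)$ for $v\in\mathbb{N}^{|\mathcal{A}|}$, $\ell\in\mathbb{N}_0$. Well-definedness and consistency are then one-line checks, because intersecting with $g(\ell v^*)=\mathcal{W}_k$ is a no-op.

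Your route, by contrast, runs into a real obstacle that you acknowledge but do not resolve. In your ``incompatible'' case $\hat g(v^+)\not\subseteq\hat g(v^-)$, you propose to let $\hat g(v)$ be \emph{some} non-empty subset of $\mathcal{W}_k\setminus\hat g(v^-)$. But which one? Different choices lead to different functions, and there is no evident canonical choice that makes global consistency fall out. Your appeal to non-imposition (``singleton witnesses'') does not pin down $\hat g(v)$, and the promised ``careful case analysis on the sign patterns of $v$, $v'$, and $v+v'$'' is precisely the hard part of the lemma---leaving it as a promise is leaving the lemma unproved. Concretely: take $v,v'$ both in the incompatible case with $\hat g(v)\cap\hat g(v')\neq\emptyset$; nothing in your construction ties your arbitrary choice for $\hat g(v+v')$ to the intersection $\hat g(v)\cap\hat g(v')$. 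The $v^*$-translation trick sidesteps all of this because the ``negative part'' is always a multiple of $v^*$, whose value is $\mathcal{W}_k$, so Case~2 simply never occurs.
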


\begin{proof}
    Let $f$ denote a non-imposing ABC voting rule satisfying anonymity, neutrality, and consistency. Moreover, let  $g:\mathbb{N}^{|\mathcal{A}|}\rightarrow \mathcal{W}_k$ denote a neutral and consistent function such that $g(v(A))=f(A)$ for all $A\in\mathcal{A}^*$; $f$ uniquely defines such a function since it is anonymous. We will subsequently extend the domain of $g$. For doing so, we will heavily rely on the profile $A^*$ in which every ballot is reported once. Moreover, let $v^*=v(A^*)$ and observe that $v^*_\ell=1$ for all $\ell\in \{1,\dots, |\mathcal{A}|\}$. Clearly, anonymity and neutrality require that $f(A^*)=g(v^*)=\mathcal{W}_k$ as all committees are symmetric to each other in $A^*$.\medskip

    \textbf{Step 1: Extension to $\mathbb{Z}^{|\mathcal{A}|}$}
    
    First, we define a function $\bar g:\mathbb{Z}^{|\mathcal{A}|}\rightarrow\mathcal{W}_k$ that extends $g$ to negative numbers: $\bar g(v-\ell v^*)=g(v)$ for all $v\in \mathbb{N}^{|\mathcal{A}|}$ and $\ell\in\mathbb{N}_0$. First, note that $\bar g$ is well-defined: for every two integers $\ell,\ell'\in\mathbb{N}_0$ and vectors $v, v'\in \mathbb{N}^{|\mathcal{A}|}$ such that $v - \ell v^* = v' - \ell' v^*$, it holds that $v'=v+(\ell'-\ell)v^*$ and $v=v'+(\ell-\ell')v^*$. Assuming that $\ell>\ell'$, we thus infer from consistency that $g(v)=g(v')\cap g((\ell-\ell')v^*)=g(v')$ as $g(v^*)=g((\ell-\ell')v^*)=\mathcal{W}_k$. Because $g(v)=g(v')$, we have by definition that $\bar g(v-\ell v^*)=\bar g(v'-\ell'v^*)$. Moreover, $\bar g$ is defined for all $v\in\mathbb{Z}^{|\mathcal{A}|}$ since we can always find $v'\in\mathbb{N}^{|\mathcal{A}|}$ and $\ell\in\mathbb{N}_0$ with $v=v'-\ell v^*$. Finally, note that $\bar g(v(A))=\bar g(v(A)-0v^*)=g(v(A))=f(v(A))$ for all profiles $A\in\mathcal{A}^*$.

    Next, it is easy to verify that $\bar g$ inherits neutrality and consistency from $g$. For showing the neutrality of $\bar g$, consider a vector $v\in \mathbb{Z}^{|\mathcal{A}|}$ and let $W\in \bar g(v)$. By the definition of $\bar g$, there are $v'\in\mathbb{N}^{|\mathcal{A}|}$ and $\ell\in\mathbb{N}_0$ such that $v=v'-\ell v^*$ and $\bar g(v)=g(v+\ell v^*)=g(v')$. Since $\tau(v^*)=v^*$, it is easy to see that $\tau(v)+\ell v^*=\tau(v')$ for all permutations $\tau:\mathcal{C}\rightarrow\mathcal{C}$. Hence, $\tau(W)\in \bar g(\tau(v))=g(\tau(v)+\ell v^*)=g(\tau(v'))$ due to the neutrality of $g$. 
    
    Finally, for proving that $\bar g$ is consistent, consider two vectors $v^1, v^2\in\mathbb{Z}^{|\mathcal{A}|}$. By definition of $\bar g$, there are $\bar v^1, \bar v^2\in\mathbb{N}^{|\mathcal{A}|}$ and $\ell_1,\ell_2\in\mathbb{N}_0$ such that $v^1=\bar v^1-\ell_1 v^*$, $v^2=\bar v^2-\ell_2 v^*$, $\bar g(v^1)=g(v^1+\ell_1 v^*)=g(\bar v^1)$, and $\bar g(v^2)=g(v^2+\ell_2 v^*)=g(\bar v^2)$. Clearly, this implies that $\bar g(v^1+v^2)=g(v^1+v^2+\ell_1 v^*+\ell_2 v^*)=g(\bar v^1+\bar v^2)$. Hence, if $\bar g(v^1)\cap \bar g(v^2)=g(\bar v^1)\cap g(\bar v^2)\neq \emptyset$, then $\bar g(v^1+v^2)=g(\bar v^1+\bar v^2)=g(\bar v^1)\cap g(\bar v^2)=\bar g(v^1)\cap \bar g(v^2)$ because $g$ is consistent.\medskip
    
    \textbf{Step 2: Extension to $\mathbb{Q}^{|\mathcal{A}|}$}
    
    As second step, we extend $\bar g$ to the rational numbers. For doing so, we define $\hat g(\frac{v}{\ell})=\bar g(v)$ for all $v\in\mathbb{Z}^{|\mathcal{A}|}$ and $\ell\in\mathbb{N}$. Clearly, $\hat g$ is defined for all $v\in\mathbb{Q}^{|\mathcal{A}|}$. Next, the consistency of $\bar g$ shows that $\hat g$ is well-defined: if there are $v,v'\in\mathbb{Z}^{|\mathcal{A}|}$ and $\ell, \ell'\in \mathbb{N}$ such that $\frac{v}{\ell}=\frac{v'}{\ell'}$, then $\ell' v= \ell v'$. By consistency of $\bar g$, we hence infer that $\bar g(v)=\bar g(\ell' v)= \bar g(\ell v')=\bar g(v')$, which proves that $\hat g$ is well-defined. Moreover, observe that $\hat g(v(A))=\hat g(\frac{v(A)}{1})=\bar g(v(A))=f(A)$ for all $A\in\mathcal{A}^*$.
    
    Next, it is simple to show that $\hat g$ is neutral and consistent. For proving neutrality, let $v\in\mathbb{Q}^{|\mathcal{A}|}$ be an arbitrary vector and $W\in\hat g(W)$. By definition, there are $v'\in\mathbb{Z}^{|\mathcal{A}|}$ and $\ell\in\mathbb{N}$ such that $v=\frac{v'}{\ell}$ and $\hat g(v)=\bar g(v')$. It holds for every permutation $\tau$ that $\tau(v)=\frac{\tau(v')}{\ell}$ and thus, we have that $\tau(W)\in \hat g(\tau(v))=\bar g(\tau(v'))$ because $\bar g$ is neutral. 
    
    Similarly, for showing that $\hat g$ is consistent, consider two vectors $v^1,v^2\in\mathbb{Q}^{|\mathcal{A}|}$ such that $\hat g(v^1)\cap \hat g(v^2)\neq \emptyset$. By definition of $\hat g$, there are $\hat v^1,\hat v^2\in\mathbb{Z}^{|\mathcal{A}|}$ and $\ell_1, \ell_2\in\mathbb{N}$ such that $v^1=\frac{\hat v^1}{\ell_1}$, $v^2=\frac{\hat v^2}{\ell_2}$, $\hat g(v^1)=\bar g(\hat v^1)$, and $\hat g(v^2)=\bar g(\hat v^2)$. Moreover, it holds by definition of $\hat g$ that $\hat g(v^1+v^2)=\hat g(\frac{\ell_2\hat v^1+\ell_1\hat v^2}{\ell_1\ell_2})=\bar g(\ell_2\hat v^1+\ell_1\hat v^2)$. Since $\bar g$ is consistent, we thus infer that $\hat g(v^1+v^2)=\bar g(\ell_2 \hat v^1+\ell_1\hat v^2)=\bar g(\ell_2\hat v^1)\cap \bar g(\ell_1\hat v^2)=\bar g(\hat v^1)\cap \bar g(\hat v^2)=\hat g(v^1)\cap \hat g(v^2)$. This proves that $\hat g$ is consistent.
\end{proof}

Since $\hat g$ fully describes $f$, we will next investigate this function. To this end, we introduce some additional notation. In particular, we suppose that the committees in $\mathcal{W}_k$ are arranged in an arbitrary order $W^1, \dots, W^{|\mathcal{W}_k|}$ and define $R_i^f=\{v\in\mathbb{Q}^{|\mathcal{A}|}\colon W_i\in \hat g(v)\}$ as the set of vectors for which $\hat g$ chooses $W^i$. First, we note that the sets $R_i^f$ are symmetric: if a permutation $\tau:\mathcal{C}\rightarrow\mathcal{C}$ maps $W_i$ to $W_j$ (i.e., $W_j=\tau(W_i)$), then $v\in R_i^f$ if and only if $\tau(v)\in R_j^f$ because $W_i\in \hat g(v)$ if and only if $\tau(W_i)\in \hat g(\tau(v))$. Moreover, since $\hat g$ is consistent, all $R_i^f$ are $\mathbb{Q}$-convex (i.e., if $v,v'\in R_i^f$, then $\lambda v+(1-\lambda)v'\in R_i^f$ for all $\lambda \in \mathbb{Q}\cap [0,1]$). Consequently, the closure of $R_i^f$ with respect to $\mathbb{R}^{|\mathcal{A}|}$, $\bar R_i^f$, is a convex cone. Furthermore, we observe that $g(v)=\{W^i\in\mathcal{W}_k\colon v\in R_i^f\}\subseteq \{W^i\in\mathcal{W}_k\colon v\in \bar R_i^f\}$ for all $v\in\mathbb{Q}^{|\mathcal{A}|}$. Hence, we will subsequently analyze the sets $\bar R_i^f$ and show next that these sets can be separated by hyperplanes. In the subsequent lemma, we use $vu$ for the standard scalar product between two vectors $v,u\in\mathcal{R}^{|\mathcal{A}|}$. 

\begin{lemma}\label{lem:jointHyperplane}
    Let $f$ denote a non-imposing ABC voting rule that satisfies anonymity, neutrality, and consistency. Furthermore, consider two distinct committees $W^i, W^j\in\mathcal{W}_k$. There is a non-zero vector $u^{i,j}\in\mathbb{R}^{|\mathcal{A}|}$ such that $v u^{i,j}\geq 0$ for all $v\in \bar R_i^f$ and $v u^{i,j}\leq 0$ for all $v\in \bar R_j^f$.
\end{lemma}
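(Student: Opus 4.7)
The strategy is a classical separating-hyperplane argument, with a twist to handle ties between $W^i$ and $W^j$. I would first separate the ``strict-preference'' regions for $W^i$ and $W^j$, and then argue that the ``tie'' region lies exactly on the separating hyperplane, so that the separation extends to the full sets $\bar R_i^f$ and $\bar R_j^f$.

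Concretely, I would introduce the strict-preference set $A_i = \{v \in \mathbb{Q}^{|\mathcal{A}|} : W^i \in \hat g(v) \text{ and } W^j \notin \hat g(v)\}$, the analogous $A_j$, and the tie set $T = \{v \in \mathbb{Q}^{|\mathcal{A}|} : \{W^i,W^j\} \subseteq \hat g(v)\}$, so that $R_i^f = A_i \cup T$ and $R_j^f = A_j \cup T$. Consistency of $\hat g$, together with the scaling identity $\hat g(\lambda v) = \hat g(v)$ for positive rationals $\lambda$ (both established in \Cref{lem:domain}), makes each of $A_i$, $A_j$, and $T$ a $\mathbb{Q}$-convex cone. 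Non-imposition guarantees that $A_i$ and $A_j$ are both nonempty, and they are disjoint by construction. Viewing $A_i$ and $A_j$ as disjoint nonempty convex subsets of $\mathbb{R}^{|\mathcal{A}|}$, the weak separating hyperplane theorem yields a nonzero vector $u^{i,j} \in \mathbb{R}^{|\mathcal{A}|}$ and some $\alpha \in \mathbb{R}$ with $v u^{i,j} \geq \alpha$ on $A_i$ and $v u^{i,j} \leq \alpha$ on $A_j$. A standard cone-scaling argument (replacing $v$ by $\lambda v$ and letting $\lambda \to 0^+$ on both sides) forces $\alpha = 0$, and the inequalities extend to the closures $\bar A_i$ and $\bar A_j$ by continuity of the inner product.

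The remaining step, and the part I expect to be the main obstacle, is showing that $v u^{i,j} = 0$ for every $v \in \bar T$, so that the tie region sits on the separating hyperplane itself. For any $v \in T$ and any $v' \in A_i$ (which exists by non-imposition), consistency of $\hat g$ yields $\hat g(v + \lambda v') = \hat g(v) \cap \hat g(v')$ for every positive rational $\lambda$ (the intersection is nonempty since it contains $W^i$), so $v + \lambda v' \in A_i$ and hence $(v + \lambda v') u^{i,j} \geq 0$; letting $\lambda \to 0^+$ yields $v u^{i,j} \geq 0$. The symmetric argument with $v' \in A_j$ gives $v u^{i,j} \leq 0$, so $v u^{i,j} = 0$, and this extends to $\bar T$ by continuity. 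Combining everything, $\bar R_i^f = \bar A_i \cup \bar T \subseteq \{v : v u^{i,j} \geq 0\}$ and $\bar R_j^f = \bar A_j \cup \bar T \subseteq \{v : v u^{i,j} \leq 0\}$, which is the conclusion of the lemma.
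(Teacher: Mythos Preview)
Your approach is correct and genuinely differs from the paper's. The paper works directly with $\bar R_i^f$ and $\bar R_j^f$: it shows their \emph{interiors} are disjoint (if some $v$ lay in both interiors, one could push $v$ slightly toward a profile $A$ with $f(A)=\{W^i\}$ while remaining in $R_j^f$, contradicting consistency), then uses neutrality and the covering $\bigcup_\ell \bar R_\ell^f=\mathbb{R}^{|\mathcal{A}|}$ to argue that both interiors are nonempty, and finally applies the separating-hyperplane theorem to two full-dimensional convex bodies with disjoint interiors. Your decomposition into strict-preference regions $A_i$, $A_j$ and the tie region $T$ is a different route: you separate $A_i$ and $A_j$ directly (they are disjoint by construction, not merely interior-disjoint), then use the consistency-based perturbation $v+\lambda v'$ to pin $T$ onto the hyperplane. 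Your argument avoids the full-dimension step entirely and hence does not use neutrality beyond what \Cref{lem:domain} already requires; the paper's route is shorter once full dimension is in hand but depends on that additional symmetry argument.

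One point to tighten: $A_i$ and $A_j$ are $\mathbb{Q}$-convex subsets of $\mathbb{Q}^{|\mathcal{A}|}$, not convex subsets of $\mathbb{R}^{|\mathcal{A}|}$, so you cannot literally invoke the real separating-hyperplane theorem on them as stated. The fix is to separate their real convex hulls instead, and these \emph{are} still disjoint: if $\sum_k \lambda_k v_k = \sum_l \mu_l w_l$ with $v_k\in A_i$, $w_l\in A_j$ and real coefficients, then the set of feasible $(\lambda,\mu)$ is a nonempty bounded polyhedron cut out by rational linear constraints, hence has a rational vertex; at that rational point the two sides are rational convex combinations lying in $A_i$ and $A_j$ respectively, contradicting $A_i\cap A_j=\emptyset$. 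With this patch your argument goes through cleanly.
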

\begin{proof}
    Let $f$ denote a non-imposing ABC voting rule satisfying anonymity, neutrality, and consistency. Furthermore, let $\hat g$ denote the extension of $f$ to $\mathbb{Q}^{|\mathcal{A}|}$ as defined in \Cref{lem:domain}. Finally, we consider two arbitrary committees $W^i, W^j \in \mathcal{W}_k$. We will first show that the interiors of the sets $\bar R_i^f$ and $\bar R_j^f$ are disjoint, i.e., $\text{int } \bar R_i^f\cap \text{int } \bar R_j^f=\emptyset$. Assume for contradiction that this is not the case, which means that there is $v\in \text{int } \bar R_i^f\cap \text{int } \bar R_j^f\cap \mathbb{Q}^{|\mathcal{A}|}$. By the definition of $\bar R_i^f$ and $\bar R_j^f$, this means that $W^i\in \hat g(v)$, $W^j\in \hat g(v)$. On the other hand, $f$ is non-imposing, so there is a profile $A$ such that $f(A)=\{W^i\}$. By the definition of $\hat g$, $\hat g(v(A))=\{W^i\}$. Finally, since $v$ is in the interior of $\bar R_j^f\cap \mathbb{Q}^{|\mathcal{A}|}$, there must be $\lambda\in (0,1)\cap \mathbb{Q}$ such that $(1-\lambda v) + \lambda v(A)\in R_j^f$. However, by consistency, we have that $\hat g((1-\lambda) v+ \lambda v(A))= \hat g(v)\cap \hat g(v(A))=\{W^i\}$. This is a contradiction and thus, the interiors of $\bar R_i^f$ and $\bar R_j^f$ must be disjoint.

    Next, we observe that the interiors of $\bar R_i^f$ and $\bar R_j^f$ are non-empty. This follows from the observation that the sets $R_i^f$, and thus also their closures $\bar R_i^f$, are symmetric and that $\mathbb{R}^{|\mathcal{A}|}=\bigcup_{\ell\in \{1,\dots,|\mathcal{W}_k|\}} \bar R_\ell^f$. Since there is only a finite number of committees, this entails that the sets $\bar R_i^f$ have full dimension and thus have indeed non-empty interiors. Finally, we can now use the separating hyperplane theorem for convex sets to derive that there is a non-zero vector $u^{i,j}\in\mathbb{R}^{|\mathcal{A}|}$ that satisfies the conditions of the lemma.  
\end{proof}

For an easy notation, we say that a non-zero vector $u$ separates $\bar R_i^f$ from $\bar R_j^f$ if $vu\geq 0$ for all $v\in \bar R_i^f$ and $vu\leq 0$ for all $v\in \bar R_j^f$. In particular, the vectors derived in \Cref{lem:jointHyperplane} are such separating vectors. 
Moreover, if a vector $u$ separates $\bar R_i^f$ from $\bar R_j^f$, then $-u$ separates $\bar R_j^f$ from $\bar R_i^f$. 
We show next that the sets $\bar R_i^f$ are fully described by \emph{every} set of separating vectors. 

\begin{lemma}\label{lem:jointPolyhedron}
Let $f$ denote a non-imposing ABC voting rule that satisfies anonymity, neutrality, and consistency.
For all distinct $i,j\in \{1,\dots, |\mathcal{W}_k|\}$, let $u^{i,j}\in \mathbb{R}^{|\mathcal{A}|}$ denote a non-zero vector such that $u^{i,j}$ separates $\bar R_i^f$ from $\bar R_j^f$. It holds for all $i\in \{1,\dots,|\mathcal{W}_k|\}$ that $\bar R_i^f=S_i^f=\{x\in\mathbb{R}^{|\mathcal{A}|}\colon x u^{i,j}\geq 0 \text{ for all } j\in \{1,\dots,|\mathcal{W}_k|\}\setminus\{i\}\}$.
\end{lemma}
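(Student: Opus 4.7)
The inclusion $\bar R_i^f \subseteq S_i^f$ is immediate from the defining property of $u^{i,j}$: every $v \in \bar R_i^f$ satisfies $v u^{i,j} \geq 0$ for each $j \neq i$, which is exactly what $S_i^f$ demands. So the whole content of the lemma lies in the reverse inclusion $S_i^f \subseteq \bar R_i^f$, and my plan is to prove it by approximating an arbitrary $x \in S_i^f$ by points that can be directly shown to lie in $\bar R_i^f$.

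The first ingredient is a point $y$ in the interior of $\bar R_i^f$, which is non-empty by the argument used in the proof of \Cref{lem:jointHyperplane}. I would then establish the strict inequality $y u^{i,j} > 0$ for every $j \neq i$: some open ball around $y$ is contained in $\bar R_i^f$, hence in the closed half-space $\{z : z u^{i,j} \geq 0\}$; since $u^{i,j} \neq 0$, no open ball can sit on the boundary hyperplane, so the inequality at the center must be strict. The second ingredient is the covering $\mathbb{R}^{|\mathcal{A}|} = \bigcup_\ell \bar R_\ell^f$, which I would derive from \Cref{lem:domain}: since $\hat g(v) \neq \emptyset$ for every rational $v$, we have $\mathbb{Q}^{|\mathcal{A}|} \subseteq \bigcup_\ell R_\ell^f$, and taking closures on both sides (using that a finite union of closed sets is closed) yields the stated equality.

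Now for $x \in S_i^f$, I would consider $x_\lambda = \lambda y + (1-\lambda) x$ for $\lambda \in (0,1]$. For each $j \neq i$, linearity gives $x_\lambda u^{i,j} = \lambda\, y u^{i,j} + (1-\lambda)\, x u^{i,j} > 0$, since $y u^{i,j} > 0$ and $x u^{i,j} \geq 0$. By the covering, $x_\lambda \in \bar R_\ell^f$ for some $\ell$; if $\ell \neq i$, then the separation property of $u^{i,\ell}$ would force $x_\lambda u^{i,\ell} \leq 0$, contradicting the strict positivity just established. Hence $x_\lambda \in \bar R_i^f$ for every $\lambda \in (0,1]$, and letting $\lambda \to 0$ yields $x \in \bar R_i^f$ because $\bar R_i^f$ is closed.

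The main obstacle I anticipate is precisely the strict inequality $y u^{i,j} > 0$ at interior points, since the separating hyperplane theorem only provides the weak inequality for the whole set. The perturbation observation above handles it cleanly, and without it the construction $x_\lambda$ would only guarantee $x_\lambda u^{i,j} \geq 0$, which is not sufficient to rule out $x_\lambda \in \bar R_j^f$ for $j \neq i$.
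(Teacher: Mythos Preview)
Your proposal is correct and follows essentially the same approach as the paper's proof: both arguments identify points $v$ satisfying the strict inequalities $v u^{i,j}>0$ for all $j\neq i$, use the separation property to conclude $v\notin \bar R_j^f$ for $j\neq i$, invoke the covering $\mathbb{R}^{|\mathcal{A}|}=\bigcup_\ell \bar R_\ell^f$ to deduce $v\in \bar R_i^f$, and then pass to the closure. The only cosmetic difference is that the paper works directly with $v\in\text{int }S_i^f$ to obtain $\text{int }S_i^f\subseteq\text{int }\bar R_i^f$, whereas you produce such points as convex combinations $x_\lambda=\lambda y+(1-\lambda)x$ with a fixed interior point $y\in\text{int }\bar R_i^f$ and let $\lambda\to 0$; these are two phrasings of the same closure argument.
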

\begin{proof}
Let $f$ denote a non-imposing ABC voting rule that satisfies all given axioms, let the vectors $u^{i,j}$ be defined as in the lemma, and fix an index $i\in \{1,\dots,|\mathcal{W}_k|\}$. By definition, it holds that $v u^{i,j}\geq 0$ for all $j\in \{1,\dots,|\mathcal{W}_k|\}\setminus \{i\}$ if $v\in \bar R_i^f$, so $v\in S_i^f$. This proves that $\bar R_i^f\subseteq S_i^f$. For the other direction, note that the sets $\bar R_i^f$ are fully dimensional since they are symmetric and $\mathbb{R}^{|\mathcal{A}|}=\bigcup_{j\in \{1,\dots,|\mathcal{W}_k|\}} \bar R_j^f$. Since $\bar R_i^f\subseteq S_i^f$, we thus also have that $\text{int } S_i^f\neq \emptyset$. Now, let $v\in \text{int } S_i^f$, which means that $vu^{i,j}>0$ for all $j\in \{1,\dots,|\mathcal{W}_k|\}$, $j\neq i$. 
In turn, this implies that $v\not \in \bar R_j^f$ for all $j\in \{1,\dots,|\mathcal{W}_k|\}\setminus \{i\}$ because $v\in \bar R_j^f$ entails that $v u^{i,j}\leq 0$. 
Since $\mathbb{R}^{|\mathcal{A}|}=\bigcup_{j\in \{1,\dots,|\mathcal{W}_k|\}} \bar R_j^f$ and all $\bar R_j^f$ are closed and convex, we now infer that $v\in \text{int } \bar R_i^f$. Hence, $\text{int } S_i^f\subseteq \text{int } \bar R_i^f$, so we deduce that $S_i^f\subseteq \bar R_i^f$.
\end{proof}

As a consequence of \Cref{lem:jointPolyhedron}, it suffices to understand the separating vectors $u^{i,j}$ for characterizing $\hat g$. Hence, we now aim to derive such vectors that are additionally symmetric. For this, we start with the simple but helpful observation that the symmetry of the sets $\bar R_i^f$ entails some symmetry for the hyperplanes. 

\begin{lemma}\label{lem:jointEasySymmetry}
    Let $f$ denote a non-imposing ABC voting rule that satisfies anonymity, neutrality, and consistency. Moreover, consider committees $W^i, W^j, W^{i'}, W^{j'}\in\mathcal{W}_k$ and a permutation $\tau:\mathcal{C}\rightarrow\mathcal{C}$ such that $W^i\neq W^j$, $W^{i'} \neq W^{j'}$, $|W^i\cap W^j|=|W^{i'}\cap W^{j'}|$, $\tau(W^i\cap W^j)= W^{i'}\cap W^{j'}$, $\tau(W^i\setminus W^j)= W^{i'}\setminus W^{j'}$, and $\tau(W^j\setminus W^i)= W^{j'}\setminus W^{i'}$. If a vector $u$ separates $\bar R_{i}^f$ from $\bar R_{j}^f$, then $\tau(u)$ separates $\bar R_{i'}^f$ from $\bar R_{j'}^f$.
\end{lemma}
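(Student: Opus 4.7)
The plan is a direct symmetry argument that piggybacks on the already-established symmetry of the sets $\bar R_i^f$. First I would verify that $\tau(W^i)=W^{i'}$ and $\tau(W^j)=W^{j'}$: decomposing $W^i=(W^i\cap W^j)\cup(W^i\setminus W^j)$ and applying the three image conditions yields $\tau(W^i)=(W^{i'}\cap W^{j'})\cup(W^{i'}\setminus W^{j'})=W^{i'}$, and analogously for $W^j$. Consequently, by the symmetry of $R_i^f$ (and hence $\bar R_i^f$) under permutations of $\mathcal{C}$ noted right before Lemma~\ref{lem:jointHyperplane}, we have $v\in\bar R_{i'}^f$ iff $\tau^{-1}(v)\in\bar R_i^f$, and similarly $v\in\bar R_{j'}^f$ iff $\tau^{-1}(v)\in\bar R_j^f$. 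Hence the two inequalities I must establish, $v\cdot\tau(u)\geq 0$ for $v\in\bar R_{i'}^f$ and $v\cdot\tau(u)\leq 0$ for $v\in\bar R_{j'}^f$, will follow as soon as I can transport the separating inequalities for $u$ along the permutation $\tau$.

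The bridge is the purely algebraic identity $v\cdot\tau(u)=\tau^{-1}(v)\cdot u$ for all $v,u\in\mathbb{R}^{|\mathcal{A}|}$. This holds because $\tau$ merely permutes the coordinates via the fixed bijection $B$ between indices and ballots, and any coordinate permutation is an isometry of the Euclidean inner product. I would verify it in one line: substituting the definition $\tau(u)_{\ell_1}=u_{\ell_2}$ (where $B(\ell_1)=\tau(B(\ell_2))$) into $\sum_{\ell_1}v_{\ell_1}\,\tau(u)_{\ell_1}$ and reindexing by $\ell_2$ turns $v_{\ell_1}$ into $\tau^{-1}(v)_{\ell_2}$, yielding $\sum_{\ell_2}\tau^{-1}(v)_{\ell_2}\,u_{\ell_2}$, as required.

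Putting the pieces together: for $v\in\bar R_{i'}^f$, symmetry gives $\tau^{-1}(v)\in\bar R_i^f$, so the hypothesis on $u$ gives $\tau^{-1}(v)\cdot u\geq 0$, which by the identity equals $v\cdot\tau(u)$; the sign-reversed argument for $v\in\bar R_{j'}^f$ is literally the same. Non-vanishing of $\tau(u)$ is automatic since permuting coordinates sends non-zero vectors to non-zero vectors. I do not expect any real obstacle here; the only care needed is bookkeeping, namely keeping straight whether $\tau$ or $\tau^{-1}$ appears on each side of the isometry identity, which the convention $B(\ell_1)=\tau(B(\ell_2))$ in the preamble of Appendix A resolves unambiguously.
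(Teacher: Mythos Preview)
Your proposal is correct and follows essentially the same approach as the paper: both arguments reduce the claim to the permutation-invariance of the scalar product, written as $\tau(v)\cdot\tau(u)=v\cdot u$ in the paper and equivalently as $v\cdot\tau(u)=\tau^{-1}(v)\cdot u$ in your version, combined with the symmetry $\bar R_{i'}^f=\tau(\bar R_i^f)$. Your write-up is in fact slightly more explicit than the paper's in spelling out why $\tau(W^i)=W^{i'}$ and why $\tau(u)$ is non-zero.
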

\begin{proof}
    Let $f$ denote an ABC voting rule satisfying all given axioms, and consider committees $W^i, W^j, W^{i'}, W^{j'}\in\mathcal{W}_k$ as defined in the lemma. Moreover,  let $u$ denote a vector that separates $\bar R_i^f$ from $\bar R_j^f$, and let $\tau:\mathcal{C}\rightarrow\mathcal{C}$ be a permutation that satisfies the conditions of the lemma. Now, consider a vector $v'\in\bar R_{i'}^f$. By the neutrality of $\hat g$, there is a vector $v\in \bar R_{i}^f$ such that $\tau(v)=v'$ because $\tau(W^i)=W^{i'}$. Since $u$ separates $\bar R_{i}^f$ and $\bar R_{j}^f$, we have $v u\geq 0$. Now, it is straightforward that $ v'\tau(u)=\tau(v)\tau(u) = v u \geq 0$ because the scalar product does not change if we permute both vectors. Hence, it holds that $v'\tau(u)\geq 0$ for all $v'\in\bar R_{i'}^f$. An analogous argument also works for vectors $v'\in\bar R_{j'}^f$ and $\tau(u)$ thus separates $\bar R_{i'}^f$ from $\bar R_{j'}^f$.  
\end{proof}

Based on \Cref{lem:jointHyperplane,lem:jointPolyhedron,lem:jointEasySymmetry}, we show next that there are highly symmetric vectors that fully specify the sets $\bar R_i^f$. 

\begin{restatable}{lemma}{hyperplanes}\label{lem:hyperplanes}
    Let $f$ denote a non-imposing ABC voting rule that satisfies anonymity, neutrality, and consistency. There are non-zero vectors $\hat u^{i,j}$ that satisfy the following conditions for all $W^i, W^j\in\mathcal{W}_k$:
    \begin{enumerate}[itemsep=0em, topsep=2pt]
        \item $\bar R_i^f=\{v\in \mathbb{R}^{|\mathcal{A}|}\colon \forall j'\in \{1,\dots, |\mathcal{W}_k|\}\setminus \{i\}\colon \hat u^{i,j'} v\geq 0\}$.
        \item $\hat u^{i,j}=-\hat u^{j,i}$.
        \item $\hat u^{i',j'}=\tau(\hat u^{i,j})$ for all permutations $\tau:\mathcal{C}\rightarrow\mathcal{C}$ with $\tau(W^i)=W^{i'}$ and $\tau(W^j)=W^{j'}$.
    \end{enumerate}
\end{restatable}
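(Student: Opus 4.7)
The plan is to start from the arbitrary separating vectors produced by Lemma \ref{lem:jointHyperplane} and to impose the required symmetries by systematic averaging. The organizing observation is that two ordered pairs of distinct committees $(W^i, W^j)$ and $(W^{i'}, W^{j'})$ lie in the same $\mathrm{Sym}(\mathcal{C})$-orbit if and only if $|W^i \cap W^j| = |W^{i'} \cap W^{j'}|$, so there are only $k$ orbits and it suffices to construct $\hat u^{i_0, j_0}$ carefully for a single representative pair per orbit and to extend to the remaining pairs by the group action.

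Fix a representative $(W^{i_0}, W^{j_0})$, let $G$ denote the subgroup of $\mathrm{Sym}(\mathcal{C})$ fixing both $W^{i_0}$ and $W^{j_0}$ setwise, and let $\sigma_0$ be any permutation with $\sigma_0(W^{i_0}) = W^{j_0}$ (which exists since $|W^{i_0}|=|W^{j_0}|=k$). The closed convex cone $C \subseteq \mathbb{R}^{|\mathcal{A}|}$ of vectors separating $\bar R_{i_0}^f$ from $\bar R_{j_0}^f$ is non-trivial by Lemma \ref{lem:jointHyperplane} and is $G$-invariant by Lemma \ref{lem:jointEasySymmetry}. The key preliminary fact I would first establish is that $C$ is pointed: if $u$ and $-u$ both lie in $C$, then $vu = 0$ for every $v \in \bar R_{i_0}^f$, and since the symmetric sets $\bar R_\ell^f$ together cover $\mathbb{R}^{|\mathcal{A}|}$ and hence have full dimension, this forces $u = 0$.

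With pointedness of $C$ in hand, I would symmetrize in two rounds. First, given any non-zero $u \in C$ from Lemma \ref{lem:jointHyperplane}, the Reynolds average $\tilde u = |G|^{-1} \sum_{\rho \in G}\rho(u)$ is $G$-invariant and lies in $C$; it must be non-zero, for otherwise $-u$ is a positive combination of the remaining $\rho(u) \in C$, so $-u \in C$, contradicting pointedness. Second, set $\hat u^{i_0, j_0} = \tilde u - \sigma_0(\tilde u)$. Since $\sigma_0(\tilde u)$ separates $\bar R_{j_0}^f$ from $\bar R_{i_0}^f$, its negation also lies in $C$, so $\hat u^{i_0, j_0} \in C$; using $\sigma_0 G \sigma_0^{-1} = G$ and $\sigma_0^2 \in G$, one checks that $\hat u^{i_0, j_0}$ is $G$-invariant and satisfies $\sigma_0(\hat u^{i_0, j_0}) = -\hat u^{i_0, j_0}$. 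Pointedness again rules out $\hat u^{i_0, j_0} = 0$, because otherwise $\tilde u = \sigma_0(\tilde u)$ would separate both directions simultaneously and thus be orthogonal to the full-dimensional $\bar R_{i_0}^f$.

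Finally, I would extend to the full orbit by setting $\hat u^{i,j} = \tau(\hat u^{i_0, j_0})$ for any permutation $\tau$ with $\tau(W^{i_0}) = W^i$ and $\tau(W^{j_0}) = W^j$; well-definedness is immediate from $G$-invariance, and condition (3) holds by construction. For condition (2), note that $\tau\sigma_0$ realizes $(W^{i_0}, W^{j_0}) \mapsto (W^j, W^i)$, hence $\hat u^{j,i} = \tau\sigma_0(\hat u^{i_0, j_0}) = \tau(-\hat u^{i_0, j_0}) = -\hat u^{i,j}$. Condition (1) then follows directly from Lemma \ref{lem:jointPolyhedron} applied to the separating vectors $\hat u^{i,j}$. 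The main obstacle I anticipate is justifying the non-vanishing of the two averaged vectors $\tilde u$ and $\hat u^{i_0, j_0}$, and the uniform resolution in both cases is the pointedness of the cone $C$.
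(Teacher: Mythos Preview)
Your construction is essentially identical to the paper's: both average a separating vector over the setwise stabilizer $G$ of the pair $(W^{i_0},W^{j_0})$ and then antisymmetrize via a swap permutation, extending to other pairs by the group action; the paper's $\mathcal{T}$ and $\tau^*$ are exactly your $G$ and $\sigma_0$, and its $\hat u^{i_0,i_x}=\tilde u^{i_0,i_x}+\tau^*(\tilde u^{i_x,i_0})=\tilde u^{i_0,i_x}-\tau^*(\tilde u^{i_0,i_x})$ is your $\tilde u-\sigma_0(\tilde u)$. Your pointedness argument for non-vanishing is a clean repackaging of the paper's interior-point argument (both rest on the full-dimensionality of $\bar R_{i_0}^f$).

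One imprecision to fix: you declare $\sigma_0$ to be ``any permutation with $\sigma_0(W^{i_0})=W^{j_0}$'', but the identities $\sigma_0 G\sigma_0^{-1}=G$ and $\sigma_0^2\in G$ that you later invoke also require $\sigma_0(W^{j_0})=W^{i_0}$. Without this, $\sigma_0 G\sigma_0^{-1}$ is the stabilizer of $(W^{j_0},\sigma_0(W^{j_0}))$, which need not equal $G$, and $\sigma_0^2(W^{i_0})=\sigma_0(W^{j_0})$ need not be $W^{i_0}$. The fix is immediate: take $\sigma_0$ to be an involution pairing $W^{i_0}\setminus W^{j_0}$ with $W^{j_0}\setminus W^{i_0}$ and fixing all other candidates (this is the paper's $\tau^*$).
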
\begin{proof}
    Let $f$ denote a non-imposing ABC voting rule that satisfies all given axioms. By \Cref{lem:jointHyperplane}, there are non-zero vectors $u^{i,j}$ that separate $\bar R_i^f$ from $\bar R_j^f$ for all pairs of committees $W^i, W^j\in\mathcal{W}_k$ and suppose that $u^{i,j}=-u^{j,i}$. Our main goal is to make these vectors symmetric and we will heavily rely on \Cref{lem:jointEasySymmetry} for this. To this end, we define $z=\max\limits_{W^i,W^j\in\mathcal{W}_k} |W^i\setminus W^j|$ as the maximal distance between two committees. Moreover, we fix $z+1$ committees $W^{i_0},\dots, W^{i_z}$ such that $|W^{i_0}\setminus W^{i_x}|=x$ for all $x\in \{1,\dots, z\}$.

    Next, we will derive the symmetric separating vectors $\hat u^{i,j}$. 
    For this, consider an arbitrary index $x\in \{1,\dots, z\}$ and let $u^{i_0, i_x}$ be the vector that separates $\bar R_{i_0}^f$ from $\bar R_{i_x}^f$. 
    Moreover, we define the sets $X^{i_0\setminus i_x}=W^{i_0}\setminus W^{i_x}$, $X^{i_0\cap i_x}=W^{i_0}\cap W^{i_x}$, $X^{i_x\setminus i_0}=W^{i_x}\setminus W^{i_0}$, and $\mathcal{T}=\{\tau\in \mathcal{C}^{\mathcal{C}}\colon \tau(X^{i_0\cap i_x})=X^{i_0\cap i_x},\tau(X^{i_0\setminus i_x})=X^{i_0\setminus i_x}, \tau(X^{i_x\setminus i_0})=X^{i_x\setminus i_0} \}$. 
    In particular, it holds for every $\tau\in\mathcal{T}$ that $\tau(W^{i_0})=W^{i_0}$ and $\tau(W^{i_x})=W^{i_x}$. \sloppy{Consequently, \Cref{lem:jointEasySymmetry} shows that $\tau(u^{i_0, i_x})$ also separates $\bar R_{i_0}^f$ from $\bar R_{i_x}^f$ and the same follows for $\tilde{u}^{i_0, i_x}=\sum_{\tau\in\mathcal{T}} \tau(u^{i_0,i_x})$.}
    This also means that the vector $\tilde{u}^{i_x, i_0}=-\tilde{u}^{i_0, i_x}$ separates $\bar R_{i_x}^f$ from $\bar R_{i_0}^f$. Next, let $\tau^*$ denote a permutation such that $\tau^*(X^{i_0\cap i_x})=X^{i_0\cap i_x}$, $\tau^*(X^{i_0\setminus i_x})=X^{i_x\setminus i_0}$, $\tau^*(X^{i_x\setminus i_0})=X^{i_0\setminus i_x}$, and $\tau^*(\tau^*(c))=c$ for all candidates $c\in\mathcal{C}$. 
    It is easy to verify that $\tau^*(W^{i_0})=W^{i_x}$ and $\tau^*(W^{i_x})=W^{i_0}$ and \Cref{lem:jointEasySymmetry} thus shows that $\tau^*(\tilde{u}^{i_x, i_0})$ separates $\bar R_{i_0}^f$ from $\bar R_{i_x}^f$. 
    Finally, we define the vector $\hat u^{i_0, i_x}$ by $\hat u^{i_0, i_x}=\tilde{u}^{i_0, i_x}+\tau^*(\tilde u^{i_x, i_0})$ and note that this vector separates $\bar R_{i_0}^f$ from $\bar R_{i_x}^f$. Moreover, we generalize these vectors to arbitrary committees $W^i, W^j\in\mathcal{W}_k$ as follows: we first determine $x=|W^i\setminus W^j|$ and choose a permutation $\tau$ such that $\tau(W^{i_0})=W^i$ and $\tau(W^{i_x})=W^j$. Then, $\hat u^{i,j}=\tau(\hat u^{i_0, i_x})$. This vector separates $\bar R_i^f$ from $\bar R_j^f$ by \Cref{lem:jointEasySymmetry}.  

    It remains to show that these vectors satisfy our conditions. In more detail, we first prove Claim (1) and the discuss an auxiliary claim establishing some symmetry properties of the vectors $\hat u^{i_0, i_x}$ for every $x\in \{1,\dots, z\}$. Based on this auxiliary claim, we then show Claims (2) and (3).\medskip

     \textbf{Claim (1): $\bar R_i^f=\{v\in\mathbb{R}^{|\mathcal{A}|}\colon \forall j\in \{1,\dots, |\mathcal{W}_k|\}\setminus \{i\}\colon v\hat u^{i,j}\geq 0\}$}
    
    For proving this claim, we only need to show that all vectors $\hat u^{i,j}$ are non-zero because it has already been proven that these vectors separate separate $\bar R_i^f$ from $\bar R_j^f$. Hence, once it is established that the vectors $\hat u^{i,j}$ are non-zero, the claim follows from \Cref{lem:jointPolyhedron}. Now, consider two committees $W^i, W^j\in\mathcal{W}_k$ and let $x=|W^i\setminus W^j|$.  
    Since we derive $\hat u^{i,j}$ from $\hat u^{i_0, i_x}$ by permuting the latter vector, $\hat u^{i,j}$ is non-zero if $\hat u^{i_0, i_x}$ is non-zero. Hence, it only remains to show that $\hat u^{i_0,i_x}$ is a non-zero vector. For this, we use that $\bar R_{i_0}^f=\{x\in\mathcal{R}^{|\mathcal{A}|}\colon \forall j\in \{1,\dots,|\mathcal{W}_k|\}\setminus\{i_0\}\colon x u^{i_0,\ell}\geq 0\}$ due to \Cref{lem:jointPolyhedron}, where the vectors $u^{i_0, \ell}$ denote the hyperplanes given by \Cref{lem:jointHyperplane}. Now, let $v$ denote a point in the interior of $\bar R_{i_0}^f$; such a point exists as $\bar R_{i_0}^f$ is fully dimensional and thus has a non-empty interior. Since $v\in\text{int} \bar R_{i_0}^f$, it holds that $v u^{i_0, \ell}>0$ for all $\ell\in\{1,\dots,|\mathcal{W}_k|\}\setminus \{i\}$, in particular that $vu^{i_0, i_x}>0$. Next, we note that also the vectors $\tau(u^{i_0, i_x})$ for $\tau\in \mathcal{T}$ are non-zero and separate $\bar R_{i_0}^f$ from $\bar R_{i_x}^f$. So, we can exchange $u^{i_0, i_x}$ with $\tau(u^{i_0, i_x})$ in the presentation of $R_{i_0}^f$ and infer that $v\tau(u^{i_0, i_x})>0$ since $v$ is still in the interior of $\bar R_{i_0}^f$. Hence, $v \tilde{u}^{i_0, i_x}=v\sum_{\tau\in\mathcal{T}} \tau (u^{i_0, i_x})>0$, so $\tilde{u}^{i_0, i_x}$ is a non-zero vector. This implies that $\tau^*(\tilde{u}^{i_x, i_0})$ also is a non-zero vector and we can thus also represent $\bar R_{i_0}^f$ by replacing $u^{i_0, i_x}$ with $\tau^*(\tilde{u}^{i_x, i_0})$. This implies again that $v\tau^*(\tilde{u}^{i_x, i_0})>0$ and we therefore conclude that $v\hat u^{i_0, i_x}=v (\tilde{u}^{i_0, i_x}+\tau^*(\tilde{u}^{i_x, i_0}))>0$, so $\hat u^{i_0, i_x}$ is indeed a non-zero vector.\medskip
    
    \textbf{Auxiliary Claim: Symmetry of $\hat u^{i_0, i_x}$}

    We will first prove that the vectors $\hat u^{i_0, i_x}$ are rather symmetric. In more detail, we will show that $\hat u^{i_0, i_x}_{\ell_1}=\hat u^{i_0, i_x}_{\ell_2}$ for all ballots $B(\ell_1), B(\ell_2)\in \mathcal{A}$ and all $x\in \{1,\dots, z\}$ such that $|B(\ell_1)|=|B(\ell_2)|$ and $|B(\ell_1)\cap X|=|B(\ell_2)\cap X|$ for all $X\in\{X^{i_0\cap i_x},X^{i_0\setminus i_x}, X^{i_x\setminus i_0}\}$. For this, we fix such ballots $B(\ell_1)$, $B(\ell_2)$ and an index $x\in \{1,\dots z\}$. By our assumptions, there is a bijection $\tilde\tau:\mathcal{C}\rightarrow\mathcal{C}$ such that $B(\ell_2)=\tilde\tau(B(\ell_1))$ and $\tilde\tau(X)=X$ for all $X\in \{X^{i_0\cap i_x},X^{i_0\setminus i_x}, X^{i_x\setminus i_0}\}$. By the latter insight, it follows that $\tilde\tau\circ\tau\in\mathcal{T}$ for every permutation $\tau\in\mathcal{T}$. Moreover, for distinct permutations $\tau_1,\tau_2\in\mathcal{T}$, it holds that $\tilde\tau\circ\tau_1\neq\tilde\tau\circ\tau_2$ because there is a candidate $c$ such that $\tau_1(c)\neq\tau_2(c)$. This shows that $\{\tilde\tau\circ\tau\colon \tau\in \mathcal{T}\}=\mathcal{T}$. Since $B(\ell_2)=\tilde\tau(B(\ell_1))$, we derive for every vector $u\in\mathbb{R}^{|\mathcal{A}|}$ that $\tilde\tau(u)_{\ell_2}=u_{\ell_1}$. Consequently, $\tilde u^{i_0, i_x}_{\ell_1}=\sum_{\tau\in\mathcal{T}} \tau(u^{i_0,i_x})_{\ell_1}=\sum_{\tau\in\mathcal{T}} \tilde\tau(\tau(u^{i_0, i_x}))_{\ell_2}=\sum_{\tau\in\mathcal{T}} \tau(u^{i_0, i_x})_{\ell_2}=\tilde u^{i_0, i_x}_{\ell_2}$. This proves that the vector $\tilde u^{i_0, i_x}$ satisfies our symmetry condition, and clearly the vector $\tilde u^{i_x, i_0}=-\tilde{u}^{i_0, i_x}$ satisfies this condition, too. Finally, recall that we choose $\tau^*$ such that $\tau^*(X^{i_0\cap i_x})=X^{i_0\cap i_x}$, $\tau^*(X^{i_0\setminus i_x})=X^{i_x\setminus i_0}$, and $\tau^*(X^{i_x\setminus i_0})=X^{i_0\setminus i_x}$. This means that $|B(\ell)\cap X^{i_0\cap i_x}|=|\tau^*(B(\ell))\cap X^{i_0\cap i_x}|$, $|B(\ell)\cap X^{i_0\setminus i_x}|=|\tau^*(B(\ell))\cap X^{i_x\setminus i_0}|$, and $|B(\ell)\cap X^{i_x\setminus i_0}|=|\tau^*(B(\ell))\cap X^{i_0\setminus i_x}|$. \sloppy{Since $|B(\ell_1)\cap X|=|B(\ell_2)\cap X|$ for all $X\in\{X^{i_0\cap i_x}, X^{i_0\setminus i_x}, X^{i_x\setminus i_0}\}$, the same holds for $\tau^*(B(\ell_1))$ and $\tau^*(B(\ell_2))$ and we infer that $\tau^*(\tilde u^{i_x, i_0})_{\ell_1}=\tau^*(\tilde u^{i_x, i_0})_{\ell_2}$. Finally, this means that $\hat u^{i_0, i_x}_{\ell_1}=\tilde u^{i_0, i_x}_{\ell_1} + \tau^*(\tilde u^{i_x, i_0})_{\ell_1}=\tilde u^{i_0, i_x}_{\ell_2} + \tau^*(\tilde u^{i_x, i_0})_{\ell_2}=\hat u^{i_0, i_x}_{\ell_2}$, which proves our auxiliary claim.}\medskip

    \textbf{Claim (2): $\hat u^{i,j}=-\hat u^{j,i}$}

    Consider two committees $W^i, W^j\in\mathcal{W}_k$, let $x=|W^i\setminus W^j|=|W^j\setminus W^i|$, and fix a ballot $B(\ell)$. We will show that $\hat u^{i,j}_\ell=-\hat u^{j,i}_\ell$ to prove this claim. For this, let $\tau$ denote the permutation such that $\tau(W^{i_0})=W^i$, $\tau(W^{i_x})=W^j$, and $\hat u^{i,j}=\tau(\hat u^{i_0, i_x})$. Similarly, we define $\tau'$ as the permutation with $\tau'(W^{i_0})=W^j$, $\tau'(W^{i_x})=W^i$, and $\hat u^{j,i}=\tau'(\hat u^{i_0, i_x})$. By definition, it holds that $\hat u^{i,j}_\ell=\hat u^{i_0, i_x}_{\ell_1}$ and $\hat u^{j,i}_\ell=\hat u^{i_0, i_x}_{\ell_2}$ for the indices $\ell_1$, $\ell_2$ with $B(\ell)=\tau(B(\ell_1))$ and $B(\ell)=\tau'(B(\ell_2))$. Hence, the claim follows by proving that $\hat u^{i_0, i_x}_{\ell_1}=-\hat u^{i_0, i_x}_{\ell_2}$. For this, we first observe that the condition on $\tau$ and $\tau'$ require that $\tau(X^{i_0\cap i_x})=\tau'(X^{i_0\cap i_x})=X^{i\cap j}$, $\tau(X^{i_0\setminus i_x})=X^{i\setminus j}$, $\tau(X^{i_x\setminus i_0})=X^{j\setminus i}$, $\tau'(X^{i_0\setminus i_x})=X^{j\setminus i}$, and $\tau'(X^{i_x\setminus i_0})=X^{i\setminus j}$. Hence, we infer that $|B(\ell_1)\cap X^{i_0\cap i_x}|=|B(\ell)\cap X^{i\cap j}|=|B(\ell_2)\cap X^{i_0\cap i_x}|$, $|B(\ell_1)\cap X^{i_0\setminus i_x}|=|B(\ell)\cap X^{i\setminus j}|=|B(\ell_2)\cap X^{i_x\setminus i_0}|$, and $|B(\ell_1)\cap X^{i_x\setminus i_0}|=|B(\ell)\cap X^{j\setminus i}|=|B(\ell_2)\cap X^{i_0\setminus i_x}|$. Moreover, it clearly holds that $|B(\ell_1)|=|B(\ell)|=|B(\ell_2)|$. Now, we consider again the permutation $\tau^*$ used in the definition of $\hat u^{i_0, i_x}$ and recall that $\tau^*(W^{i_0})=W^{i_x}$, $\tau^*(W^{i_x})=W^{i_0}$, and $\tau^*(\tau^*(c))=c$ for all $c\in\mathcal{C}$. Furthermore, let $\ell_3$ denote the index such that $B(\ell_3)=\tau^*(B(\ell_1))$ and note that $|B(\ell_2)\cap X|=|B(\ell_3)\cap X|$ for all $X\in\{X^{i_0\cap i_x}, X^{i_0\setminus i_x}, X^{i_x\setminus i_0}, \mathcal{C}\}$. Hence, our auxiliary claim entails that $\hat u^{i_0, i_x}_{\ell_2}=\hat u^{i_0, i_x}_{\ell_3}$. On the other hand, we have by definition that $\tau^*(\tilde u^{i_x,i_0})_{\ell_3}=\tilde u^{i_x, i_0}_{\ell_1}=-\tilde u^{i_0, i_x}_{\ell_1}$ and $\tau^*(\tilde u^{i_x,i_0})_{\ell_1}=\tilde u^{i_x, i_0}_{\ell_3}=-\tilde u^{i_0, i_x}_{\ell_3}$. It is now easy to compute that $\hat u^{i_0, i_x}_{\ell_1}=\tilde u^{i_0, i_x}_{\ell_1}+\tau^*(\tilde u^{i_x,i_0})_{\ell_1}=\tilde u^{i_0, i_x}_{\ell_1}-\tilde u^{i_0, i_x}_{\ell_3}=-(\tilde u^{i_0,i_x}_{\ell_3} - \tilde u^{i_0, i_x}_{\ell_1})=-(\tilde u^{i_0,i_x}_{\ell_3} + \tau^*(\tilde u^{i_x, i_0})_{\ell_3})=-\hat u^{i_0, i_x}_{\ell_3}$. We therefore conclude that $\hat u^{i_0, i_x}_{\ell_1}=-\hat u^{i_0, i_x}_{\ell_3}=-\hat u^{i_0, i_x}_{\ell_2}$, which proves this claim.\medskip

    \textbf{Claim (3): $\hat u^{i',j'}=\hat\tau(\hat u^{i,j})$ if $\tau(W^i)=W^{i'}$ and $\hat \tau(W^j)=W^{j'}$}

    For this claim, we consider four committees $W^i, W^j, W^{i'}, W^{j'}$ and a permutation $\hat \tau$ such that $\hat \tau(W^i)=W^{i'}$ and $\hat \tau(W^j)=W^{j'}$. Moreover, consider two ballots $B(\ell_1), B(\ell_2)\in\mathcal{A}$ such that $B(\ell_1)=\hat \tau(B(\ell_2))$. We will show that $\hat u^{i', j'}_{\ell_1}=\hat u^{i,j}_{\ell_2}$, which implies that $\hat u^{i',j'}=\hat \tau(\hat u^{i,j})$. For this, let $x=|W^i\setminus W^j|=|W^{i'}\setminus W^{j'}|$ and let $\tau$ and $\tau'$ denote the permutations such that $\tau(W^{i_0})=W^{i}$, $\tau(W^{i_x})=W^{j}$, $\tau'(W^{i_0})=W^{i'}$, $\tau'(W^{i_x})=W^{j'}$, $\hat u^{i,j}=\tau(\hat u^{i_0, i_x})$, and $\hat u^{i',j'}=\tau'(\hat u^{i_0, i_x})$. \sloppy{Clearly, there are integers $\ell_3$, $\ell_4$ such that $B(\ell_1)=\tau'(B(\ell_3))$ and $B(\ell_2)=\tau(B(\ell_4))$}. By definition, this means that $\hat u^{i',j'}_{\ell_1}=\tau'(\hat u^{i_0, i_x})_{\ell_1}=\hat u^{i_0, i_x}_{\ell_3}$ and $\hat u^{i,j}_{\ell_2}=\tau(\hat u^{i_0, i_x})_{\ell_2}=\hat u^{i_0, i_x}_{\ell_4}$. Hence, our equality follows by showing that $\hat u^{i_0, i_x}_{\ell_3}=\hat u^{i_0, i_x}_{\ell_4}$. For this, we note that $\hat \tau(X^{i\cap j})=X^{i'\cap j'}$, $\hat \tau(X^{i\setminus j})=X^{i'\setminus j'}$, and $\hat \tau(X^{j\setminus i})=X^{j'\setminus i'}$. Moreover, analogous claims hold for $\tau$ (between $W^{i_0},W^{i_x}$ and $W^{i}, W^j$) and for $\tau'$ (between $W^{i_0}, W^{i_x}$ and $W^{i'}, W^{j'}$). Thus, we can derive the following equalities since permuting sets does not change the size of their set intersection.
    \begin{align*}
        &|B(\ell_3)\!\cap\! X^{i_0\cap i_x}|\!=\!|B(\ell_1)\!\cap\! X^{i'\cap j'}|\\
        &\qquad\qquad\qquad\qquad\!=\!|B(\ell_2)\!\cap\! X^{i\cap j}|\!=\!|B(\ell_4)\!\cap\! X^{i_0\cap i_x}|\\
        &|B(\ell_3)\!\cap\! X^{i_0\setminus i_x}|\!=\!|B(\ell_1)\!\cap\! X^{i'\setminus j'}|\\
        &\qquad\qquad\qquad\qquad\!=\!|B(\ell_2)\!\cap\! X^{i\setminus j}|\!=\!|B(\ell_4)\!\cap\! X^{i_0\setminus i_x}|\\
        &|B(\ell_3)\!\cap\! X^{i_x\setminus i_0}|\!=\!|B(\ell_1)\!\cap\! X^{j'\setminus i'}|\\
        &\qquad\qquad\qquad\qquad\!=\!|B(\ell_2)\!\cap\! X^{j\setminus i}|\!=\!|B(\ell_4)\!\cap\! X^{i_x\setminus i_0}|
    \end{align*}
    
    Finally, we clearly have that $|B(\ell_3)|=|B(\ell_1)|=|B(\ell_2)|=|B(\ell_4)|$, so our auxiliary claim implies that $\hat u^{i_0,i_x}_{\ell_3}=\hat u^{i_0,i_x}_{\ell_4}$. This concludes the proof of this claim.
\end{proof}

After proving \Cref{lem:hyperplanes}, we will next investigate its consequences as we will heavily rely on this lemma. In more detail, as explained in the proof sketches of \Cref{thm:Thiele,thm:BSWAV}, we will frequently consider the hyperplanes $\hat u^{i,j}$ for committees $W^i, W^j$ with $|W^i\setminus W^j|=1$. We thus show in the next lemma that there is a compact representation of these hyperplanes.

\begin{lemma}\label{lem:s1}
    Let $f$ denote a non-imposing ABC voting rule that satisfies anonymity, neutrality, and consistency. For every ballot size $r\in\{1,\dots, m\}$, there is a functions $s_r^1(x,y)$ that satisfies the following claims for all ballots $B(\ell)$ with $|B(\ell)|=r$ and committees $W^i, W^j\in\mathcal{W}_k$ with $|W^i\setminus W^j|=1$. 
    \begin{enumerate}
        \item $\hat u^{i,j}_\ell=s^1_{r}(|B(\ell)\cap W^i|, |B(\ell)\cap W^j|)$.
        \item $s^1_r(|B(\ell)\cap W^i|, |B(\ell)\cap W^j|)=0$ if $|B(\ell)\cap W^i|=|B(\ell)\cap W^j|$.
        \item $s^1_r(|B(\ell)\cap W^i|, |B(\ell)\cap W^j|)=-s^1_r(|B(\ell)\cap W^j|, |B(\ell)\cap W^i|)$.
    \end{enumerate}
\end{lemma}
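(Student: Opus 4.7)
The plan is to derive all three claims from the symmetry properties established in Lemma \ref{lem:hyperplanes}. The guiding intuition is that, for committees with $|W^i \setminus W^j| = 1$, writing $W^i = (W^i \cap W^j) \cup \{a\}$ and $W^j = (W^i \cap W^j) \cup \{b\}$, the entry $\hat u^{i,j}_\ell$ should depend only on how $B(\ell)$ distributes across the four blocks $W^i \cap W^j$, $\{a\}$, $\{b\}$, and $\mathcal{C} \setminus (W^i \cup W^j)$. The ballot size $r$ together with the pair $(|B(\ell) \cap W^i|, |B(\ell) \cap W^j|)$ determines these four block-intersection sizes \emph{except} when the two intersections coincide, where it is undetermined whether both or neither of $a, b$ lie in $B(\ell)$.

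For Claim 1, I would fix $r$ and compare two configurations $(B(\ell_t), W^{i_t}, W^{j_t})$ for $t \in \{1, 2\}$ whose pairs $(|B(\ell_t) \cap W^{i_t}|, |B(\ell_t) \cap W^{j_t}|)$ coincide. Since $|B(\ell_t) \cap W^{i_t}| - |B(\ell_t) \cap W^{j_t}|$ equals $+1$ precisely when $a_t \in B(\ell_t)$ and $b_t \notin B(\ell_t)$, and $-1$ precisely when the reverse holds, a nonzero difference pins down the intersection of $B(\ell_t)$ with $\{a_t, b_t\}$ and hence all four block-intersection sizes. A permutation $\tau$ of $\mathcal{C}$ that bijects each of the four blocks from configuration $1$ to the corresponding block in configuration $2$ in a way that respects $B(\ell_1), B(\ell_2)$ then satisfies $\tau(W^{i_1}) = W^{i_2}$, $\tau(W^{j_1}) = W^{j_2}$, and $\tau(B(\ell_1)) = B(\ell_2)$. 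Claim (3) of Lemma \ref{lem:hyperplanes} applied to $\tau$ gives $\hat u^{i_2, j_2}_{\ell_2} = \tau(\hat u^{i_1, j_1})_{\ell_2} = \hat u^{i_1, j_1}_{\ell_1}$.

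The main obstacle is the diagonal case $|B(\ell) \cap W^i| = |B(\ell) \cap W^j|$, where the argument above stalls because a single pair $(x,x)$ is compatible with two genuinely distinct configurations (either $a, b$ both in $B(\ell)$ or neither). I would resolve this simultaneously with Claim 2 by observing that in \emph{either} subcase the transposition $\sigma$ of $a$ and $b$ fixes $B(\ell)$ setwise while swapping $W^i$ and $W^j$. Claim (3) of Lemma \ref{lem:hyperplanes} applied to $\sigma$ therefore yields $\hat u^{j,i}_\ell = \sigma(\hat u^{i,j})_\ell = \hat u^{i,j}_\ell$, and combining with Claim (2) of the same lemma, namely $\hat u^{j,i}_\ell = -\hat u^{i,j}_\ell$, forces $\hat u^{i,j}_\ell = 0$. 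Hence the entry vanishes on the diagonal irrespective of the subcase, which both removes the ambiguity and proves Claim 2 once we set $s^1_r(x, x) = 0$. Together with the previous paragraph, $s^1_r$ is then well defined on every pair arising from a valid configuration, completing Claim 1.

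Claim 3 is immediate once $s^1_r$ exists: by definition $s^1_r(|B(\ell) \cap W^i|, |B(\ell) \cap W^j|) = \hat u^{i,j}_\ell$, so exchanging the roles of $i$ and $j$ and invoking Claim (2) of Lemma \ref{lem:hyperplanes} translates the identity $\hat u^{i,j}_\ell = -\hat u^{j,i}_\ell$ directly into $s^1_r(x, y) = -s^1_r(y, x)$.
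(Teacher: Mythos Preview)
Your proposal is correct and follows essentially the same approach as the paper's proof: both handle the off-diagonal case by constructing a permutation that maps the four blocks $W^i\cap W^j$, $\{a\}$, $\{b\}$, $\mathcal{C}\setminus(W^i\cup W^j)$ of one configuration to those of the other (invoking Claim~(3) of Lemma~\ref{lem:hyperplanes}), and both dispose of the diagonal case via the transposition of $a$ and $b$ combined with Claims~(2) and~(3) of that lemma to force the entry to vanish. Your treatment of Claim~3 via the antisymmetry $\hat u^{i,j}=-\hat u^{j,i}$ is likewise exactly what the paper does.
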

\begin{proof}
    Let $f$ denote an ABC scoring rule that satisfies all given conditions and let $\hat u^{i,j}$ denote the non-zero vectors given by \Cref{lem:hyperplanes}. Our main goal is to show that $\hat u^{i,j}_\ell=\hat u^{i',j'}_{\ell'}$ for all committees $W^i, W^j, W^{i'}, W^{j'}\in\mathcal{W}_k$ and ballots $B(\ell), B(\ell')\in\mathcal{A}$ such that $|W^i\setminus W^j|=|W^{i'}\setminus W^{j'}|=1$, $|B(\ell)|=|B(\ell')|$, $|B(\ell)\cap W^i|=|B(\ell')\cap W^{i'}|$, and $|B(\ell)\cap W^j|=|B(\ell')\cap W^{j'}|$. Clearly, this implies the existence of the functions $s^1_r$ as we can just define $s^1_r(x,y)=\hat u^{i,j}_\ell$ for arbitrary committees $W^i, W^j\in\mathcal{W}_k$ and a ballot $B(\ell)$ with $|W^i\setminus W^j|=1$, $|B(\ell)|=r$, $|B(\ell)\cap W^i|=x$, and $|B(\ell) \cap W^j|=y$. For proving our claim, we define $\{a\}=W^i\setminus W^j$, $\{b\}=W^j\setminus W^i$, $\{a'\}=W^{i'}\setminus W^{j'}$, and $\{b'\}=W^{j'}\setminus W^{i'}$. Moreover, we use a case distinction with respect to whether $|B(\ell)\cap W^i|=|B(\ell)\cap W^j|$ or not. 

    First, we suppose that $|B(\ell)\cap W^i|=|B(\ell)\cap W^j|$ and consequently also $|B(\ell')\cap W^{i'}|=|B(\ell')\cap W^{j'}|$. In this case, we claim that $\hat u^{i,j}_\ell=\hat u^{i',j'}_{\ell'}=0$ and prove this statement only for $\hat u^{i,j}_\ell$ as the argument for $\hat u^{i',j'}_{\ell'}$ is symmetric. The key insight here is that if $|B(\ell)\cap W^i|=|B(\ell)\cap W^j|$, then either $\{a,b\}\subseteq B(\ell)$ or $\{a,b\}\cap B(\ell)=\emptyset$. Now, let $\tau$ denote the permutation defined by $\tau(a)=b$, $\tau(b)=a$, and $\tau(x)=x$ for all $x\in\mathcal{C}\setminus \{a,b\}$. It is easy to see that $\tau(W^i)=W^j$, $\tau(W^j)=W^i$, and $\tau(B(\ell))=B(\ell)$. Therefore, we can use Claims (2) and (3) of \Cref{lem:hyperplanes} to infer that $-\hat u^{i,j}_{\ell}=\hat u^{j,i}_{\ell}=\tau(\hat u^{i,j})_{\ell}=\hat u^{i,j}_{\ell}$. Clearly, this is only possible if $\hat u^{i,j}_{\ell}=0$, so our claim follows.

    A second case, suppose that $|B(\ell)\cap W^i|\neq|B(\ell)\cap W^j|$. Without loss of generality, we suppose that $|B(\ell)\cap W^i|>|B(\ell)\cap W^j|$. This implies that $|B(\ell)\cap W^i|=|B(\ell')\cap W^{i'}|=|B(\ell')\cap W^{j'}|+1=|B(\ell)\cap W^j|+1$, so $a\in B(\ell)$, $b\not\in B(\ell)$, $a'\in B(\ell')$, and $b'\not\in B(\ell')$. Now, let $\tau$ denote the permutation such that $\tau(a)=a'$, $\tau(b)=b'$, $\tau(W^i\cap W^j)=W^{i'}\cap W^{j'}$, and $\tau(B(\ell))=B(\ell')$; by our assumptions such a permutation exists. Clearly, $\tau(W^i)=W^{i'}$, $\tau(W^j)=W^{j'}$, and $\tau(B(\ell))=B(\ell')$, so Claim (3) of \Cref{lem:hyperplanes} entails that $\hat u^{i',j'}_{\ell'}=\tau(\hat u^{i,j})_{\ell'}=\hat u^{i,j}_{\ell}$. This proves the desired equality. 
    
    By the insights of the last two paragraphs, it follows that there are functions $s_r^1(x,y)$ with $\hat u^{i,j}_\ell=s_{|B(\ell)|}^1(|B(\ell)\cap W^i|,|B(\ell)\cap W^j|)$ for all ballots $B(\ell)$ and committees $W^i$, $W^j$ with $|W^i\setminus W^j|=1$. Moreover, the analysis in the second paragraph immediately implies that $s^1_r(|B(\ell)\cap W^i|, |B(\ell)\cap W^j|)=0$ for all committees $W^i, W^j$ and ballots $B(\ell)$ with $|W^i\setminus W^j|=1$, $|B(\ell)|=r$, and $|B(\ell)\cap W^i|=|B(\ell)\cap W^j|$ because $\hat u^{i,j}_{\ell}=0$. Hence, our functions $s_r^1$ satisfy Claims (1) and (2). Moreover, Claim (2) implies Claim (3) in the case that $|B(\ell)\cap W^i|=|B(\ell)\cap W^j|$.

    Hence, it remains to show that $s^1_r(|B(\ell)\cap W^i|, |B(\ell)\cap W^j|)=-s^1_r(|B(\ell)\cap W^j|, B(\ell)\cap W^i|)$ if $|B(\ell)\cap W^i|\neq |B(\ell)\cap W^j|$. For this, consider again two committees $W^i, W^j\in\mathcal{W}_k$ with $W^i\setminus W^j=\{a\}$, $W^j\setminus W^i=\{b\}$. Moreover, consider a ballot $B(\ell)$ such that $a\in B(\ell)$, $b\not\in B(\ell)$ and let $\tau$ denote the permutation defined by $\tau(a)=b$, $\tau(b)=a$, and $\tau(x)=x$ for all $x\in\mathcal{C}\setminus \{a,b\}$. By Claims (2) and (3) of \Cref{lem:hyperplanes}, we have for the ballot $B(\ell')=\tau(B(\ell))$ that $-\hat u^{i,j}_{\ell'}=\hat u^{j,i}_{\ell'}=\tau(\hat u^{i,j})_{\ell'}=\hat u^{i,j}_\ell$. On the other hand, it is easy to see that $|B(\ell)\cap W^i|=|B(\ell')\cap W^j|$ and $|B(\ell)\cap W^j|=|B(\ell')\cap W^i|$. Hence, we infer that $s^1_r(|B(\ell)\cap W^i|, |B(\ell)\cap W^j|)=\hat u^{i,j}_\ell = -\hat u^{i,j}_{\ell'}=-s^1_r(|B(\ell')\cap W^i|, |B(\ell')\cap W^j|)=-s^1_r(|B(\ell)\cap W^j|, |B(\ell)\cap W^i|)$, which proves this claim.
\end{proof}

Based on the last insight, we can already fully characterize the set of ABC voting rules that are ABC scoring rules if $k=1$ or $k=m-1$. This turns out rather helpful for the proofs of \Cref{thm:Thiele} and \Cref{thm:BSWAV} because it is straightforward to adapt the proof below to show these results when $k\in \{1,m-1\}$. 

\begin{proposition}\label{prop:Bordercase}
    Assume $k=1$ or $k=m-1$. An ABC voting rule is an ABC scoring rule if and only if it satisfies anonymity, neutrality, consistency, continuity, and weak efficiency.
\end{proposition}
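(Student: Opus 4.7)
The forward direction is routine: every ABC scoring rule is anonymous, neutral, consistent, and continuous by standard arguments, and weak efficiency follows from the monotonicity of $s(x,y)$ in $x$, since swapping an unanimously unapproved candidate in a chosen committee for any other candidate cannot decrease the total score.

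For the converse, let $f$ be an ABC voting rule satisfying the five axioms. If $f$ is trivial (i.e.\ $f(A) = \mathcal{W}_k$ for every $A$), then $f$ is the ABC scoring rule induced by $s \equiv 0$. Otherwise, I would first argue that $f$ is non-imposing by a standard combination of non-triviality, anonymity, neutrality, and consistency: roughly, one permutes a non-trivial profile to distinguish any desired committee as a winner from each other committee, and sums the resulting profiles using consistency to isolate a single winner. With non-imposition established, \Cref{lem:domain}, \Cref{lem:hyperplanes}, and \Cref{lem:s1} all apply to $f$.

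The crucial simplification in the border case is that when $k \in \{1, m-1\}$, any two distinct committees in $\mathcal{W}_k$ differ in exactly one element, so every separating hyperplane $\hat u^{i,j}$ is already described by the functions $s^1_r$ from \Cref{lem:s1}. I would define a scoring function $s(x,y)$ on its active range by setting $s(0,y) = 0$ and $s(1,y) = s^1_y(1,0)$ when $k = 1$, and symmetrically $s(y-1,y) = 0$ and $s(y,y) = s^1_y(y, y-1)$ when $k = m-1$. Claims (2) and (3) of \Cref{lem:s1} then immediately yield $\hat u^{i,j}_\ell = s(|B(\ell) \cap W^i|, |B(\ell)|) - s(|B(\ell) \cap W^j|, |B(\ell)|)$ for every $\ell$ and every distinct pair $W^i, W^j$. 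To verify monotonicity of $s$ (for instance, $s^1_r(1,0) \geq 0$ when $k = 1$), I would consider the profile in which every voter submits a common ballot $B_r$ with $c_1 \in B_r$ and $c_2 \notin B_r$: by neutrality the winning singletons must form a union of the two symmetry classes inside and outside $B_r$, and weak efficiency then forces $\{c_1\}$ always to be among the winners, so evaluating $v(A) \cdot \hat u^{\{c_1\},\{c_2\}} \geq 0$ gives the required sign. The case $k = m-1$ is dual.

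Combining \Cref{lem:hyperplanes}(1) with this explicit formula for $\hat u^{i,j}$ yields $f(A) \subseteq \{W \in \mathcal{W}_k : \hat s(A,W) \geq \hat s(A,W') \text{ for all } W' \in \mathcal{W}_k\}$. The main obstacle remaining is the reverse inclusion, which I would handle by an amplification argument. Given a score-maximizer $W$ of $A$, choose a profile $A^*$ such that $v(A^*)$ lies in the interior of $\bar R^f_W$; such a profile exists because the interior of $\bar R^f_W$ contains integer vectors, and at each such vector $\hat g$ must equal $\{W\}$ by the inclusion already proved together with the non-emptiness of $\hat g$. Then $W$ strictly dominates every other committee in score at $A^*$, and hence also at $nA + A^*$ for every $n \geq 1$; since $f(nA + A^*)$ is contained in the score-maximizers of $nA + A^*$, we obtain $f(nA + A^*) = \{W\}$ for every $n \geq 1$. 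Applying continuity to $A$ and $A^*$ produces some $\lambda \in \mathbb{N}$ with $f(\lambda A + A^*) \subseteq f(A)$, so $\{W\} = f(\lambda A + A^*) \subseteq f(A)$ and $W \in f(A)$. This shows $f$ equals the ABC scoring rule induced by $s$.
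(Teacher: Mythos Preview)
Your proposal is correct and follows essentially the same route as the paper: establish non-imposition, invoke \Cref{lem:domain}, \Cref{lem:hyperplanes}, and \Cref{lem:s1}, exploit that any two committees differ in exactly one element when $k\in\{1,m-1\}$ to read off the scoring function from $s^1_r$, and then upgrade the inclusion $f(A)\subseteq f'(A)$ to equality via the continuity trick. The only cosmetic differences are that the paper verifies the monotonicity of $s$ at the very end (after $f=f'$ is known) by a direct weak-efficiency argument on a single ballot, whereas you check it earlier via the hyperplane inequality, and that your justification for the existence of a profile $A^*$ with $v(A^*)\in\text{int }\bar R^f_W$ could be shortened by simply invoking the non-imposition you already proved.
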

\begin{proof}
    It is easy to check that ABC scoring rules satisfy all given axioms. So, we focus on the converse and let $f$ denote an ABC voting rule that satisfies all given axioms for $k=1$; the case that $k=m-1$ follows from similar arguments. First, if $f$ is trivial, it is the ABC scoring rule induced by the score function $s(x,y)=0$. We hence suppose that $f$ is non-trivial. We will first show that $f$ is non-imposing. For this, we note that there is a ballot $A\in\mathcal{A}$ such that $f(A)\neq \mathcal{W}_k$ because of non-triviality and consistency. Let $c,d$ denote candidates such that $\{c\}\in f(A)$, $\{d\}\not \in f(A)$ and consider a permutation $\tau:\mathcal{C}\rightarrow\mathcal{C}$ with $\tau(c)=c$. By neutrality, $\{c\}\in f(\tau(A))$, $\{\tau(d)\}\not\in f(\tau(A))$. Next, consider the profile $A^*$ that consists of a ballot $\tau(A)$ for every permutation $\tau$ with $\tau(c)=c$. By consistency, we infer that $f(A^*)=\bigcap_{\tau:\mathcal{C}\rightarrow\mathcal{C}\colon \tau(c)=c} f(\tau(A))=\{\{c\}\}$. Neutrality implies now that $f$ is non-imposing
 
    Next, we use \Cref{lem:domain} to obtain the function $\hat g:\mathbb{Q}^{|\mathcal{A}|}\rightarrow 2^{\mathcal{W}_k}\setminus \{\emptyset\}$ and define the sets $R_i^f=\{v\in\mathbb{Q}^{|\mathcal{A}|}\colon W^i\in \hat g(v)\}$. In turn, \Cref{lem:hyperplanes} entails the existence of symmetric non-zero vectors $\hat u^{i,j}$ such that $\bar R_i^f=\{v\in \mathbb{R}^{|\mathcal{A}|}\colon \forall j\in \{1,\dots, |\mathcal{W}_k|\}\setminus \{i\}\colon \hat u^{i,j} v\geq 0\}$. Moreover, since $k=1$, it follows for all distinct committees $W^i, W^j\in\mathcal{W}_k$ that $|W^i\setminus W^j|=1$. So, \Cref{lem:s1} applies and shows that there are functions $s_r(x,y)$ such that $\hat u^{i,j}_\ell=s_{|B(\ell)|}(|W^i\cap B(\ell)|, |W^j\cap B(\ell)|)$ for all ballots $B(\ell)\in\mathcal{A}$ and committees $W^i, W^j\in\mathcal{W}_k$. Based on these functions, it is simply to infer the score function $s(x,z)$ of $f$: we define $s(0,z)=0$ and $s(1,z)=s^1_z(1,0)$. It is now easy to check that $s(|W^i\cap B(\ell)|, |B(\ell)|)-s(|W^j\cap B(\ell)|, |B(\ell)|)=s_{|B(\ell)|}^1(|B(\ell)\cap W^i|, |B(\ell)\cap W^j|)$ due to the properties of $s^1_r$ discussed in \Cref{lem:s1}. Consequently, it holds that $\hat u^{i,j} v=\sum_{\ell\in \{1,\dots,|\mathcal{A}|\}} v_\ell s^1(|W^i\cap B(\ell)|, |W^j\cap B(\ell)|, |B(\ell)|)=\sum_{\ell\in \{1,\dots,|\mathcal{A}|\}}  v_\ell (s(|W^i\cap B(\ell)|, |B(\ell)|) - s(|W^j\cap B(\ell)|, |B(\ell)|))$ for all $W^i$, $W^j\in\mathcal{W}_k$ and $v\in\mathbb{R}^{|\mathcal{A}|}$. We thus define $\hat s(v,W)=\sum_{\ell\in \{1,\dots,|\mathcal{A}|\}}  v_\ell s(|W^i\cap B(\ell)|, |B(\ell)|)$ and infer that $\bar R_i^f=\{v\in\mathbb{R}^{|\mathcal{A}|}\colon \forall j\in \{1,\dots, |\mathcal{W}_k|\}\setminus \{i\}\colon \hat u^{i,j} v\geq 0\}=\{v\in \mathbb{R}^{|\mathcal{A}|}\colon\forall j\in \{1,\dots, |\mathcal{W}_k|\}: \hat s(v,W^i)\geq \hat s(v,W^j)\}$. Hence, $f(A)=\hat g(v(A))\subseteq \{W\in\mathcal{W}_k\colon \forall W'\in\mathcal{W}_k: \hat s(A,W)\geq \hat s(A,W')\}:=f'(A)$ for all $A\in\mathcal{A}^*$. 
    
    Next, we will show that this subset relation is an equality. Suppose for this that there is a profile $A$ such that $f(A)\subsetneq f'(A)$ and let $\{d\}\in f'(A)\setminus f(A)$. We note that $f'$ is consistent and non-trivial, so an analogous argument as for $f$ shows that it is non-imposing. Thus, there is a profile $A'$ such that $f'(A')=\{\{d\}\}$. By the consistency of $f'$ and the above subset relation, we have that $f(\lambda A+A')=f'(\lambda A+A')=\{\{d\}\}$ for all $\lambda\in \mathbb{N}$. However, this contradicts the continuity of $f$, which requires that there is $\lambda\in \mathbb{N}$ such that $f(\lambda A+A')\subseteq f(A)$. So, $f$ is the ABC scoring rule induced by $s$. Finally, we show that $s$ is non-decreasing. Otherwise, there is a ballot size $y\in \{1,\dots,m-1\}$ such that $0=s(0,y)>s(1,y)$. Now, consider a single ballot $A$ of size $y$. By definition of $s$ and $f$, $f(A)= \{W\in\mathcal{W}_k\colon W\not\subseteq A\}$. However, this outcome violates weak efficiency, so $s$ needs to be non-decreasing in its first argument.
\end{proof}

\subsection{Proof of \Cref{thm:BSWAV}}\label{app:BSWAV}
We will next turn to the proof of \Cref{thm:BSWAV}: BSWAV rules are the only ABC voting rules that satisfy anonymity, neutrality, consistency, continuity, choice set convexity, and weak efficiency. Since the proof that every BSWAV rule satisfies these axioms is in the main body, we focus here on the converse direction. Unfortunately, this direction is rather involved and we thus introduce several auxiliary lemmas before proving \Cref{thm:BSWAV}. In more detail, we first construct several important auxiliary profiles in \Cref{lem:weakEfficientTriviality} to show that every non-trivial ABC voting rule $f$ that satisfies all of our axioms is non-trivial. This allows us to access the vectors $\hat i^{i,j}$ derived in \Cref{lem:hyperplanes}. By investigating the linear independence of these  vectors, we can then show that $f$ is a BSWAV rule. 

To ease the outlay of our lemmas, we introduce some additional notation. Firstly, we define $\mathcal{F}^1$ as the set of ABC voting rules that satisfy anonymity, neutrality, consistency, continuity, choice set convexity, and weak efficiency (i.e., the axioms required for \Cref{thm:BSWAV}). Secondly, we define the \emph{convex hull} of two committees $W^i, W^j\in\mathcal{W}_k$ as $[W_i, W_j]= \{W\in \mathcal{W}_k: W_i\cap W_j\subseteq W \subseteq W_i \cup W_j \}$.

We start the proof of \Cref{thm:BSWAV} by constructing profiles in which a single candidate is either guaranteed to be chosen or to be not chosen. In more detail, given a candidate $x\in \mathcal{C}$ and a ballot size $r$, we consider the profile $A^{x,r}$ which contains each ballot $A$ with $|A|=r$ and $x\in A$ once, and the profile $A^{-x,r}$ which contains each ballot $A$ with $|A|=r$ and $x\not\in A$ once. 

\begin{lemma}\label{lem:weakEfficientTriviality}
Let $f\in\mathcal{F}^1$ denote a non-imposing ABC voting rule. It holds for all candidates $x\in \mathcal{C}$ and ballot sizes $r\in \{1,\dots,m\}$ that $f(A^{x,r})=\{W\in\mathcal{W}_k\colon x\in W\}$ and $f(A^{-x,r})=\{W\in\mathcal{W}_k\colon x\not\in W\}$ if there is a ballot $A\in\mathcal{A}$ with $|A|=r$ and $f(A)\neq\mathcal{W}_k$.
\end{lemma}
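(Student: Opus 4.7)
I would begin by exploiting the $\mathrm{Stab}(x)$-invariance of the profile $A^{x,r}$ (and of $A^{-x,r}$). Since the stabilizer of $x$ acts transitively on each of the two subsets $\{W:x\in W\}$ and $\{W:x\notin W\}$ of $\mathcal{W}_k$, anonymity and neutrality force $f(A^{x,r})\in\{\mathcal{W}_k,\{W:x\in W\},\{W:x\notin W\}\}$, and analogously for $f(A^{-x,r})$. Because $A^r=A^{x,r}+A^{-x,r}$ is fully symmetric (so $f(A^r)=\mathcal{W}_k$), consistency restricts the compatible pairs $(f(A^{x,r}),f(A^{-x,r}))$ to three: (I) $(\mathcal{W}_k,\mathcal{W}_k)$; (II) $(\{W:x\in W\},\{W:x\notin W\})$, the desired outcome; and (III) the swap of~(II).

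Case~(III) is eliminated by weak efficiency on $A^{-x,r}$: the candidate $x$ is unanimously unapproved in $A^{-x,r}$, so any $W\in f(A^{-x,r})$ containing $x$ would, via weak efficiency, yield the committees $(W\setminus\{x\})\cup\{c'\}\in f(A^{-x,r})$ for every $c'\notin W$; these contain no~$x$, contradicting $f(A^{-x,r})=\{W:x\in W\}$.

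The heart of the argument is ruling out Case~(I), where the hypothesis enters. By neutrality every single ballot $A_0$ of size $r$ satisfies $f(A_0)\neq\mathcal{W}_k$; combining $\mathrm{Stab}(A_0)$-invariance with weak efficiency on $A_0$ (whose unapproved candidates are $\mathcal{C}\setminus A_0$) forces $f(A_0)=\{W:|W\cap A_0|\geq t_1\}$ for some $t_1>\max(0,k+r-m)$. Now assume Case~(I). Since $f(A^{-x,r})=\mathcal{W}_k$, consistency gives $f(A^{-x,r}+P)=f(P)$ for every profile $P$; in particular $f(2A^{-x,r})=\mathcal{W}_k$. Writing the second copy in $2A^{-x,r}=A^{-x,r}+A^{-x,r}$ as the sum $A_1+\cdots+A_p$ of all ballots of size $r$ not containing $x$, and applying consistency one ballot at a time, we obtain $f(2A^{-x,r})=\bigcap_{i=1}^{p}f(A_i)$ provided the running intersection $\bigcap_{i\le j}f(A_i)$ remains nonempty at every intermediate step. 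Together with $f(2A^{-x,r})=\mathcal{W}_k$, this forces every $f(A_i)=\mathcal{W}_k$, contradicting the hypothesis. Hence Case~(I) is ruled out and Case~(II) gives both equalities of the lemma.

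The main obstacle I expect is the sub-case in which the running intersection $\bigcap_{i\le j}f(A_i)$ collapses to $\emptyset$ before we exhaust the $A_i$, so that iterated consistency cannot be pushed all the way through. I would handle this by combining choice set convexity (which further constrains $f(A_0)$ by pushing $t_1$ up towards $\min(k,r)$ and thereby limits how fast the running intersection can shrink) with continuity (which lets one replace a problematic ballot by a large multiple of a simpler profile while controlling the effect on $f$) and non-imposition (which supplies singleton-winner profiles $A_W$ serving as witnesses that yield direct contradictions, e.g.\ via $f(A^{-x,r}+A_W)=\{W\}$ in Case~(I)).
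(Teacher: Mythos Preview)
Your setup (the three cases for the pair $(f(A^{x,r}),f(A^{-x,r}))$ via $\mathrm{Stab}(x)$-invariance and consistency, and the elimination of Case~(III) by weak efficiency) matches the paper's opening exactly. The divergence, and the genuine gap, is in how you rule out Case~(I).

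Your plan is to peel off single ballots from a second copy of $A^{-x,r}$ and use iterated consistency to conclude $\mathcal{W}_k=f(2A^{-x,r})=\bigcap_i f(A_i)$. You correctly identify that this requires the running intersection to stay nonempty, but your proposed fixes do not close the gap. Even if choice set convexity forces $t_1=\min(k,r)$ (so that, say, $f(A_0)=\{W:W\subseteq A_0\}$ when $k\le r$), two size-$r$ ballots $A_1,A_2$ avoiding $x$ can have $|A_1\cap A_2|<k$, whence $f(A_1)\cap f(A_2)=\emptyset$ and the iteration dies at the very first step (take e.g.\ $m=4$, $k=r=2$, $x=c_4$: $f(\{c_1,c_2\})\cap f(\{c_1,c_3\})=\emptyset$). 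Continuity does not help here since there is no asymptotic parameter to exploit, and non-imposition merely gives $f(A^{-x,r}+A_W)=f(A_W)=\{W\}$ via consistency with $\mathcal{W}_k$, which is not a contradiction.

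The paper avoids this obstruction by a different induction: it introduces the profiles $A^{X^+,X^-}$ (all size-$r$ ballots containing $X^+$ and disjoint from $X^-$) and proves $f(A^{X^+,X^-})=\mathcal{W}_k$ by induction on $t=|X^+\cup X^-|$, using the identity
\[
A^{X^+\cup\{x,y\},\,X^-}+A^{X^+,\,X^-\cup\{x\}}+A^{X^+,\,X^-\cup\{y\}}=A^{X^+,\,X^-}+A^{X^+,\,X^-\cup\{x,y\}}
\]
together with the induction hypothesis to show first that $f(A^{X^+\cup\{x,y\},X^-})=f(A^{X^+,X^-\cup\{x,y\}})$ and that all the $f(A^{\{y\},X\setminus\{y\}})$ coincide. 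The nonemptiness needed for consistency is then secured not by controlling $t_1$, but by a separate argument using weak efficiency: one shows that any committee $W$ minimising $|W\cap X|$ lies in $f(A^{\emptyset,X})$, and (by swapping out unapproved candidates, possibly via a second profile $f(A^{\{z\},X\setminus\{z\}})$) that such a committee also lies in $f(A^{\{y\},X\setminus\{y\}})$. This guarantees a common element and lets consistency fire. When $t=\min(r,m-r)$ the profile is a single ballot, yielding the contradiction with the hypothesis $f(A)\neq\mathcal{W}_k$.
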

\begin{proof}
Consider an ABC voting rule $f\in\mathcal{F}^1$, a ballot size $r\in\{1,\dots, m\}$, and a candidate $x\in\mathcal{C}$. Moreover, suppose that there is ballot $A$ such that $|A|=r$ and $f(A)\neq\mathcal{W}_k$. First, this implies that $r\neq m$ because otherwise $A=\mathcal{C}$ and neutrality requires that all committees are chosen. Next, by anonymity and neutrality, there are only three possible outcomes for $f(A^{x,r})$ and $f(A^{-x,r})$: for both of these profiles, either $\{W\in\mathcal{W}_k\colon x\in W\}$, $\{W\in\mathcal{W}_k\colon x\not\in W\}$, or $\mathcal{W}_k$ has to be chosen as all committees in the first two sets are symmetric to each other. Moreover, weak efficiency excludes that $f(A^{-x,r})=\{W\in\mathcal{W}_k\colon x\in W\}$ as this axiom allows us to replace $x$ with any other candidate $y\in\mathcal{C}\setminus \{x\}$. Finally, we note that anonymity and neutrality require that $f(A^{x,r}+A^{-x,r})=\mathcal{W}_k$ because the profile $A^{x,r}+A^{-x,r}$ consists of all ballots of size $r$. Hence, by consistency and our previous observations, we either have that $f(A^{-x,r})=f(A^{x,r})=\mathcal{W}_k$, or $f(A^{-x,r})=\{W\in\mathcal{W}_k\colon x\not\in W\}$ and $f(A^{x,r})=\{W\in\mathcal{W}_k\colon x\in W\}$. Indeed, for all other possible combinations, it holds that $f(A^{x,r})\cap f(A^{-x,r})\neq \emptyset$ and $f(A^{x,r})\cap f(A^{-x,r})\neq\mathcal{W}_k$, so consistency would be violated.

Now suppose for contradiction that $f(A^{-x,r})=f(A^{x,r})=\mathcal{W}_k$. If $r=1$ or $r=m-1$, this conflicts with the assumption that there is a ballot $A$ of size $r$ such that $f(A)\neq\mathcal{W}_k$. The reason for this is that either $A^{x,r}$ or $A^{-x,r}$ only consist of a single ballot and neutrality between $A$ and $A^{x,r}$ (resp. $A^{-x,r}$) then requires that not all committees of size $k$ are chosen. Hence, we assume that $1<r<m-1$. For this case, let $X^+$, $X^-$ denote two disjoint and possibly empty sets of candidates. Moreover, we define $A^{X^+, X^-}$ as the profile containing each ballot $A$ with $|A|=r$, $X^+\subseteq A$, and $X^-\cap A=\emptyset$ once. Note that $A^{X^+, X^-}$ is not the empty profile if $|X^+|\leq r$ and $|X^-|\leq m-r$. Our goal is to prove that $f(A^{X^+, X^-})=\mathcal{W}_k$ for all disjoint sets $X^+, X^-$ by an induction over $t=|X^+\cup X^-|\in \{1,\dots, \min(r, m-r)\}$. When $t=\min(r, m-r)$, then $A^{X^+,X^-}$ consists of a single ballot and thus, this insight conflicts again with neutrality and the assumption that there is a ballot $A$ of size $r$ with $f(A)\neq\mathcal{W}_k$.

Now, the induction basis $t=1$ of our claim follows from our assumptions since $f(A^{-x,r})=f(A^{x,r})=\mathcal{W}_k$, and neutrality allows us to rename $x$ to any other candidate. We therefore assume that the induction hypothesis holds up to some $t\in \{1,\dots, \min(r,m-r)-1\}$ and will prove it for $t+1$. For this, we will first show an auxiliary claim: given two disjoint sets of candidates $X^+$, $X^-$ with $|X^+\cup X^-|=t-1$ and two candidates $x,y\in \mathcal{C}\setminus (X^+\cup X^-)$,
it holds that $f(A^{X^+\cup \{x,y\}, X^-})=f(A^{X^+, X^-\cup \{x,y\}})$ and 
$f(A^{X^+\cup \{x\}, X^-\cup \{y\}})=f(A^{X^+\cup \{y\}, X^-\cup \{x\}})$. We prove here only the first claim as the second one follows analogously. The central observation for the proof is that $A^{X^+\cup \{x,y\}, X^-} + A^{X^+, X^-\cup \{x\}} + A^{X^+, X^-\cup \{y\}} = A^{X^+, X^-} +  A^{X^+, X^-\cup \{x,y\}}$. Moreover, by the induction hypothesis, we know that $f(A^{X^+, X^-\cup \{y\}})=f(A^{X^+, X^-\cup \{x\}})=f(A^{X^+, X^-})=\mathcal{W}_k$. Hence, we infer from consistency that 

\begin{align*}
    &f(A^{X^+\cup \{x,y\}, X^-})\\
    &=f(A^{X^+\cup\{x,y\}, X^-})\!\cap\! f(A^{X^+, X^-\cup \{x\}})
    \!\cap\! f(A^{X^+, X^-\cup \{y\}})\\
    &=f(A^{X^+\cup \{x,y\}, X^-}+A^{X^+, X^-\cup \{x\}}+A^{X^+, X^-\cup \{y\}})\\
    &=f(A^{X^+, X^-}+A^{X^+, X^-\cup \{x,y\}})\\
    &=f(A^{X^+, X^-})\cap f(A^{X^+, X^-\cup \{x,y\}})\\
    &=f(A^{X^+, X^-\cup \{x,y\}}).
\end{align*}

Finally, consider an arbitrary set of candidates $X=\{x_1, \dots, x_{t+1}\}$. By weak efficiency, anonymity, and neutrality, we have that $W\in f(A^{\emptyset, X})$ for all committees $W$ that minimize $|X\cap W|$. Now, consider the profile $A^{\{y\}, X\setminus \{y\}}$ for $y\in X$. First, by our auxiliary claim, we have that $f(A^{\{y\}, X\setminus \{y\}})=f(A^{z, X\setminus \{z\}})$ for all $y,z \in X$. Now, if there is a committee $W\in f(A^{\{y\}, X\setminus \{y\}})$ that minimizes $|W\cap X|$, then $f(A^{\emptyset, X})\cap f(A^{\{y\}, X\setminus \{y\}})\neq \emptyset$, which means that $f(A^{\emptyset, X})\cap f(A^{\{y\}, X\setminus \{y\}})=f(A^{\emptyset, X\setminus \{y\}})=\mathcal{W}_k$ by consistency and the induction hypothesis. 

\sloppy{Hence, suppose next that there are only ballots $W$ in $f(A^{\{y\}, X\setminus \{y\}})$ that do not minimize $|W\cap X|$.} By weak efficiency, we know for every such committee $W$ that we can replace the candidates in $z \in W\cap (X\setminus \{y\})$ with a candidate $z'\in \mathcal{C}\setminus (W\cup X)$ and the resulting committee $W'$ must still be chosen for $A^{\{y\}, X\setminus \{y\}}$. Now, if $|W\cap X|\geq 1$ for each $W\in\mathcal{W}_k$, this means that $f$ chooses a committee with minimal intersection with $X$ as we can exchange all but one candidate in $X$ with candidates from outside $X$. Since this contradicts the assumption that $f$ does not choose such a committee, we suppose next that $|W\cap X|=0$ for some committee. In this case, we can in an committee $W\in f(A^{\{y\}, X\setminus \{y\}})$ first replace all candidates but $y$ by weak efficiency. Then, we look at a second candidate $z\in X$ and use the fact that $f(A^{\{y\}, X\setminus \{y\}})=f(A^{\{z\}, X\setminus \{z\}})$ to also replace $y$. Hence, we derive again that a committee minimizing $|W\cap X|$ is chosen for $f(A^{\{y\}, X\setminus \{y\}})$. So, we have in both cases that $f(A^{\emptyset, X})\cap f(A^{\{y\}, X\setminus \{y\}})\neq \emptyset$ and consistency requires therefore that $f(A^{\emptyset, X})=f(A^{\{y\}, X\setminus \{y\}})=\mathcal{W}_k$. By applying our auxiliary claim to these two profiles, we derive analogous claims for all profiles $A^{X^+, X^-}$ with $X^+\cap X^-=\emptyset$ and $X^+\cup X^-=X$. Finally, since $X$ is chosen arbitrarily, this proves the induction step and we can thus infer that $f(A)=\mathcal{W}_k$ for each ballot $A$ of size $r$. This contradicts our assumptions, so the lemma follows. 
\end{proof}

As a consequence of \Cref{lem:weakEfficientTriviality}, every non-trivial ABC voting rule $f\in\mathcal{F}^1$ is non-imposing. Indeed, for every such voting rule $f$, there is some ballot $A$ such that $f(A)\neq\mathcal{W}_k$; otherwise, consistency requires that $f(A)=\mathcal{W}_k$ for all profiles $A\in\mathcal{A}^*$. Consequently, we can use \Cref{lem:weakEfficientTriviality} to construct a profile $A^{x,r}$ for some ballot size $r$ and candidate $x$ such that $f(A^{x,r})=\{W\in\mathcal{W}_k\colon x\in W\}$. Finally, by consistency, it is easy to infer that $f(A^W)=\{W\}$, where $A^W$ is the profile that consists of all $A^{x,r}$ with $x\in W$. Since the trivial rule is clearly the BSWAV rule induced by $\alpha_r=0$ for all $r\in \{1,\dots,m\}$, we therefore focus on non-imposing ABC voting rules for the rest of the proof. Furthermore, we will restrict our attention to committee sizes $k\in \{2,\dots, m-2\}$. The reason for this is that all ABC voting rules satisfy choice set convexity and all scoring rules are BSWAV rules if $k\in \{1,m-1\}$. Hence, \Cref{prop:Bordercase} implies \Cref{thm:BSWAV} in this case.

Since we will focus on non-imposing rules from now on, we can access the normal vectors $\hat u^{i,j}$ from \Cref{lem:hyperplanes} and the functions $s_r^1$ defined in \Cref{lem:s1}. In particular, we will next investigate these functions $s_r^1$ in more detail and show that they are actually constant. Note that in the next lemma, the set $\mathcal{Q}(k,r)=\{\max(0,k+r-m),\dots, \min(k,r)-1\}$ contains all integers $x$ such that there is a ballot $A$ of size $r$ and two committees committee $W^i, W^j\in\mathcal{W}_k$ such that $|W^i\setminus W^j|=1$, $|A\cap W^i|=x+1$, and $|A\cap W^j|=x$.

\begin{lemma}\label{lem:convexEquidistance}
    Suppose $2\leq k\leq m-2$ and let $f\in\mathcal{F}^1$ denote a non-imposing ABC voting rule. For all ballot sizes $r\in \{1,\dots, m\}$, there is constant $\alpha_r\in\mathbb{R}$ such that $s^1_r(x+1,x)=\alpha_r$ if $x\in \mathcal{Q}(k,r)$. Moreover, if there is a ballot of size $r$ such that $f$ does not choose $\mathcal{W}_k$ on it, then $\alpha_r>0$.
\end{lemma}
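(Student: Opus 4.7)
The plan is to leverage choice set convexity on tied profiles: when two committees $W^i, W^j$ with $|W^i \setminus W^j| = 2$ simultaneously lie in $f(A)$, choice set convexity forces the entire interval $[W^i, W^j]$ into $f(A)$, and in particular \Cref{lem:jointPolyhedron} yields the four vanishing equations $\hat u^{i,p}\cdot v(A) = \hat u^{p,j}\cdot v(A) = \hat u^{i,q}\cdot v(A) = \hat u^{q,j}\cdot v(A) = 0$ for the two intermediates $W^p, W^q \in [W^i, W^j]$. Combining these with the symmetry encoded by \Cref{lem:hyperplanes,lem:s1} will extract the identity $s^1_r(x+2, x+1) = s^1_r(x+1, x)$.

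First I dispose of the trivial case: if every ballot $A$ of size $r$ satisfies $f(A) = \mathcal{W}_k$, consistency extends this to all profiles built only from size-$r$ ballots, whence $\hat u^{i,j}_\ell = 0$ for every $\ell$ with $|B(\ell)| = r$, and $\alpha_r := 0$ works with the positivity clause vacuous. Otherwise, \Cref{lem:weakEfficientTriviality} supplies $f(A^{x,r}) = \{W \in \mathcal{W}_k : x \in W\}$ for each candidate $x$, and by iterated consistency the profile $A_W := \sum_{x \in W} A^{x,r}$, made entirely of size-$r$ ballots, satisfies $f(A_W) = \{W\}$ for every $W \in \mathcal{W}_k$. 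To tie a distance-$2$ pair $W^i, W^j$, I consider $\lambda A_{W^i} + A_{W^j}$ for growing $\lambda \in \mathbb{N}$: continuity forces $W^j$ out eventually, and a standard continuity-consistency sandwich argument produces a multiplicity at which both $W^i$ and $W^j$ are chosen. Choice set convexity then places $W^p$ and $W^q$ into the choice set as well. Crucially, the resulting tied profile still uses only size-$r$ ballots, so the hyperplane scalar products reduce to sums of $s^1_r$-values.

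For adjacent $x, x+1 \in \mathcal{Q}(k, r)$ I pick $W^i, W^j$ with $W^i \setminus W^j = \{a, b\}$ and $W^j \setminus W^i = \{c, d\}$, and a ballot $B$ of size $r$ with $a, b \in B$, $c, d \notin B$, and $|B \cap (W^i \cap W^j)| = x$. Then $|B \cap W^i| = x+2$, $|B \cap W^j| = x$, and $|B \cap W^p| = |B \cap W^q| = x+1$, so ballot $B$ contributes $s^1_r(x+2, x+1)$ to $\hat u^{i,p}$ and $s^1_r(x+1, x)$ to $\hat u^{p,j}$ at the tied profile. A case analysis over subsets $T \subseteq \{a, b, c, d\}$ and base levels $y = |B' \cap (W^i \cap W^j)|$ shows that most contributions to these vectors vanish outright via $s^1_r(z, z) = 0$ and Claim 3 of \Cref{lem:s1}, and the remaining terms are eliminated by forming suitable linear combinations of the four vanishing equations together with averaging over the automorphisms of $W^i \cap W^j$ and $\mathcal{C} \setminus (W^i \cup W^j)$ that preserve the tied profile. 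Induction over $x$ then collapses all $s^1_r(x+1, x)$ to the common value $\alpha_r$. For the positivity claim, take $W^i \ni x$ and $W^j \not\ni x$ with $|W^i \setminus W^j| = 1$: on $A^{x,r}$ we have $W^i \in f(A^{x,r})$ strictly while $W^j \notin f(A^{x,r})$, so \Cref{lem:jointPolyhedron} forces $\hat u^{i,j} \cdot v(A^{x,r}) > 0$, and expanding this sum as a nonnegative-integer-weighted combination of $\alpha_r$-values yields $\alpha_r > 0$.

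The main obstacle I anticipate is the cancellation step in the previous paragraph. Naively pairing $\hat u^{i,p}$ against $\hat u^{q,j}$ or against $\hat u^{p,j}$ and evaluating at $\lambda^* A_{W^i} + A_{W^j}$ turns out to be treacherous: the coefficient $v_{B'} = \lambda^* |B' \cap W^i| + |B' \cap W^j|$ is linear in the two intersection sizes, and the resulting weighted sums of differences $s^1_r(y+2, y+1) - s^1_r(y+1, y)$ can accidentally collapse to zero for every $\lambda^*$, yielding no information. Overcoming this will require a more delicate construction, for instance by also introducing the profiles $A_{W^p}$ and $A_{W^q}$ into the sandwich so as to break the $\lambda^*$-dependent linear symmetry, or by reformulating the argument geometrically through the containment $\bar R_i^f \cap \bar R_j^f \subseteq \bar R_p^f \cap \bar R_q^f$ and analysing the relative positions of the normal cones spanned by the vectors $\hat u^{\bullet, \bullet}$ at these faces.
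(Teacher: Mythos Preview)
Your proposal has a genuine gap, and you correctly identify it yourself in the final paragraph: the cancellation you need from $\hat u^{i,p}\cdot v=0$ at the tied profile $\lambda^* A_{W^i}+A_{W^j}$ does not isolate $s^1_r(x+2,x+1)-s^1_r(x+1,x)$, because the multiplicities $v_{B'}$ depend linearly on $|B'\cap W^i|$ and $|B'\cap W^j|$ and the resulting weighted sums collapse. Averaging over automorphisms of $W^i\cap W^j$ and $\mathcal{C}\setminus(W^i\cup W^j)$ does not help here, since $A_{W^i}+A_{W^j}$ is already invariant under those symmetries.

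The paper resolves this with a different tie construction that you should adopt. Instead of sandwiching between $A_{W^i}$ and $A_{W^j}$, build a \emph{base profile} $A^1$ consisting of one copy of $A^{x,r}$ for each $x\in W^i\cap W^j$ and one copy of $A^{-x,r}$ for each $x\notin W^i\cup W^j$. Consistency gives $f(A^1)=[W^i,W^j]$ directly, and crucially $A^1$ is invariant under the transposition $\tau$ swapping $a_2\leftrightarrow b_2$ (with $W^i\setminus W^j=\{a_1,a_2\}$, $W^j\setminus W^i=\{b_1,b_2\}$). Since $\tau(W^i)=W^\ell:=(W^i\cap W^j)\cup\{a_1,b_2\}$ and $\tau(W^\ell)=W^i$, the symmetry of $\hat u^{i,\ell}$ from \Cref{lem:hyperplanes} yields $v(A^1)\cdot\hat u^{i,\ell}=0$ automatically. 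Now add a two-ballot \emph{probe} $A^2$: one voter approves $\{a_1,a_2\}$ together with $x$ candidates of $W^i\cap W^j$ and $r-x-2$ candidates of $\mathcal{C}\setminus(W^i\cup W^j)$; the second voter uses the same ballot with $\{a_1,a_2\}$ replaced by $\{b_1,b_2\}$. Continuity gives $\lambda$ with $f(\lambda A^1+A^2)\subseteq[W^i,W^j]$, and the full symmetry of $\lambda A^1+A^2$ under all permutations of $\{a_1,a_2\}\leftrightarrow\{b_1,b_2\}$ combined with choice set convexity upgrades this to equality. Hence $(\lambda v^1+v^2)\cdot\hat u^{i,\ell}=0$, and since $v^1\cdot\hat u^{i,\ell}=0$ you get $v^2\cdot\hat u^{i,\ell}=0$. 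But $v^2\cdot\hat u^{i,\ell}=s^1_r(x+2,x+1)+s^1_r(x,x+1)$ by direct evaluation on the two ballots, which is exactly the identity you want.

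A minor point on your positivity argument: $W^j\notin f(A^{x,r})$ only gives $v(A^{x,r})\notin R_j^f$, not $v(A^{x,r})\notin\bar R_j^f$, so you cannot immediately conclude the scalar product is strictly positive. The paper instead takes the profile $A=\sum_{y\in W^i}A^{y,r}$, uses continuity to place $v(A)$ in the \emph{interior} of $\bar R_i^f$, and then reads off strict positivity from \Cref{lem:jointPolyhedron}.
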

\begin{proof}
    Let $f\in\mathcal{F}^1$ denote a non-imposing ABC voting rule and let $s^1_r$ denote the functions derived in \Cref{lem:s1}. Moreover, we consider an arbitrary ballot size $r\in \{1,\dots, m\}$. First, if $f(A)=\mathcal{W}_k$ for all ballots $A$ of size $r$, then $s^1_r(x+1,x)=0$ for all $x\in\mathcal{Q}(k,r)$. Otherwise, there are two committees $W^i, W^j$ and a ballot $B(\ell)$ such that $|B(\ell)|=r$, $|W^i\setminus W^j|=1$, $|A\cap W^i|=|A\cap W^j|+1$ and $s^1_r(|B(\ell)\cap W^i|, |B(\ell)\cap W^j|)\neq 0$. By Claim (1) of \Cref{lem:s1}, this means that $\hat u^{i,j}_{\ell}\neq 0$, where $\hat u^{i,j}$ is one of the vectors derived in \Cref{lem:hyperplanes}. Hence, by Claim (2) of \Cref{lem:hyperplanes}, we either have that $v(A) \hat u^{i,j}<0$ or $v(A)\hat u^{j,i}<0$ for the profile $A$ that only contains ballot $B(\ell)$. We suppose subsequently that $v(A) \hat u^{i,j}<0$. This means that $v(A)\not\in \bar R_i^f$ by Claim (1) of \Cref{lem:hyperplanes} and thus $W^i\not\in \hat g(v(A))=f(A)$ because of the definition of $\bar R_i^f$. However, this contradicts that $f(A)=\mathcal{W}_k$ for all ballots of size $r$ and thus, $s^1_r(x+1,x)=0$ must hold for all $x\in\mathcal{Q}(k,r)$.
    
    Hence, we suppose next that $f$ is non-trivial on ballot size $r$. In this case, we consider two committees $W^i, W^j\in \mathcal{W}_k$ with $|W^i\cap W^j|= k-2$; such committees exist since $2\leq k\leq m-2$. 
    For a simple notation, we further define $W^i\setminus W^j=\{a_1, a_2\}$ and $W^j\setminus W^i=\{b_1,b_2\}$. 
    The main goal for our proof is to show that $s^1_r(x+1,x)=s^1_r(x+2,x+1)$ for all $x, x+1\in \mathcal{Q}(k,r)$. 
    By repeatedly applying this argument, it follows that $s^1_r(x+1,x)=s^1_r(y+1,y)$ for all $x,y\in \mathcal{Q}(k,r)$.
    
    To prove this claim, fix some index $x$ such that $x,x+1\in \mathcal{Q}(k,r)$. In particular, this means that there is a ballot $A$ of size $r$ such that $|W^i\cap A|=x+2$. Now, since $f$ is non-trivial for ballot size $r$, \Cref{lem:weakEfficientTriviality} shows that $f(A^{x,r})=\{W\in\mathcal{W}_k\colon x\in W\}$ and $f(A^{-x,r})=\{W\in\mathcal{W}_k\colon x\in W\}$ for every candidate $x\in\mathcal{C}$. Next, consider the profile $A^1$ that consists of a copy of $A^{x,r}$ for every $x\in W^i\cap W^j$ and of a copy of $A^{-x,r}$ for every $x\in \mathcal{C}\setminus (W^i\cup W^j)$. By consistency, it is easy to verify that $f(A^1)=[W^i, W^j]$. As third step, consider the profile $A^2$ which consists of the following two ballots: the first voter in $A^2$ approves $a_1$, $a_2$, $x$ candidates of $W^i\cap W^j$, and $r-x-2$ candidates of $\mathcal{C}\setminus (W^i\cup W^j)$, and the second voter has the same ballot except that he replaces $a_1$ and $a_2$ with $b_1$ and $b_2$; such a ballot exists as $x+1\in \mathcal{Q}(k,r)$. 
    
    Now, by the continuity of $f$, there is $\lambda\in \mathbb{N}$ such that $f(\lambda A^1+A^2)\subseteq [W^i,W^j]$. Based on choice set convexity, anonymity, and neutrality, we will show that this subset relation is actually an equality. \sloppy{For this, we note that for every permutation $\tau$ with $\tau(\{a_1,a_2\})=\{b_1,b_2\}$, $\tau(\{b_1,b_2\})=\{a_1,a_2\}$, and $\tau(x)=x$ for $x\in\mathcal{C}\setminus \{a_1,a_2,b_1,b_2\}$, it holds that $\tau(\lambda A^1+A^2)=\lambda A^1+A^2$ (possibly after reordering voters).} Hence, anonymity and neutrality show that if $W^i\in f(\lambda A^1+A^2)$, then $W^j\in f(\lambda A^1+A^2)$, and if $W\in  f(\lambda A^1+A^2)$ for $W\in [W^i, W^j]\setminus \{W^i,W^j\}$, then $[W^i, W^j]\setminus \{W^i,W^j\}\subseteq f(\lambda A^1+A^2)$. Now, if $W^i, W^j\in f(\lambda A^1+A^2)$, the choice set convexity immediately shows that $f(\lambda A^1+A^2)= [W^i,W^j]$. On the other hand, if $[W^i, W^j]\setminus \{W^i,W^j\}\subseteq f(\lambda A^1+A^2)$, then the committees $W=\{a_1,b_2\}\cup (W^i\cap W^j)$ and $W'=\{b_1,a_2\}\cup (W^i\cap W^j)$ are chosen. By choice set convexity, we thus infer again that $W^i, W^j\in f(\lambda A^1+A^2)$ because $W^i, W^j\in [W, W']$. Thus, we indeed have $f(\lambda A^1+A^2)=[W^i,W^j]$.

    Now, let $v^1=v(A^1)$, $v^2=v(A^2)$, and $v^*=v(\lambda A^1+A^2)$ denote the vectors corresponding to the profiles $A^1$, $A^2$, and $\lambda A^1+A^2$, respectively. Moreover, consider the committee $W^{\ell}=\{a_1, b_2\}\cup (W^i\cap W^j)$ which lies strictly between $W^i$ and $W^j$. Finally, let $\hat u^{i',j'}$ denote the hyperplanes constructed in \Cref{lem:hyperplanes} and note that $\hat u^{i,\ell}$ can be described by the the functions $s_r^1$. In particular, it holds that $v^2\hat u^{i,\ell}=s_r^1(x+2,x+1)+s_r^1(x,x+1)$ as the first voter in $A^2$ approves $x+2$ members of $W^i$ and $x+1$ members of $W^\ell$, and the second voter approves $x$ members of $W^i$ and and $x+1$ members of $W^\ell$. Our goal is hence to show that $s_r^1(x+2,x+1)+s_r^1(x,x+1)=0$ since Claim (3) in \Cref{lem:s1} then implies that $s_r^1(x+2,x+1)=s_r^1(x+1,x)$. For doing this, we note that $v^*\hat u^{i,\ell}=(\lambda v^1+v^2)\hat u^{i,\ell}$, so it is enough to show that $v^1 \hat u^{i, \ell}=0$ and $v^*\hat u^{i,\ell}=0$. For the latter, we observe that $v^*\in \bar R_i^f$ and $v^*\in \bar R_\ell^f$ since $W^i, W^\ell\in f(\lambda A^1+A^2)=\hat g(v^*)$. Since Claim (1) of \Cref{lem:hyperplanes} shows that $\bar R_{i'}^f=\{v\in\mathbb{R}^{|\mathcal{A}|}\colon \forall j'\in \{1,\dots,|\mathcal{W}_k|\}\setminus \{i'\}\colon v\hat u^{i',j'}\geq 0\}$ for all $i'$, we derive that $v^*\hat u^{i,\ell}=0$. Finally, to show that $v^1\hat u^{i,\ell}=0$, let $\tau$ denote the permutation defined by $\tau(a_2)=b_2$, $\tau(b_2)=a_2$, and $\tau(x)=x$. It can be checked that $\tau(A^{x,r})=A^{x,r}$  and  $\tau(A^{-x,r})=A^{-x,r}$ (up to renaming voters) for every candidate $x\in \mathcal{C}\setminus \{a_2,b_2\}$. Hence, it also holds that $\tau(A^1)=A^1$ (up to renaming voters). On the other hand, we have that $\tau(W^i)=W^\ell$ and $\tau(W^\ell)=W^i$. Hence, we can use Claims (2) and (3) of \Cref{lem:hyperplanes} to compute that $2v^1\hat u^{i,\ell}=v^1\hat u^{i,\ell}+\tau(v^1)\tau(\hat u^{i,\ell})=v^1\hat u^{i,\ell}+v^1\hat u^{\ell,i}=v^1\hat u^{i,\ell}-v^1\hat u^{i,\ell}=0$. Clearly, this implies that $v^1\hat u^{i,\ell}=0$, so it indeed holds that $s_r^1(x+2,x+1)=s_r^1(x+1,x)$, which shows that there are constants $\alpha_r$ such that $s_r^1(x+1,x)=\alpha_r$ for all $x\in\mathcal{Q}(k,r)$.

    Finally, we need to show that $\alpha_r>0$ if there is a ballot $A$ of size $r$ with $f(A)\neq\mathcal{W}_k$. For doing so, we consider two committees $W^i, W^j\in\mathcal{W}_k$ with $|W^i\setminus W^j|=1$. Moreover, let $A$ denote the profile that consists of $A^{x,r}$ for every $x\in W^i$. By consistency and \Cref{lem:weakEfficientTriviality}, it is easy to derive that $f(A)=\{W^i\}$. Moreover, by continuity and consistency, there is $\lambda\in\mathbb{N}$ such that $f(\lambda A+ B(\ell))=\{W^i\}$ for every ballot $B(\ell)\in\mathcal{A}$. This implies for the vector $v(A)$ that it is in the interior of $\bar R_i^f$. By \Cref{lem:hyperplanes}, we hence infer that $v(A) \hat u^{i,j}>0$. Finally, since $|W^i\setminus W^j|=1$, we can represent $\hat u^{i,j}$ by $s_r^1$. Because all ballots in $A$ have size $r$ and the symmetry properties of $s_r^1$ identified in \Cref{lem:s1}, it thus holds that $v \hat u^{i,j}=\alpha_r c_1 - \alpha_r c_2$, where $c_1$ states how many voters in $A$ approve more candidates in $W^i$ then in $W^j$ and $c_2$ counts how many candidates prefer more candidates in $W^j$ than in $W^i$. Finally, by the construction of $A$, it is easy to see that $c_1>c_2$, so $v(A)\hat u^{i,j}>0$ implies that $\alpha_r>0$.
\end{proof}

Note that \Cref{lem:convexEquidistance} has strong consequences for the vectors $\hat u^{i,j}$ constructed in \Cref{lem:hyperplanes}, in particular if we consider committees $W^i$, $W^j$, $W^{i'}, W^{j'}\in\mathcal{W}_k$ such that $W^i\setminus W^j=W^{i'}\setminus W^{j'}=\{a\}$ and $W^j\setminus W^i=W^{j'}\setminus W^{i'}=\{b\}$. For these committees, the lemma shows that $\hat u^{i,j}_\ell=s_{|B(\ell)|}(|B(\ell)\cap W^i|, |B(\ell)\cap W^j|)=s_{|B(\ell)|}(|B(\ell)\cap W^{i'}|, |B(\ell)\cap W^{j'}|)=\hat u^{i',j'}_\ell$ for every ballot $B(\ell)\in\mathcal{A}$ with $a\in B(\ell)$, $b\not\in B(\ell)$. Hence, by \Cref{lem:s1}, it thus follows that $\hat u^{i,j}=\hat u^{i',j'}$. To reflect this insight, we change from now on the notation from $\hat u^{i,j}$ to $\hat u^{a,b}$, where $a$ and $b$ are the candidates such that $W^i\setminus W^j=\{a\}$, $W^j\setminus W^i=\{b\}$.

In the next lemma, we apply this insight to derive some auxiliary claims on the profiles $A^{x,r}$ and $A^{-x,r}$.

\begin{lemma}\label{lem:uabProperties}
     Suppose $2\leq k\leq m-2$ and let $f\in\mathcal{F}^1$ denote a non-imposing ABC voting rule. The following claims hold for all distinct candidates $a,b,c\in\mathcal{C}$ and ballot sizes $r\in \{1,\dots, m\}$:
    \begin{enumerate}
        \item $\hat u^{a,b}v(A^{c,r}) = \hat u^{a,b}v(A^{-c,r})=0$
        \item $\hat u^{a,b}v(A^{b,r})=-\hat u^{b,a} v(A^{b,r})$ and $\hat u^{a,b}v(A^{-b,r})=-\hat u^{b,a} v(A^{-b,r})$
        \item $\hat u^{a,b}v(A^{a,r})>0$ and $\hat u^{a,b}v(A^{-a,r})<0$ if there is a ballot $A$ of size $r$ with $f(A)\neq\mathcal{W}_k$
        \item $\hat u^{a,b}+ \hat u^{b,c} = \hat u^ {a,c}$
    \end{enumerate}
\end{lemma}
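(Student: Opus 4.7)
My plan is to first derive a compact piecewise description of the entries of $\hat u^{a,b}$, and then deduce all four claims by direct computation from that description. Combining \Cref{lem:s1} (which writes $\hat u^{a,b}_\ell$ as $s^1_{|B(\ell)|}(|W^i\cap B(\ell)|,|W^j\cap B(\ell)|)$ for any underlying committees $W^i,W^j$ with $W^i\setminus W^j=\{a\}$ and $W^j\setminus W^i=\{b\}$) with \Cref{lem:convexEquidistance} (which makes $s^1_r$ constantly equal to $\alpha_r$ on all adjacent pairs), I expect to show that for every ballot $B(\ell)$ of size $r$,
\[
\hat u^{a,b}_\ell=
\begin{cases}
\alpha_r & \text{if } a\in B(\ell) \text{ and } b\notin B(\ell),\\
-\alpha_r & \text{if } a\notin B(\ell) \text{ and } b\in B(\ell),\\
0 & \text{otherwise.}
\end{cases}
\]
The zero cases follow from \Cref{lem:s1}(2) because both $\{a,b\}\subseteq B(\ell)$ and $\{a,b\}\cap B(\ell)=\emptyset$ force $|W^i\cap B(\ell)|=|W^j\cap B(\ell)|$; the asymmetric cases reduce to $\pm s^1_r(y+1,y)=\pm\alpha_r$ via \Cref{lem:s1}(3) and \Cref{lem:convexEquidistance}, with $y=|(W^i\cap W^j)\cap B(\ell)|$ automatically lying in the range $\mathcal{Q}(k,r)$ thanks to the size constraints on $W^i,W^j,B(\ell)$.

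For Claim 1, I would take the transposition $\tau$ that swaps $a$ with $b$ and fixes all other candidates. Because $c\neq a,b$, both $A^{c,r}$ and $A^{-c,r}$ are $\tau$-invariant up to renaming voters, so $\tau(v(A^{c,r}))=v(A^{c,r})$ and $\tau(v(A^{-c,r}))=v(A^{-c,r})$. At the same time, $\tau$ exchanges the two committees underlying $\hat u^{a,b}$, so Claims 2 and 3 of \Cref{lem:hyperplanes} give $\tau(\hat u^{a,b})=\hat u^{b,a}=-\hat u^{a,b}$. Invariance of the scalar product under a joint permutation then yields $\hat u^{a,b} v=-\hat u^{a,b} v$, forcing the value to be $0$. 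Claim 2 is then an immediate restatement of $\hat u^{a,b}=-\hat u^{b,a}$ (Claim 2 of \Cref{lem:hyperplanes}) applied to the specific vectors $v(A^{b,r})$ and $v(A^{-b,r})$.

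For Claim 3, I would simply expand $\hat u^{a,b} v(A^{a,r})$ using the piecewise formula: among size-$r$ ballots containing $a$, only those that additionally omit $b$ contribute, each with weight $\alpha_r$. Under the hypothesis, \Cref{lem:convexEquidistance} gives $\alpha_r>0$, and at least one such ballot exists since $r\le m-1$ (the case $r=m$ is excluded because then $\mathcal{C}$ would be the only size-$r$ ballot and neutrality would force $f(\mathcal{C})=\mathcal{W}_k$). The analogous count with the roles of $a$ and $b$ swapped gives $\hat u^{a,b} v(A^{-a,r})<0$. Finally, Claim 4 reduces to the pointwise identity $\hat u^{a,b}_\ell+\hat u^{b,c}_\ell=\hat u^{a,c}_\ell$, which I would verify by a short case analysis over the eight possibilities for which of $a,b,c$ lie in $B(\ell)$; in each case all three entries lie in $\{-\alpha_{|B(\ell)|},0,\alpha_{|B(\ell)|}\}$ and the equality holds mechanically. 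I expect the only mild obstacle to be the bookkeeping needed to establish the piecewise formula cleanly; after that is in place, none of the four claims requires any further structural insight beyond \Cref{lem:s1,lem:convexEquidistance,lem:hyperplanes}.
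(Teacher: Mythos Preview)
Your proposal is correct and follows essentially the same approach as the paper: the permutation swapping $a$ and $b$ for Claim~1, the antisymmetry $\hat u^{a,b}=-\hat u^{b,a}$ from \Cref{lem:hyperplanes} for Claim~2, the ballot count with $\alpha_r>0$ from \Cref{lem:convexEquidistance} for Claim~3, and the eight-case check for Claim~4. The only cosmetic difference is that you front-load the piecewise formula $\hat u^{a,b}_\ell\in\{\alpha_r,-\alpha_r,0\}$ as a unifying observation, whereas the paper derives the needed pieces inline within each claim; this makes your write-up slightly cleaner but is not a substantively different argument.
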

\begin{proof}
Let $f\in\mathcal{F}^1$ denote a non-imposing ABC voting rule and fix three candidates $a,b,c$ and a ballot size $r$. Moreover, we let $X$ denote a set of $k-1\leq m-3$ candidates with $\{a,b,c\}\cap X=\emptyset$ and define the committees $W^{a}=X\cup \{a\}$, $W^{b}=X\cup \{b\}$, and $W^{c}=X\cup \{c\}$.\medskip

\textbf{Claim (1):} \sloppy{The claim follows by considering the permutation $\tau$ with $\tau(a)=b$, $\tau(b)=a$, and $\tau(x)=x$ for all $x\in\mathcal{C}$. Then, it is easy to see that $\tau(A^{c,r})=A^{c,r}$ (up to renaming voters) and $\tau(W^a)=W^b$, $\tau(W^b)=W^a$. Hence, we get that $2v(A^{c,r})\hat u^{a,b}=v(A^{c,r})\hat u^{a,b} +\tau(v(A^{c,r}))\tau(\hat u^{a,b})=v(A^{c,r})\hat u^{a,b} -v(A^{c,r})\hat u^{a,b}=0$ by using \Cref{lem:s1}. This implies that $v(A^{c,r})\hat u^{a,b}=0$ and an analogous argument works for $A^{-c,r}$.}\medskip

\textbf{Claim (2):} The claim follows immediately from Claim (2) in \Cref{lem:hyperplanes} because $\hat u^{a,b}=-\hat u^{b,a}$.\medskip

\textbf{Claim (3):} We focus on the profile $A^{a,r}$ as the claim for $A^{-a,r}$ can be shown analogously. Hence, note that for every ballot $B(\ell)$ in the profile $A^{a,r}$, either $a,b\in B(\ell)$ or $a\in B(\ell)$, $b\not\in B(\ell)$. By Claim (2) of \Cref{lem:s1}, we infer that $\hat u^{a,b}_\ell=0$ if $a,b\in B(\ell)$ and by \Cref{lem:convexEquidistance}, we infer that $u^{a,b}_\ell=\alpha_r$ if $a\in B(\ell)$, $b\not\in B(\ell)$ for some constant $\alpha_r$. Hence, $v(A^{a,r})\hat u^{a,b}=n_r \alpha_r$, where $n_r>0$ states the number of ballots $B(\ell)$ in $A^{a,r}$ with $a\in B(\ell)$, $b\not\in B(\ell)$. Finally, if there is a ballot $A$ of size $r$ with $f(A)\neq\mathcal{W}_k$, \Cref{lem:convexEquidistance} shows that $\alpha_r>0$ and the claim follows.\medskip

\textbf{Claim (4):} For this claim, we consider a ballot $B(\ell)$ and note that $\hat u^{x,y}_\ell=s^1_{|B(\ell)|}(|B(\ell)\cap W^x|, |B(\ell)\cap W^y|)$ for all distinct $x,y\in \{a,b,c\}$. The statement now follows by considering the $8$ cases enumerating whether $a\in B(\ell)$, $b\in B(\ell)$, and $c\in B(\ell)$. For instance, if $a,c\in B(\ell)$, $b\not\in B(\ell)$, then $\hat u^{a,c}_\ell=0$ (by Claim (2) of \Cref{lem:s1}) and $\hat u^{a,b}_\ell=s_{|B(\ell)|}^1(|B(\ell)\cap W^a|, |B(\ell)\cap W^b|)=-s_{|B(\ell)|}^1(|B(\ell)\cap W^b|,|B(\ell)\cap W^c|)=-\hat u^{b,c}_\ell$ (by Claim (3) of \Cref{lem:s1}). The remaining cases work similar and we leave them to the reader. 
\end{proof}

We now turn to the central part of the proof of \Cref{thm:BSWAV}. For this, consider two committees $W^{i}$, $W^{j}$ with $|W^{i}\setminus W^{j}|=t>1$ and let $\{a_1,\dots, a_t\}=W^{i}\setminus W^j$ and $\{b_1,\dots, b_t\}=W^j\setminus W^i$. Our main goal is to show that the vector $\hat u^{i,j}$ can be represented as the sum of all vectors $\hat u^{a_x, b_x}$ for $x\in \{1,\dots, t\}$ as this will allow us to represent the underlying voting rule as BSWAV rule.

To this end, we first show the linear independence of a large set of vectors $\hat u^{a,b}$. 

\begin{lemma}\label{lem:convexLinearIndependence}
    Suppose $2\leq k\leq m-2$ and let $f\in\mathcal{F}^1$ denote a non-imposing ABC voting rule. Moreover, consider $2t$ distinct candidates $a_1, b_1, \dots, a_t, b_t$. The set $U=\{\hat u^{a_1,b_1}, \hat u^{a_2,b_2}, \dots, \hat u^{a_t,b_t}\}\cup \{\hat u^{a_1,a_2}, \hat u^{a_2,a_3}, \dots, \hat u^{a_{t-1},a_t}\}$ is linearly independent.
\end{lemma}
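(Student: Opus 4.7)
The plan is to assume an arbitrary linear combination $\sum_{x=1}^{t} \lambda_x \hat u^{a_x,b_x} + \sum_{x=1}^{t-1} \mu_x \hat u^{a_x,a_{x+1}} = 0$ and then knock out the coefficients one class at a time by taking inner products against carefully chosen probe vectors from Lemma~\ref{lem:uabProperties}. Since $f$ is non-imposing, anonymity and consistency guarantee at least one single-ballot profile with $f(A)\neq \mathcal{W}_k$; let $r=|A|$ so that Claim~(3) of Lemma~\ref{lem:uabProperties} is available for this ballot size (and for every pair of distinct candidates, by neutrality).

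First I would eliminate the $\lambda_j$'s by dotting the identity with $v(A^{b_j,r})$ for each $j\in\{1,\dots,t\}$. Because the $2t$ candidates $a_1,b_1,\dots,a_t,b_t$ are distinct, $b_j$ is different from both $a_x$ and $a_{x+1}$ for every $x$, so Claim~(1) annihilates every term of the second sum. Claim~(1) also annihilates $\hat u^{a_x,b_x}v(A^{b_j,r})$ whenever $x\neq j$, leaving only the term $\lambda_j \hat u^{a_j,b_j}v(A^{b_j,r})$. Claim~(2) rewrites this as $-\lambda_j\hat u^{b_j,a_j}v(A^{b_j,r})$, and Claim~(3) shows $\hat u^{b_j,a_j}v(A^{b_j,r})>0$. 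Hence $\lambda_j=0$ for every $j$.

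Next I would handle the remaining identity $\sum_{x=1}^{t-1}\mu_x\hat u^{a_x,a_{x+1}}=0$ by induction on $j$, testing against $v(A^{a_j,r})$ for $j=1,2,\dots,t-1$. Claim~(1) kills $\hat u^{a_x,a_{x+1}}v(A^{a_j,r})$ unless $a_j\in\{a_x,a_{x+1}\}$, i.e.\ unless $x\in\{j-1,j\}$. Claim~(3) yields $\hat u^{a_j,a_{j+1}}v(A^{a_j,r})>0$, while Claims~(2) and~(3) combine to give $\hat u^{a_{j-1},a_j}v(A^{a_j,r})<0$ when $j\geq 2$. Thus for $j=1$ only the $\mu_1$ term survives and must vanish; for $j\geq 2$ the inductive hypothesis $\mu_{j-1}=0$ again leaves only the $\mu_j$ term, forcing $\mu_j=0$. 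This proves linear independence of $U$.

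The main obstacle is really the bookkeeping: one must check that the chosen probes $v(A^{b_j,r})$ and $v(A^{a_j,r})$ interact with each summand in exactly the predicted way. The distinctness of the $2t$ candidates is what makes Claim~(1) wipe out nearly every term, so the argument truly needs the hypothesis of the lemma; and the non-triviality at ballot size $r$ is what turns the surviving terms into strict inequalities rather than mere equalities, which is exactly what powers the cascade.
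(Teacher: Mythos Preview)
Your argument is correct and follows essentially the same strategy as the paper: both use the probe profiles $A^{c,r}$ together with Lemma~\ref{lem:uabProperties} to isolate individual coefficients. The only organisational difference is that the paper exhibits, for each vector in $U$, an explicit test functional that vanishes on all the others (using $v(A^{-b_i,r})$ and the composite profiles $\sum_{j\le i}(v(A^{a_j,r})+v(A^{b_j,r}))$), whereas you eliminate the $\lambda_j$'s first with $v(A^{b_j,r})$ and then peel off the $\mu_j$'s inductively with $v(A^{a_j,r})$; your triangular cascade is arguably a bit cleaner for the second block.
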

\begin{proof}
    Let $f\in\mathcal{F}^1$ denote a non-imposing ABC voting rule. Moreover, consider $2t$ candidates $a_1, b_1,\dots, a_t, b_t$ and let $M$ be defined as in the lemma. For the proof of the lemma, we put the vectors in $U$ as rows into a matrix $M\in\mathbb{R}^{2t-1\times |\mathcal A|}$. In more detail, let row $i$ of $M$ with $0<i\leq t$ be given by $\hat u^{a_i,b_i}$ and let row $t+i$ with $0<i<t$ be given by $\hat u^{a_i, a_{i+1}}$.
    We want to show that for each $i\leq 2t -1$ there is a vector $v\in\mathbb{R}^{|\mathcal{A}|}$ such that $M v = w$ satisfies that $w_i \neq 0$ and $w_j= 0$ for all $j\neq i$. Then, the dimension of the image of $M$ is $2t-1$, which is the column rank of the matrix. Since it is a basic fact in linear algebra that the column rank equals the row rank, this means that the vectors in $U$ are linearly independent.

    For showing this claim, we first observe that there is a ballot size $r$ such that $f(A^{x,r})=\{W\in\mathcal{W}_k\colon x\in W\}$ and $f(A^{-x,r})=\{W\in\mathcal{W}_k\colon x\in W\}$ by \Cref{lem:weakEfficientTriviality} and the non-imposition of $f$. Based on the claims in \Cref{lem:uabProperties}, we define now the vectors $v^i$ that satisfy our constraints for the rows $i\in \{1,\dots,t\}$. In more detail, it suffices to consider the profile $A^{-b_i,r}$ and its corresponding vector $v^{-b_i}$ for this. Indeed, we note here that for all vectors $\hat u\in U$ but $\hat u^{a_i, b_i}$, it follows that $v^{-b_i} \hat u=0$ by Claim (1) in \Cref{lem:uabProperties}. On the other hand, Claims (2) and (3) in this lemma show that $v^{-b_i} \hat u^{a_i, b_i}\neq 0$, so $v^{-b_i}$ satisfies our requirements. 
    
    \sloppy{For the other $t-1$ rows, we consider a sightly more complicated profile: let $A^i$ denote the profile that consists of $A^{a_j, r}$ and $A^{b_j, r}$ for all $j\leq i$ and let $v^i$ denote the corresponding vector. Using Claim (1) of \Cref{lem:uabProperties}, we infer for all $\hat u^{a_j,b_j}$ with $j\leq i$ that $v^i\hat u^{a_j,b_j}=
    \sum_{\ell=1}^i (v(A^{a_\ell,r})+v(A^{b_\ell,r}))\hat u^{a_j, b_j}=(v(A^{a_j,r})+v(A^{b_j,r}))\hat u^{a_j, b_j}$. Now by the choosing $\tau$ with $\tau(a_j)=b_j$, $\tau(b_j)=a_j$, and $\tau(x)=x$ for all other candidates, we get that $(v(A^{a_j,r})+v(A^{b_j,r}))\hat u^{a_j, b_j}=v(A^{a_j,r})\hat u^{a_j, b_j}+\tau(v(A^{b_j,r}))\tau(\hat u^{a_j, b_j})=v(A^{a_j,r})\hat u^{a_j, b_j}+v(A^{a_j,r}) \hat u^{b_j, a_j}=0$, where the last equality uses Claim (2) of \Cref{lem:uabProperties}.} An analogous argument also applies for the vectors $\hat u^{a_j, a_{j+1}}$ with $j<i$. Next, by Claim (1) of \Cref{lem:uabProperties}, we infer also $v^i\hat u^{a_j,b_j}=0$ and $v^i\hat u^{a_j, a_{j+1}}$ for $j>i$. Finally, consider the vector $\hat u^{a_i, a_{i+1}}$. By Claim (1), we have that $v^1\hat u^{a_i, a_{i+1}}=\sum_{j=1}^i (v(A^{a_j,r}) + (A^{b_j,r}))\hat u^{a_i, a_{i+1}}=v(A^{a_i,r})\hat u^{a_i, a_{i+1}}$. By the same argument as in the last paragraph, this is non-zero, thus proving the lemma.
\end{proof}

As next step, we show that the linear independence observed in \Cref{lem:convexLinearIndependence} turns into a linear dependence once we add a vector $\hat u^{i,j}$ with $W^i\setminus W^j=\{a_1,\dots, a_t\}$ and $W^j\setminus W^i=\{b_1,\dots, b_t\}$

\begin{lemma}\label{lem:convexLinearDependence}
    Suppose $2\leq k\leq m-2$ and let $f\in\mathcal{F}^1$ denote a non-imposing ABC voting rule. Moreover, consider $2t$ distinct candidates $a_1, b_1, \dots, a_t, b_t$, 
    and two committees $W^{i}, W^{j}\in\mathcal{W}_k$ with $W^{i}\setminus W^{j} = \{a_1, \dots, a_t\}$ and $W^{j}\setminus W^{i} = \{b_1, \dots, b_t\}$.
    \sloppy{The set $U=\{\hat u^{a_1,b_1}, \hat u^{a_2,b_2}, \dots, \hat u^{a_t,b_t}\}\cup \{\hat u^{a_1,a_2}, \hat u^{a_2,a_3}, \dots, \hat u^{a_{t-1},a_t}\}\cup \{\hat u^{i, j}\}$ is linearly dependent.} 
\end{lemma}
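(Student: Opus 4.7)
The plan is to establish the explicit linear dependence $\hat u^{i,j} = \delta \sum_{x=1}^{t} \hat u^{a_x, b_x}$ for some $\delta > 0$, which immediately yields the linear dependence of $U$. I set up the ``path'' of committees $W^{j_0} = W^i$ and $W^{j_x} = (W^{j_{x-1}} \setminus \{a_x\}) \cup \{b_x\}$ for $x \in \{1,\dots,t\}$, so that $W^{j_t} = W^j$ and $|W^{j_{x-1}} \setminus W^{j_x}| = 1$. By the convention adopted after \Cref{lem:convexEquidistance}, $\hat u^{j_{x-1}, j_x} = \hat u^{a_x, b_x}$.

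Next, I compute the entries of the conjectured right-hand side. By \Cref{lem:s1} and \Cref{lem:convexEquidistance}, for any ballot $B(\ell)$ the entry $\hat u^{a_x, b_x}_\ell$ equals $\alpha_{|B(\ell)|}$ if $a_x \in B(\ell)$ and $b_x \notin B(\ell)$, equals $-\alpha_{|B(\ell)|}$ if $a_x \notin B(\ell)$ and $b_x \in B(\ell)$, and equals $0$ otherwise. Summing over $x$,
\[
\sum_{x=1}^{t} \hat u^{a_x, b_x}_\ell \;=\; \alpha_{|B(\ell)|} \bigl(|B(\ell)\cap(W^i\setminus W^j)| - |B(\ell)\cap(W^j\setminus W^i)|\bigr),
\]
a quantity depending only on the ballot size $r:=|B(\ell)|$ and on the signed difference $q-r'$, where $q:=|B(\ell)\cap(W^i\setminus W^j)|$ and $r':=|B(\ell)\cap(W^j\setminus W^i)|$. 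On the other hand, Claim (3) of \Cref{lem:hyperplanes} forces $\hat u^{i,j}_\ell$ to depend only on the tuple $(p,q,r',s)$ of intersection sizes of $B(\ell)$ with the four blocks $W^i\cap W^j$, $W^i\setminus W^j$, $W^j\setminus W^i$ and $\mathcal{C}\setminus(W^i\cup W^j)$.

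The crux is to upgrade this symmetry to the finer statement that $\hat u^{i,j}_\ell$ depends only on $(r, q-r')$ and is proportional to $\alpha_r$. Mimicking the argument of \Cref{lem:convexEquidistance}, I pick a ballot size $r$ on which $f$ is non-trivial (one exists by non-imposition and non-triviality) and form the saturation profile $A^* = \sum_{c\in W^i\cap W^j} A^{c,r} + \sum_{c\notin W^i\cup W^j} A^{-c,r}$. Consistency together with \Cref{lem:weakEfficientTriviality} yields $f(A^*)=[W^i,W^j]$, so $v(A^*)\hat u^{W,W'}=0$ for every pair $W,W'\in[W^i,W^j]$. I then perturb $A^*$ by adding symmetric pairs of ``mirror ballots''---direct analogues of the profile $A^2$ used in \Cref{lem:convexEquidistance}, each swapping a block between $W^i\setminus W^j$ and $W^j\setminus W^i$ while keeping the intersection type with $W^i\cap W^j$ and the outside block fixed---and invoke choice set convexity together with anonymity and neutrality to conclude that $f$ still returns $[W^i,W^j]$ on the perturbed profile. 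The resulting orthogonality relations, combined with the antisymmetry $\hat u^{i,j}_\ell = -\hat u^{j,i}_\ell$, collapse the dependence of $\hat u^{i,j}_\ell$ on $(p,s)$ and reduce its dependence on $(q,r')$ to the single variable $q-r'$, with coefficient $\delta\,\alpha_r$ for a scalar $\delta$ independent of $\ell$.

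The main obstacle is engineering these perturbations carefully enough that CSC applies and the system of equations is rich enough to eliminate all but the $(r, q-r')$ dependence; this is a direct, if technical, extension of the $t=1$ argument used for \Cref{lem:convexEquidistance}. Once the form is established, matching components yields $\hat u^{i,j} = \delta \sum_{x=1}^{t} \hat u^{a_x, b_x}$ for some $\delta\in\mathbb{R}$, and positivity of $\delta$ follows by evaluating the identity on a profile whose unique winner is $W^i$ (obtained as in Claim (3) of \Cref{lem:uabProperties}). The linear dependence of $U$ is then immediate.
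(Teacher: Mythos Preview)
Your proposal aims for the stronger conclusion of \Cref{lem:convexAdditivity} (the explicit formula $\hat u^{i,j}=\delta\sum_x\hat u^{a_x,b_x}$), which would indeed imply the lemma. The paper proceeds very differently: it assumes $U$ is linearly independent, uses full rank of the resulting matrix to manufacture a vector $v$ with $v\hat u^{a_x,b_x}=1$, $v\hat u^{a_x,a_{x+1}}=0$, $v\hat u^{i,j}=-1$, then combines $v$ with the saturation profile $A^1$ to produce a point lying in no $\bar R_\ell^f$, contradicting $\bigcup_\ell\bar R_\ell^f=\mathbb{R}^{|\mathcal{A}|}$. No entry of $\hat u^{i,j}$ is ever computed.

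The gap in your approach is the ``crux'' paragraph. Writing $\hat u^{i,j}_\ell=g(p,q,r',s)$ for the four block-intersection sizes, antisymmetry gives $g(p,q,r',s)=-g(p,r',q,s)$, and that is \emph{all} your mirror-pair perturbations yield: the perturbed profile $\lambda A^*+A^2$ is invariant under the swap $a_x\leftrightarrow b_x$, so neutrality plus choice set convexity force $f=[W^i,W^j]$ and hence $v(A^2)\hat u^{i,j}=0$; but $v(A^2)\hat u^{i,j}=g(p,q,r',s)+g(p,r',q,s)$, which is already zero by Claim~(2) of \Cref{lem:hyperplanes}. You still need, for each fixed ballot size, equations such as $g(p,2,0,s)=2g(p',1,0,s')$ and $g(p,2,1,s)=g(p'',1,0,s'')$ (with matching ballot sizes), plus independence of $g(p,1,0,s)$ from the split between $p$ and $s$. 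None of these follow from symmetric mirror pairs, and no other perturbation you describe is guaranteed by choice set convexity to keep $W^i,W^j$ in the choice set. The phrase ``direct, if technical, extension of the $t=1$ argument'' hides precisely this: in \Cref{lem:convexEquidistance} the relevant hyperplane is $\hat u^{i,\ell}$ with $|W^i\setminus W^\ell|=1$, so its entries are already parametrised by $s^1_r$, and one equation suffices; here $\hat u^{i,j}$ has no such a priori parametrisation, and your scheme does not generate the needed system of constraints.
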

\begin{proof}
    Let $f\in\mathcal{F}^1$ denote an non-imposing ABC voting rule, and consider candidates $a_1, b_1, \dots, a_t, b_t$ and committees $W^i$, $W^j$ as defined in the lemma. We assume for contradiction that the vectors in $U$ are linearly independent. Our goal is to find a vector $v^*$ such that $v^*\not\in \bar R_i^f$ for every $i\in \{1,\dots, |\mathcal{W}_k|\}$. This contradicts one of our basic insights, namely that $\bigcup_{i\in \{1,\dots, |\mathcal{W}_k|\}}R_i^f=\mathbb{Q}^{|\mathcal{A}|}$ as this requires that $\bigcup_{i\in \{1,\dots, |\mathcal{W}_k|\}} \bar R_i^f=\mathbb{R}^{|\mathcal{A}|}$. 

    To his end, we first consider the the matrix $M$ that contains the vectors $u\in U$ as rows. Since $U$ is by assumption linear independent, the image of $M$ has full dimension $\mathbb{R}^{2t}$. This means that there is a vector $v\in\mathbb{R}^{|\mathcal{A}|}$ such that $v \hat u^{a_x, b_x}=1$ for all $x\in \{1,\dots,t\}$, $v \hat u^{a_x, a_{x+1}}=0$ for all $x\in \{1,\dots, t-1\}$, and $\hat u^{i,j}=-1$. First, we note that, by repeatedly applying Claim (4) of \Cref{lem:uabProperties}, it is easy to infer that $v \hat u^{a_x, a_y}=v\sum_{\ell=x}^{y-1}\hat u^{a_\ell, a_{\ell+1}}=0$ for all $x,y\in \{1,\dots, t\}$ with $x<y$. Moreover, by the symmetry of these vectors (see \Cref{lem:hyperplanes}), the same holds for $\hat u^{a_y, a_x}$. By applying again claim (4) of \Cref{lem:uabProperties}, we thus infer that $v \hat u^{a_x, b_y}=v(\hat u^{a_x,a_y}+\hat u^{a_y,b_y})=1$ for all $x,y\in \{1,\dots,t\}$. This insight implies that for every committee $W^{i'}\in [W^{i}$, $W^{j}]$, there is another committee $W^{j'}\in[W^{i}$, $W^{j}]$ such that $v \hat u^{i',j'} < 0$. In more detail, if $W^{i'}\neq W^i$, there are candidates $a_x\not\in W^{i'}$ and $b_y\in W^{i'}$. Then, it holds for the committee $W^{j'}=(W^{i'}\setminus\{b_y\})\cup \{a_i\}$ that $v \hat u^{i',j'}=v\hat u^{b_y, a_x}=-v\hat u^{a_x, b_y}=-1$. On the other side, for $W^i$, it holds by definition of $v$ that $v \hat u^{i,j}=-1$. 

    For the second step, let $r$ denote a ballot size such that $f(A)\neq \mathcal{W}_k$ for some ballot $A$ with $|A|=r$. This means that $f(A^{x,r})=\{W\in\mathcal{W}_k\colon x\in W\}$ and $f(A^{-x,r})=\{W\in\mathcal{W}_k\colon x\not\in W\}$ by \Cref{lem:weakEfficientTriviality}. Now, consider the profile $A^1$ containing one copy $A^{x,r}$ for each $x\in  W^{i}\cap W^{j}$ and one copy of each $A^{-x,r}$ for $x\not\in  W^{i}\cup W^{j}$. By consistency, $f(A^1)= [W^i$, $W^{j}]$. For $W^{i'}, W^{j'} \in [W^{i}$, $W^{j}]$, Claim (1) of \Cref{lem:hyperplanes} shows now that $\hat u^{i,j} v(A^1) = 0$. On the other hand, we will show that for every committee $W^{i'}\notin [W^{i}$, $W^{j}]$, there is another committee $W^{j'}$ such that $v(A^1)\hat u^{i',j'}<0$. For this, let $a\in (W^i\cup W^j)\setminus W^{i'}$, $b\in W^{i'}\setminus (W^i\cup W^j)$, and define $W^{j'}$ as $W^{j'}=(W^{i'}\setminus \{b\})\cup \{a\}$. By Claims (1) to (3) of \Cref{lem:uabProperties}, it follows that $v(A^1)\hat u^{i',j'}=v(A^1)\hat u^{b,a}=(v(A^{a,r}) + v(A^{-b,r})\hat u^{b,a}=v(A^{-b,r})\hat u^{b,a} - v(A^{a,r})\hat u^{a,b}<0$. 
    
    Now, let $v^n = n v(A^1) + v$, where $n$ is large enough such that still $v^n \hat u^{i',j'} < 0$ for all 
    $W^{i'}\notin [W^{i}$, $W^{j}]$ and their corresponding $W^{j'}$. It holds that $v^n\notin \bar R^f_i$ for all $W^i\in \mathcal W_k$. Firstly, by definition of $v_n$, $v_n\not\in \bar R_{i'}^f$ for all $W^{i'}\not\in [W^{i}, W^j]$. Next, consider a committee $W^{i'}\in[W^i, W^j]$ and a corresponding $W^{j'}\in [W^{i}, W^{j}]$ with $v \hat u^{i',j'} < 0$. By construction $v^n\hat u^{i',j'}=(n v(A^1) + v) \hat u^{i',j'}<0$ since $v(A^1)\hat u^{i',j'}=0$.
    In total, $v^n\notin \bigcup_{i\in \{1,\dots,|\mathcal{W}_k|\}} \bar R^f_i = \mathbb{R}^{|\mathcal A|}$. Clearly, this is a contradiction, so the vectors in $U$ are linearly dependent.
\end{proof}

As as consequence of \Cref{lem:convexLinearIndependence} and \Cref{lem:convexLinearDependence}, there are unique real coefficients $\delta_{a_1, b_1}, \dots \delta_{a_{t-1}, a_t}$ such that $\hat u^{i,j}=\delta_{a_1, b_1} \hat u^{a_1,b_1} +\dots + \delta_{a_{t-1}, a_t} \hat u^{a_{t-1},a_t}$. In the next lemma, we will determine these coefficient and show that $\hat u^{i,j}$ can be represented as a (scaled) sum of the $\hat u^{a_x, b_x}$. 

\begin{lemma}\label{lem:convexAdditivity}
    Suppose $2\leq k\leq m-2$ and let $f\in\mathcal{F}^1$ denote a non-imposing ABC voting rule. Moreover, consider two committees $W^i, W^j\in\mathcal{W}_k$ such that $|W^i\setminus W^j|=t\geq 1$ and let $W^{i}\setminus W^{j} = \{a_1,\dots a_t\}$ and $W^{j}\setminus W^{i} = \{b_1,\dots b_t\}$. There is $\delta>0$ such that $\hat u^{j_0,j_t}=\delta \sum_{i\leq t} \hat u^{a_i,b_i}$.
\end{lemma}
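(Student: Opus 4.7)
The plan is to combine the linear algebra from Lemmas~\ref{lem:convexLinearIndependence} and~\ref{lem:convexLinearDependence} with the symmetry properties of the $\hat u$-vectors provided by Lemma~\ref{lem:hyperplanes}(3). Since adjoining $\hat u^{i,j}$ to the linearly independent set $U=\{\hat u^{a_x,b_x}\}_{x=1}^{t}\cup\{\hat u^{a_y,a_{y+1}}\}_{y=1}^{t-1}$ produces a linearly dependent set, there exist unique reals $\delta_1,\dots,\delta_t,\gamma_1,\dots,\gamma_{t-1}$ with
\[\hat u^{i,j}=\sum_{x=1}^{t}\delta_x\,\hat u^{a_x,b_x}+\sum_{y=1}^{t-1}\gamma_y\,\hat u^{a_y,a_{y+1}}.\]
The task thus reduces to showing that every $\gamma_y=0$, that all $\delta_x$ share a common value $\delta$, and that $\delta>0$.

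To kill the $\gamma_y$ terms, I would apply the permutation $\tau$ of $\mathcal{C}$ that swaps $a_x\leftrightarrow b_x$ for every $x\in\{1,\dots,t\}$ and fixes every other candidate. Then $\tau(W^i)=W^j$, so Lemma~\ref{lem:hyperplanes}(2)--(3) yields $\tau(\hat u^{i,j})=\hat u^{j,i}=-\hat u^{i,j}$. Simultaneously $\tau(\hat u^{a_x,b_x})=\hat u^{b_x,a_x}=-\hat u^{a_x,b_x}$, while a double application of Lemma~\ref{lem:uabProperties}(4) gives $\tau(\hat u^{a_y,a_{y+1}})=\hat u^{b_y,b_{y+1}}=-\hat u^{a_y,b_y}+\hat u^{a_y,a_{y+1}}+\hat u^{a_{y+1},b_{y+1}}$. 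Applying $\tau$ to the displayed equation and matching, by linear independence of $U$, the coefficient of each basis vector $\hat u^{a_y,a_{y+1}}$ on both sides produces $-\gamma_y=\gamma_y$, so $\gamma_y=0$.

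With $\gamma\equiv 0$, the representation reads $\hat u^{i,j}=\sum_{x}\delta_x\hat u^{a_x,b_x}$. For each permutation $\sigma$ of $\{1,\dots,t\}$, the candidate permutation $\tau_\sigma$ that sends $a_x\mapsto a_{\sigma(x)}$, $b_x\mapsto b_{\sigma(x)}$, and fixes all other candidates satisfies $\tau_\sigma(W^i)=W^i$ and $\tau_\sigma(W^j)=W^j$, so Lemma~\ref{lem:hyperplanes}(3) gives $\hat u^{i,j}=\tau_\sigma(\hat u^{i,j})=\sum_{x}\delta_{\sigma^{-1}(x)}\hat u^{a_x,b_x}$. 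Linear independence then forces $\delta_x=\delta_{\sigma^{-1}(x)}$ for every $x$ and $\sigma$, hence a single value $\delta$ with $\hat u^{i,j}=\delta\sum_{x}\hat u^{a_x,b_x}$. Since $\hat u^{i,j}\neq 0$ by Lemma~\ref{lem:hyperplanes}(1), $\delta\neq 0$. To pin down the sign, pick a ballot size $r$ on which $f$ is non-trivial and evaluate both sides at $v(A^{a_1,r})$: by Lemma~\ref{lem:uabProperties}(1) all terms with $x\geq 2$ vanish because $a_1,a_x,b_x$ are distinct, while Lemma~\ref{lem:uabProperties}(3) gives $\hat u^{a_1,b_1}\,v(A^{a_1,r})>0$. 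On the other hand $v(A^{a_1,r})\in R_i^f\subseteq\bar R_i^f$ because $a_1\in W^i$ and Lemma~\ref{lem:weakEfficientTriviality} yields $f(A^{a_1,r})=\{W\in\mathcal{W}_k:a_1\in W\}$, so $\hat u^{i,j}\,v(A^{a_1,r})\geq 0$ and hence $\delta>0$.

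The main obstacle is the step that eliminates the $\gamma_y$: it rests on expanding $\tau(\hat u^{a_y,a_{y+1}})=\hat u^{b_y,b_{y+1}}$ cleanly back into the chosen basis via Lemma~\ref{lem:uabProperties}(4), so that the $\hat u^{a_{y'},a_{y'+1}}$--coordinates of the transformed equation depend only on $\gamma_y$ (and not on any other $\gamma_{y'}$), which is what isolates the constraint $\gamma_y=-\gamma_y$. Everything afterwards is routine linear algebra combined with the $S_t$-symmetry of the pair $(W^i,W^j)$ and the sign information supplied by Lemma~\ref{lem:uabProperties}.
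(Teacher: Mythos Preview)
Your argument is correct, and it takes a genuinely different route from the paper's.

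The paper determines the coefficients by evaluating the relation $\hat u^{i,j}=\sum_x\delta_{a_x,b_x}\hat u^{a_x,b_x}+\sum_y\delta_{a_y,a_{y+1}}\hat u^{a_y,a_{y+1}}$ against carefully chosen test vectors $v(A^{b_x,r})$, $v(A^{a_x,r})$, and $v(A^{a_i,r})+v(A^{b_i,r})$, using the orthogonality relations of Lemma~\ref{lem:uabProperties}(1)--(3) to isolate one coefficient at a time; the elimination of the $\delta_{a_y,a_{y+1}}$ then requires a case split on $t=2$ versus $t>2$ and a short recursion. You instead exploit two group actions on the pair $(W^i,W^j)$: the involution $\tau$ swapping every $a_x$ with $b_x$ (which flips $\hat u^{i,j}$ and each $\hat u^{a_x,b_x}$ in sign, while $\hat u^{a_y,a_{y+1}}\mapsto\hat u^{b_y,b_{y+1}}=-\hat u^{a_y,b_y}+\hat u^{a_y,a_{y+1}}+\hat u^{a_{y+1},b_{y+1}}$ via Lemma~\ref{lem:uabProperties}(4)) kills every $\gamma_y$ by comparing $\hat u^{a_y,a_{y+1}}$-coefficients, and the $S_t$-action $\tau_\sigma$ forces all $\delta_x$ to coincide. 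This is cleaner---no evaluations except the single one needed for the sign, no case distinction---and makes transparent that the conclusion is really a consequence of the symmetry group of the pair $(W^i,W^j)$ together with linear independence of $U$. The paper's approach, in turn, is slightly more self-contained in that it never needs to rewrite $\hat u^{b_y,b_{y+1}}$ back into the basis, so it avoids the one delicate bookkeeping step you flag at the end.
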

\begin{proof}
    Let $f\in \mathcal{F}^1$ be a non-imposing ABC voting rule, and let $W^i, W^j, a_1, b_1, \dots a_t, b_t$ be defined as in the lemma. First, we observe that the case $t= 1$ is trivial with $\delta = 1$ and hence suppose that $t>1$.
    By \Cref{lem:convexLinearDependence}, the $2t$ vectors in $U=\{\hat u^{a_1,b_1}, \hat u^{a_2,b_2}, \dots, \hat u^{a_t,b_t}\}\cup \{\hat u^{a_1,a_2}, \hat u^{a_2,a_3}, \dots, \hat u^{a_{t-1},a_t}\}\cup \{\hat u^{i, j}\}$ are linearly dependent, whereas \Cref{lem:convexLinearIndependence} shows that the vectors in $U\setminus \{\hat u^{i,j}\}$ are linearly independent. Thus, there are unique real coefficients $\delta_{a_1, b_1}, \dots \delta_{a_{t-1}, a_t}$ such that $\hat u^{i,j}=\delta_{a_1, b_1} \hat u^{a_1,b_1} +\dots + \delta_{a_{t-1}, a_t} \hat u^{a_{t-1},a_t}$. 

    Our goal is to determine these coefficients. For this, we let $r$ denote a ballot size such that $f(A)\neq\mathcal{W}_k$ for some ballot $A$ of size $r$, and define $v^x=v(A^{x,r})$ for every profile $A^{x,r}$. Next, we proceed in three steps to show the lemma.\medskip

    \textbf{Step 1: $v^{x_1} \hat u^{x_1,y_1}=v^{x_2} \hat u^{x_2,y_2}>0$}

    First, we show that $v^{x_1} \hat u^{x_1,y_1}=v^{x_2} \hat u^{x_2,y_2}>0$ for all $x_1,x_2,y_1,y_2\in\mathcal{C}$. For this, let $\tau$ denote the permutation defined by $\tau(x_1)=x_2$, $\tau(x_2)=x_1$, $\tau(y_1)=y_2$, $\tau(y_2)=y_1$, and $\tau(z)=z$ for all $z\in\mathcal{C}\setminus\{x_1,x_2,y_1,y_2\}$. By Claim (3) in \Cref{lem:hyperplanes}, it holds that $\tau(\hat u^{x_1, y_1})=\hat u^{x_2,y_2}$, and by the symmetry of the profiles $A^{x,r}$, we infer that $\tau(v^{x_1})=v^{x_2}$. \sloppy{Hence, we can now compute that $v^{x_1}\hat u^{x_1,y_1}=\tau(v^{x_1})(\hat u^{x_1,y_1})=v^{x_2}\hat u^{x_2,y_2}$. Finally, Claim (3) in \Cref{lem:uabProperties} shows that $v^{x_2}\hat u^{x_2,y_2}>0$, thus proving the claim.}\medskip

    \textbf{Step 2: $\delta_{a_1, b_1}=\dots= \delta_{a_t, b_t}\geq 0$.}
    
    For this step, we consider two indices $x,y\in \{1,\dots, t\}$ and the vectors $v^{b_x}$ and $v^{b_y}$. Moreover, let $\tau$ denote the permutation with $\tau(b_x)=b_y$, $\tau(b_y)=b_x$, $\tau(a_x)=a_t$, $\tau(a_y)=a_x$ and $\tau(z)=z$ for all candidates $z\in\mathcal{C}\setminus \{a_x, a_y, b_x, b_y\}$. 
    By Claim (1) in \Cref{lem:uabProperties}, $v^{b_x} \hat u^{i,j}=v^{b_x} (\delta_{a_1, b_1}\hat u^{a_1, b_1}+\dots+\delta_{a_{t-1}, a_t}\hat u^{a_{t-1}, a_t})=\delta_{a_x, b_x} v^{b_x} \hat u^{a_x, b_x}$ and $v^{b_y} \hat u^{i,j}=\delta_{a_y, b_y} v^{b_y} \hat u^{a_y, b_y}$ follows from an analogous reasoning. Finally, we note that $\tau(\hat u^{i,j})=\hat u^{i,j}$ since $\tau(W^i)=W^i$ and $\tau(W^j)=W^j$. Hence, we can compute that $\delta_{a_x, b_x} v^{b_x} \hat u^{a_x, b_x}=v^{b_x} \hat u^{i,j}=\tau(v^{b_x})\tau(\hat u^{i,j})=v^{b_j} \hat u^{i,j}=\delta_{a_y, b_y} v^{b_y} \hat u^{a_y, b_y}$. Since $v^{b_x} \hat u^{a_x, b_x} = v^{b_y} \hat u^{a_y, b_y}< 0$ (by Step 1 and Claim (3) of \Cref{lem:uabProperties}, this proves that $\delta_{a_x, b_x}=\delta_{a_y, b_y}$. Moreover, since $v^{b_x}\in \bar R_j^f$ (as $W^j\in f(A^{b_x,r})$), we infer from Claim (1) of \Cref{lem:hyperplanes} $v^{b_x} \hat u^{i,j}=-v^{b,x}\hat u^{j,i}\leq 0$. Since $v^{b_x} \hat u^{a_x, b_x}<0$, this means that $\delta_{a_x,b_x}\geq 0$.\medskip

    \textbf{Step 3: $\delta_{a_1, a_2}=\dots= \delta_{a_{t-1}, a_t}= 0$.} 
   
    For this step, suppose first that $t=2$. Then, we only need to show that $\hat u^{a_1, a_2}=0$. To do so, we consider the vector $v^{a_1}$. First, we note that $v^{a_1}\hat u^{a_2, b_2}=0$ by Claim (1) of \Cref{lem:uabProperties}. Hence, $v^{a_1} \hat u^{i, j} =\delta_{a_1, b_1} v^{a_1} \hat u^{a_1, b_1}  +\delta_{a_1, a_2} v^{a_1} \hat u^{a_1, a_2}$. Now, we define $\Delta=v^{a_1} \hat u^{a_1, b_1}>0$ and note that by Step 1, $v^{a_1} \hat u^{i, j} = \Delta (\delta_{a_1, b_1}+\delta_{a_1,a_2})$. Analogously, $v^{a_2} \hat u^{i, j} =\delta_{a_2 b_2}  v^{a_2} \hat u^{a_2, b_2} + \delta_{a_1, a_2} v^{a_2} \hat u^{a_1, a_2} = \Delta(\delta_{{a_2 b_2}}-\delta_{a_1,a_2})$. To derive our claim, we consider the permutation $\tau$ with $\tau(a_1)=a_2$, $\tau(a_2)=a_1$, and $\tau(x)=x$ for all candidates. Just as in the last step, we can now infer that $\Delta (\delta_{a_1, b_1}+\delta_{a_1,a_2})=v^{a_1} \hat u^{i, j}=\tau(v^{a_1}) \tau(\hat u^{i, j})=v^{a_2} \hat u^{i, j}=\Delta (\delta_{a_2, b_2}-\delta_{a_1,a_2})$. Since $\Delta>0$ and $\delta_{a_1, b_1}=\delta_{a_2, b_2}$, this equality can only be true if $\delta_{a_1, a_2}=0$, thus proving our claim.
    
    Finally, consider the case that $t>2$ and consider an index $i\in \{2,\dots, t-1\}$. Moreover, let $\bar v^i=(v^{a_i}+v^{b_i})$. Next, we note for all $\hat u\in U\setminus \{\hat u^{i,j}, \hat u^{a_i,b_i}, \hat u^{a_{i-1}, a_i}, \hat u^{a_{i}, a_{i+1}}\}$ that $\bar v^i \hat u=0$ by Claim (1) of \Cref{lem:uabProperties}. Furthermore, $\bar v^i\hat u^{a_i,b_i}=v^{a_i}\hat u^{a_i,b_i} - v^{b_i}\hat u^{b_i, a_i}=0$ by Step 1. On the other hand, $\bar v^i\hat u^{a_i, a_{i+1}}=v^{a_i}\hat u^{a_i, a_{i+1}}=-v^{a_i} \hat u^{a_{i-1}, a_i}>0$ by the symmetry of the $\hat u^{x,y}$ and Step 1. Hence, we conclude that $\bar v^i\hat u^{i,j}=\bar v^i (\delta_{a_1, b_1} \hat u^{a_1, b_1}+\dots+\delta_{a_{t-1}, a_t} \hat u^{a_{t-1}, a_t})=\delta_{a_{i}, a_{i+1}} v^{a_i} \hat u^{a_i, a_{i+1}} + \delta_{a_{i-1}, a_{i}} v^{a_i} \hat u^{a_{i-1}, a_{i}}=\delta_{a_{i}, a_{i+1}} v^{a_i} \hat u^{a_i, a_{i+1}} - \delta_{a_{i-1}, a_{i}} v^{a_i} \hat u^{a_{i}, a_{i+1}}$. On the other hand, $\bar v^i\hat u^{i,j}=-\bar v^i \hat u^{i,j}=0$ by using the symmetry of $\bar v^i$ with respect to $a_i$ and $b_i$. By using Step 1, we thus infer now that $\delta_{a_{i-1}, a_{i}}=\delta_{a_{i}, a_{i+1}}$ for all $i\in \{2,\dots, t-1\}$. 

    Finally, we will show that all $\delta_{a_{i-1}, a_{i}}$ are $0$. For this consider the vectors $v^{a_1}$ and $v^{a_2}$. For the permutation $\tau$ which mirrors $a_1$ and $a_2$ while fixing all remaining candidates, we obtain $v^{a_1} \hat u^{i, j} = \tau(v^{a_1}) \tau(\hat u^{i, j}) =v^{a_2} \hat u^{i, j}$. On the other hand, we can derive from \Cref{lem:uabProperties} that $v^{a_1} \hat u^{i, j}=\delta_{a_1, b_1} v^{a_1} \hat u^{a_1, b_1} + \delta_{a_1, a_2} v^{a_1} \hat u^{a_1, a_2}$ and $v^{a_2} \hat u^{i, j}=\delta_{a_2, b_2} v^{a_2} \hat u^{a_2, b_2} + \delta_{a_1, a_2} v^{a_2} \hat u^{a_1, a_2} + \delta_{a_2, a_3} v^{a_2} \hat u^{a_2, a_3}$. By our previous analysis, $\delta_{a_1, b_1} v^{a_1} \hat u^{a_1, b_1}=\delta_{a_2, b_2} v^{a_2} \hat u^{a_2, b_2}$ and $\delta_{a_1, a_2} v^{a_1} \hat u^{a_1, a_2}=\delta_{a_2, a_3} v^{a_2} \hat u^{a_2, a_3}$, so our equations imply that $\delta_{a_1, a_2} v^{a_2} \hat u^{a_1, a_2}=0$. Since $v^{a_2} \hat u^{a_1, a_2}\neq 0$, this means that $\delta_{a_1,a_2}=0$ and thus, all these $\delta$'s are $0$. Finally, since $\hat u^{i,j}$ is a non-zero vector and $\delta_{a_x,b_x}=\delta_{a_y,b_y}\geq 0$ for all $x,y\in \{1,\dots,t\}$, this inequality must be strict and the lemma follows by choosing $\delta= \delta_{a_1, b_1}$.
\end{proof}

Finally, we are now ready to prove \Cref{thm:BSWAV}.

\BSWAV*
\begin{proof}
    First, the direction from left to right has been shown in the main body. Moreover, if $k\in \{1,m-1\}$, then the set of BSWAV rules is equal to the set of ABC scoring rules and choice set convexity becomes trivial. Hence, the theorem follows from \Cref{prop:Bordercase}. Next, assume that $f\in\mathcal{F}^1$ for $k\in \{2,\dots, m-2\}$. If $f$ is trivial, it is clearly the BSWAV rule induced by the weights $\alpha_x=0$ for all $x$. On the other hand, if $f$ is non-trivial, \Cref{lem:weakEfficientTriviality} holds and consistency therefore entails that $f$ is non-imposing. Hence, we can access all our auxiliary lemmas now. In particular, our goal is to show that $f$ is the BSWAV rule described by the weights $\alpha_r$ constructed in \Cref{lem:convexEquidistance}. For doing so, we define the score function $s(|B(\ell)\cap W^i|, |B(\ell)|)=\alpha_{|B(\ell)|} |B(\ell)\cap W^i|$ and extend it to vectors $v\in \mathbb{R}^{|\mathcal{A}|}$ by $\hat s(v, W^i)=\sum_{1,\dots,|\mathcal{A}|\}} v_\ell s(|B(\ell)\cap W^i|, |B(\ell)|)$. Departing from here, our proof proceeds in three steps. First, we show that there is for all committees $W^i$, $W^j$ a constant $\delta>0$ such that $\delta \hat u^{i,j}_\ell=s(|B(\ell)\cap W^i|, |B(\ell)|)-s(|B(\ell)\cap W^j|, |B(\ell)|)$ for all ballots $B(\ell)$. Based on this insight, we show in the second step that $f(A)\subseteq f'(A):=\{W^i\in\mathcal{W}_k\colon \forall W^j\in\mathcal{W}_k\colon \hat s(v(A), W^i)\geq \hat s(v(A), W^j)\}$. Finally, we turn this subset relation in an equality in the last step and prove that $f$ is a BSWAV rule.\medskip

    \textbf{Step 1: There is $\delta>0$ such that $\delta \hat u^{i,j}_\ell=s(|B(\ell)\cap W^i|, |B(\ell)|)-s(|B(\ell)\cap W^j|, |B(\ell)|)$ for all $B(\ell)$.} 

    \sloppy{For this step, consider two arbitrary committees $W^i,W^j\in\mathcal{W}_k$ and let $B(\ell)\in \mathcal{A}$ denote a ballot. Moreover, let $r=|B(\ell)|$ denote size of $B(\ell)$ and define $W^i\setminus W^j=\{a_1,\dots, a_t\}$,  $W^j\setminus W^i=\{b_1, \dots, b_t\}$. By the definition of $s$, we have that $s(|B(\ell)\cap W^i|, |B(\ell)|)-s(|B(\ell)\cap W^j|, |B(\ell)|)=\alpha_r (|B(\ell)\cap W^i|-|B(\ell)\cap W^j|)$. 
    On the other hand, we know by \Cref{lem:convexAdditivity} that $\hat u^{i,j}=\delta' \sum_{x=1}^t \hat u^{a_x, b_x}$ for some $\delta'>0$. Hence, this step follows by showing that $\sum_{x=1}^t \hat u^{a_x, b_x}_\ell=\alpha_r (|B(\ell)\cap W^i|-|B(\ell)\cap W^j|)$.}
    
    For doing so, we partition the the indices $I=\{1,\dots, t\}$ into four sets: $I_1=\{x\in I\colon a_x, b_x\in B(\ell)\}$, $I_2= \{x\in I: a_x,b_x \notin B(\ell)\}$, $I_3= \{x\in I: a_x\in B(\ell), b_x\notin B(\ell)\}$, and 
    $I_4= \{x\in I: a_x\notin B(\ell), b_x\in B(\ell)\}$. Now, by \Cref{lem:s1}, we know that $\hat u^{a_x, b_x}_\ell=0$  for all $x\in I_1\cup I_2$ as $\hat u^{a_x,b_x}_\ell=\hat u^{i',j'}_\ell=s_r^1(|B(\ell)\cap W^{i'}|, |B(\ell)\cap W^{j'}|)=0$ for two arbitrary committees $W^{i'}, W^{j'}\in\mathcal{W}_k$ with $W^{i'}\setminus W^{j'}=\{a_x\}$ and $W^{j'}\setminus W^{i'}=\{b_x\}$. Moreover, a similar reasoning and \Cref{lem:convexEquidistance} show that $\hat u^{a_x, b_x}_\ell=\alpha_r$ for all $x\in I_3$ and $\hat u^{a_x, b_x}_\ell=-\alpha_r$ for all $x\in I_4$. We thus have that $\sum_{x=1}^t \hat u^{a_x, b_x}_\ell=\alpha_r|I_3|-\alpha_r|I_4|=\alpha_r(|B(\ell)\cap \{a_1, \dots, a_t\}|- |B(\ell)\cap \{b_1, \dots, b_t\}|)=\alpha_r(|B(\ell)\cap W^i|-|B(\ell)\cap W^j|)$, thus proving our claim.\medskip

    \textbf{Step 2: $f(A)\subseteq f'(A)$ for all $A\in\mathcal{A}^*$.}

    For proving this claim, we recall the function $\hat g$ and the sets $\bar R_i^f$ defined in and after \Cref{lem:domain}. In particular, by the definition of these objects, we have that $f(A)=\hat g(v(A))=\{W^i\in\mathcal{W}_k\colon v(A)\in R_i^f\}\subseteq \{W^i\in\mathcal{W}_k\colon v(A)\in\bar \bar R_i^f\}$. Hence, the claim follows by showing that $f'(A)=\{W^i\in\mathcal{W}_k\colon \forall W^j\in\mathcal{W}_k\colon \hat s(v(A), W^i)\geq \hat s(v(A), W^j)\}=\{W^i\in\mathcal{W}_k\colon v(A)\in\bar R_i^f\}$. For doing this, we note that $\bar R_i^f$ can be represented as $\bar R_i^f=\{v\in \mathcal{W}_k\colon \forall j\in \{1,\dots, \mathcal{W}_\setminus \{i\}\colon v \hat u^{i,j}\geq 0\}$ (Claim (1) of \Cref{lem:hyperplanes}). Hence, to show our equivalence, we need to prove that $\hat s(v(A), W^i)\geq \hat s(v(A), W^j)$ if and only if $v(A) \hat u^{i,j}\geq 0$ for every profile $A$ and committees $W^i, W^j$. 

    \sloppy{To do so, consider an arbitrary profile $A\in\mathcal{A}^*$ and let $W^i$, $W^j$ denote two arbitrary committees. By the last step, we have a $\delta>0$ such that $\delta \hat u^{i,j}_\ell=s(|B(\ell)\cap W^i|, |B(\ell)|)-s(|B(\ell)\cap W^j|, |B(\ell)|)$ for all ballots $B(\ell)$. Hence, it is easy to see that $\delta v(A)\hat u^{i,j}=\sum_{\ell\in \{1,\dots, |\mathcal{A}|\}} v(A)_\ell (s(|B(\ell)\cap W^i|, |B(\ell)|)-s(|B(\ell)\cap W^j|, |B(\ell)|))=\hat s(v(A), W^i)-\hat s(v(A), W^j)$, which clearly implies our claim.}\medskip

    \textbf{Step 3: $f(A)$ is a BSWAV rule.}

    For this step, we show that $f(A)=f'(A)$ and that $f'(A)$ is a BSWAV rule. For the latter point, we only need to observe that all $\alpha_r$ are non-negative. Assume for contradiction that this is not the case. Then, there is a ballot size $r$ such that $\alpha_r<0$ and $f'(A)=\{W\in\mathcal{W}_k\colon \forall W'\in\mathcal{W}_k\colon |A\cap W|\leq |A\cap W'|\}$ for an arbitrary ballot $A$ with $|A|=r$. Since $f$ chooses a subset of $f'(A)$, $f$ violates weak efficiency as we cannot decrease the number of unapproved candidates in any committee. This contradicts our assumptions, so $\alpha_r\geq 0$ and $f'$ is a BSWAV rule by definition. 

    Next, to show that $f(A)=f'(A)$, we assume for contradiction that there is a profile $A'$ for which this is not the case. This means that there is a committee $W\in f'(A)\setminus f(A)$. Moreover, since $f'$ is a BSWAV rule, it satisfies all preconditions of \Cref{lem:weakEfficientTriviality}. So, we can infer analogously as for $f$ that $f'$ is non-imposing and consistent. Now, let $A'$ denote a profile with $f'(A')=\{W\}$. By consistency of $f'$ and the subset relation of the last step, we have that $f(\lambda A+A')=f'(\lambda A+A')=\{W\}$ for all $\lambda \in\mathbb{N}$. However, this contradicts the continuity of $f$ and thus our assumption that $f'(A)\neq f(A')$ is wrong. Hence, $f$ is indeed the BSWAV rule induced by $\alpha_r$.
\end{proof}

\subsection{Proof of \Cref{thm:Thiele}}\label{app:Thiele}

In this section, we will prove our characterization of Thiele rule (\Cref{thm:Thiele}). We focus here on showing that every rule that satisfies anonymity, neutrality, consistency, continuity, and independence of losers is indeed a Thiele; the other direction can be found in the main body. To prove this claim, we essentially follow the same steps as for the proof of \Cref{thm:BSWAV} and a detailed proof sketch is given in the main body. Moreover, for a short notation, we define $\mathcal{F}^2$ as the set of all ABC voting rules that satisfy anonymity, neutrality, consistency, continuity, and independence of losers (i.e., the axioms of \Cref{thm:Thiele}).

For the first step in our analysis, we consider again the profiles $A^{x,r}$ and $A^{-x,r}$ in which all ballots $A$ of size $r$ with $x\in A$ and $x\not\in A$, respectively, are reported once. 

\begin{lemma}\label{lem:iolProfiles}
    Let $f\in\mathcal{F}^2$ denote a non-trivial ABC voting rule. There is a ballot size $r$ such that $f(A^{x,r})=\{W\in\mathcal{W}_k\colon x\in W\}$ and $f(A^{-x,r})=\{W\in\mathcal{W}_k\colon x\not\in W\}$ for all $x\in\mathcal{C}$.
\end{lemma}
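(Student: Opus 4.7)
The plan is to mimic the proof of \Cref{lem:weakEfficientTriviality}, but with independence of losers replacing weak efficiency at the key steps. First, using non-triviality of $f$ together with consistency, I would locate a ballot $A_0 \in \mathcal{A}$ with $f(\{A_0\}) \neq \mathcal{W}_k$. Indeed, if $f(\{A\}) = \mathcal{W}_k$ for every single-ballot profile, then iteratively applying consistency (intersecting with $\mathcal{W}_k$ at each step) would force $f(A) = \mathcal{W}_k$ on every multi-ballot profile, contradicting non-triviality. Set $r := |A_0|$; I claim that this $r$ witnesses the lemma.

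Next, I would classify the possible outcomes. For every $x \in \mathcal{C}$, both $A^{x,r}$ and $A^{-x,r}$ are invariant (up to anonymity) under every permutation of $\mathcal{C}$ fixing $x$, so anonymity and neutrality force $f(A^{x,r})$ and $f(A^{-x,r})$ to be unions of the two orbits of $\mathcal{W}_k$ under the stabilizer of $x$. Hence each is one of $\{W : x \in W\}$, $\{W : x \notin W\}$, or $\mathcal{W}_k$. Since $A^{x,r}+A^{-x,r}$ is fully symmetric (one copy of each size-$r$ ballot), $f(A^{x,r}+A^{-x,r})=\mathcal{W}_k$, and consistency together with a short case analysis leaves only three possibilities, uniformly over $x$ by neutrality: (i) the desired $f(A^{x,r})=\{W : x\in W\}$ and $f(A^{-x,r})=\{W : x\notin W\}$; (ii) the trivial case $f(A^{x,r})=f(A^{-x,r})=\mathcal{W}_k$; or (iii) the swapped case with the two outputs exchanged.

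The heart of the proof is to rule out (ii) and (iii). For (ii), the plan is to adapt the induction of \Cref{lem:weakEfficientTriviality}: introduce the profiles $A^{X^+,X^-}$ containing each size-$r$ ballot $B$ with $X^+\subseteq B$ and $X^-\cap B=\emptyset$, and prove by induction on $t=|X^+\cup X^-|$ that $f(A^{X^+,X^-})=\mathcal{W}_k$. The base case $t=1$ is assumption (ii), and the auxiliary symmetry identities ($f(A^{X^+\cup\{x,y\},X^-})=f(A^{X^+,X^-\cup\{x,y\}})$ and its swap variant) follow from consistency and the induction hypothesis just as in the cited lemma. Taking $t$ maximal collapses $A^{X^+,X^-}$ to a single ballot, contradicting $f(\{A_0\})\neq\mathcal{W}_k$ via neutrality. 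The one step in the cited lemma that invokes weak efficiency—asserting that $f(A^{\emptyset,X})$ contains some committee minimizing $|W\cap X|$—must here be replaced by an argument that starts from an arbitrary $W\in f(A^{\{y\},X\setminus\{y\}})$, uses independence of losers to shrink the ballots of $A^{\{y\},X\setminus\{y\}}$ (removing candidates outside $W$) while preserving $W$ as a winner, and then exploits the stabilizer-of-$X$ symmetry via anonymity and neutrality to force $f(A^{\emptyset,X})\cap f(A^{\{y\},X\setminus\{y\}})\neq\emptyset$, so that consistency applied to $A^{\emptyset,X\setminus\{y\}}=A^{\emptyset,X}+A^{\{y\},X\setminus\{y\}}$ closes the induction. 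Case (iii) is handled by a separate argument in the same spirit: consistency across the ballots comprising $A^{x,r}$ pins down the orbit structure of $f(\{A_0\})$ forced by (iii), and independence of losers applied to shrink $A_0$ to smaller subballots yields constraints at smaller ballot sizes incompatible with the shape of $f(A^{-x,r})$ demanded by (iii).

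The main obstacle is the substitution of weak efficiency by independence of losers in both subcases. Weak efficiency rearranges a winning committee while fixing the profile, whereas independence of losers rearranges the profile while fixing the winning committee; these two modes are in some sense dual. Bridging this duality—so that the shrinkings permitted by independence of losers can be combined with anonymity, neutrality, and consistency to yield the same structural conclusions that weak efficiency produces in one line—will be the delicate part of the argument, requiring careful choices of pivot committees and shrinking patterns in the auxiliary profiles $A^{X^+,X^-}$.
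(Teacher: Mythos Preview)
Your plan transplants the weak-efficiency induction of \Cref{lem:weakEfficientTriviality}, but the paper takes a far more direct route that uses independence of losers in its natural direction. Rather than fixing $r=|A_0|$ and inducting over $|X^+\cup X^-|$, the paper starts from the full ballot $\mathcal{C}$ (where $f(\mathcal{C})=\mathcal{W}_k$ by neutrality) and peels off candidates: for any $W$ with $x\notin W$, independence of losers applied to the passage $\mathcal{C}\to\mathcal{C}\setminus\{x\}$ gives $W\in f(\mathcal{C}\setminus\{x\})$, so $\{W:x\notin W\}\subseteq f(\mathcal{C}\setminus\{x\})$. Either this is an equality (and $r=m-1$ works, since $A^{-x,m-1}$ is the single ballot $\mathcal{C}\setminus\{x\}$), or $f(\mathcal{C}\setminus\{x\})=\mathcal{W}_k$; in the latter case one continues deleting until reaching a minimal $X$ with $f(\mathcal{C}\setminus X)\neq\mathcal{W}_k$. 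Independence of losers from each $\mathcal{C}\setminus(X\setminus\{y\})$ to $\mathcal{C}\setminus X$ then forces $f(\mathcal{C}\setminus X)=\{W:X\not\subseteq W\}$, and consistency over $A^{-x,r}$ for $r=m-|X|$ finishes. This argument is short, needs no induction on $|X^+\cup X^-|$, and never confronts your case~(iii): the correct orientation of the inclusion falls out of independence of losers automatically.

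Your substitution in case~(ii) has a real gap. Weak efficiency modifies the \emph{committee} while fixing the profile, which is exactly what the induction needs to land a witness inside $f(A^{\emptyset,X})\cap f(A^{\{y\},X\setminus\{y\}})$. Independence of losers modifies the \emph{profile} while fixing the committee, so after shrinking you are no longer reasoning about $A^{\{y\},X\setminus\{y\}}$ at all, and you have not explained how stabilizer-of-$X$ symmetry alone recovers control over the original profile. Your sketch for~(iii) is similarly underspecified: consistency does not obviously decompose $f(A^{x,r})$ into single-ballot constraints (the single-ballot choice sets need not intersect), and shrinking $A_0$ via independence of losers yields information at \emph{smaller} ballot sizes, not at size~$r$ where~(iii) lives. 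I would abandon the transplant and adopt the top-down peeling argument.
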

\begin{proof}
    Let $f\in\mathcal{F}^2$ denote a non-imposing ABC voting rule. To show this lemma, we will find a ballot size $r$ such that $f(A^{-x,r})=\{W\in\mathcal{W}_k\colon x\not\in W\}$ for all $x\in\mathcal{C}$. This implies that $f(A^{x,r})=\{W\in\mathcal{W}_k\colon x\in W\}$ because of the following reasoning. First, anonymity and neutrality entail that $f(A^{x,r}+A^{-x,r})=\mathcal{W}_k$ as the profile $A^{x,r}+A^{-x,r}$ contains all ballots of size $r$. Hence, consistency requires that either $f(A^{x,r})=f(A^{-x,r})=\mathcal{W}_k$ or $f(A^{x,r})\cap f(A^{-x,r})=\emptyset$. In particular, if $f(A^{-x,r})=\{W\in\mathcal{W}_k\colon x\not\in W\}$, then $f(A^{x,r})\subseteq \{W\in\mathcal{W}_k\colon x\in W\}$. Finally, by applying again anonymity and neutrality to $A^{x,r}$, it is easy to see that this subset relation must be equal. 
    
   Now, for finding the index $r$ such that $f(A^{-x,r})=\{W\in\mathcal{W}_k\colon x\not\in W\}$, we first investigate the choice of $f$ for single ballots. By neutrality, it holds that $f(\mathcal{C})=\mathcal{W}_k$ for the ballot $\mathcal{C}$ in which all candidates are approved. Next, consider an arbitrary candidate $x\in\mathcal{C}$. By independence of losers, we know that $\{W\in\mathcal{W}_k\colon x\not\in W\}\subseteq f(\mathcal{C}\setminus \{x\})$. Invoking again neutrality, we derive that there are only two possible outcomes for the ballot $\mathcal{C}\setminus \{x\}$: either $f(\mathcal{C}\setminus \{x\})=\{W\in\mathcal{W}_k\colon x\not\in W\}$ or $f(\mathcal{C}\setminus \{x\})=\mathcal{W}_k$. If the former is the case, the profile only containing the ballot $\mathcal{C}\setminus \{x\}$ constitutes $A^{-x,r}$ for $r=m-1$ and we are done.

    Hence, suppose that $f(\mathcal{C}\setminus \{x\})=\mathcal{W}_k$. By neutrality, the same holds for every ballot of size $m-1$. In this case, we can simply repeat deleting candidates from $\mathcal{C}\setminus \{x\}$ until we arrive at a ballot $\mathcal{C}\setminus X$ such that $f(\mathcal{C}\setminus X)\neq\mathcal{W}_k$; such a ballot must exist as consistency otherwise implies that $f$ is trivial. Moreover, we suppose that the set of deleted candidates $X$ is minimal. By neutrality, this means that $f(\mathcal{C}\setminus Y)=\mathcal{W}_k$ for every set of candidates $Y$ with $|Y|<|X|$. Now, let $y$ denote an arbitrary candidate in $X$ and define $X^y=X\setminus \{y\}$. By our previous insight, $f(\mathcal{C}\setminus X^y)=\mathcal{W}_k$ and independence of losers then shows that $\{W\in\mathcal{W}_k\colon y\not\in W\}\subseteq f(\mathcal{C}\setminus X)$. Since $y$ is chosen arbitrarily, we can apply this argument for every $y\in X$ and derive that $\{W\in\mathcal{W}_k\colon X\not\subseteq W\}\subseteq f(\mathcal{C}\setminus X)$. Moreover, since $f(\mathcal{C}\setminus X)\neq\mathcal{W}_k$ and $f$ is neutral, we infer that this subset relation must actually be an equality. Also, since $f(\mathcal{C}\setminus X)\neq\mathcal{W}_k$, we get that $|X|\leq k$ because otherwise the set $\{W\in\mathcal{W}_k\colon X\not\subseteq W\}$ contains all committees. 
    
    Now, fix a candidate $x\in\mathcal{C}$ and consider $A^{-x,r}$ for $r=m-|X|$. This profile contains each ballot $C\setminus Y$ with $|Y|=|X|$ and $x\in Y$ once. By neutrality, we have for each of these ballots that $f(\mathcal{C}\setminus Y)=\{W\in\mathcal{W}_k\colon Y\not\subseteq W\}$. In particular, since $x\in Y$ for all considered ballots, it holds that $\{W\in\mathcal{W}_k\colon x\not\in W\}\subseteq f(\mathcal{C}\setminus Y)$ for all ballots in $A^{-x,r}$. Hence, consistency applies for $A^{-x,r}$ and shows that $f(A^{-x,r})=\bigcap_{Y\subseteq \mathcal{C}\colon |Y|=|X|\land x\in Y} f(\mathcal{C}\setminus Y)$. This means that $W\not\in f(A^{-x,r})$ for all committees $W\in\mathcal{W}_k$ with $x\in W$ as there is a set $Y$ with $x\in Y$ and $|Y|=|X|$ such that $Y\subseteq W$. Conversely, it immediately follows from consistency that $W\in f(A^{-x,r})$ for all committees $W\in\mathcal{W}_k$ with $x\not\in W$. Hence, we have that $f(A^{-x,r})=\{W\in\mathcal{W}_k\colon x\not \in W\}$, which concludes the proof of the lemma.
\end{proof}

Since the ballot size $r$ for the profiles $A^{x,r}$ and $A^{-x,r}$ plays no role in our subsequent analysis, we will omit it and mean by $A^x$ and $A^{-x}$ the profiles $A^{x,r}$ and $A^{-x,r}$ for the ballot size $r$ given by \Cref{lem:iolProfiles}.

Moreover, based on \Cref{lem:iolProfiles}, it follows straightforwardly that every non-trivial ABC voting rule $f\in\mathcal{F}^2$ is non-imposing. For this, it suffices to consider the profile $A^W$ that consists of a copy of $A^{x}$ for every candidate $x\in W$ as consistency then ensures that $f(A^W)=\{W\}$. Since the trivial rule is clearly the Thiele rule defined by $s(x)=0$ for all $x$, we therefore focus from now on non-imposing rules. This allows us to use the vectors $\hat u^{i,j}$ constructed in \Cref{lem:hyperplanes} and the functions $s_r^1$ constructed in \Cref{lem:s1}. As the next step, we use independence of losers to remove the dependence on the ballot size of the functions $s_r^1$. 

\begin{lemma}\label{lem:IoLCommitteeSymmetry}
    Let $f\in\mathcal{F}^2$ denote a non-imposing ABC voting rule. There is a function $s^1(x,y)$ such that $s^1(|W^i\cap B(\ell)|, |W^j\cap B(\ell)|)=\hat u^{i,j}_\ell$ for all committees $W^i, W^j\in\mathcal{W}_k$ and ballots $B(\ell)\in\mathcal{A}$ with $|W^i\setminus W^j|=1$. 
\end{lemma}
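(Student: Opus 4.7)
The plan is to show that the functions $s_r^1$ from \Cref{lem:s1} actually coincide on every admissible pair $(x,y)$, independent of the ballot size $r$, which lets me define $s^1(x,y) := s_r^1(x,y)$ unambiguously. Fix $W^i, W^j \in \mathcal{W}_k$ with $W^i \setminus W^j = \{a\}$ and $W^j \setminus W^i = \{b\}$. When $x = y$, Claim (2) of \Cref{lem:s1} gives $s_r^1(x,y) = 0$ for every $r$, and when $|x - y| = 1$, Claim (3) of \Cref{lem:s1} lets me reduce to $x = y + 1$, i.e.\ $a \in B(\ell)$ and $b \notin B(\ell)$. The task thus reduces to proving $\hat u^{i,j}_{\ell_1} = \hat u^{i,j}_{\ell_2}$ for two arbitrary ballots $B(\ell_1), B(\ell_2)$ with $|B(\ell_\iota) \cap W^i| = x$ and $|B(\ell_\iota) \cap W^j| = y$.

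By applying a neutrality permutation that fixes $W^i$ and $W^j$ setwise, I may assume without loss of generality that $B(\ell_1) \cap (W^i \cup W^j) = B(\ell_2) \cap (W^i \cup W^j)$ as sets. Define $B^{**} := B(\ell_1) \cap B(\ell_2)$. This is a non-empty ballot (containing $a$) satisfying $B^{**} \subseteq B(\ell_\iota)$, $B^{**} \cap W^i = B(\ell_\iota) \cap W^i$, and $B^{**} \cap W^j = B(\ell_\iota) \cap W^j$ for $\iota \in \{1,2\}$. These are exactly the conditions needed to invoke independence of losers simultaneously for $W^i$ and $W^j$ when replacing $B(\ell_\iota)$ by $B^{**}$ at a single voter.

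The central construction is the profile $A_0 = \sum_{c \in W^i \cap W^j} A^c + \sum_{c \notin W^i \cup W^j} A^{-c}$, which by iterated consistency and \Cref{lem:iolProfiles} satisfies $f(A_0) = \{W^i, W^j\}$, since the size-$k$ committees that both contain $W^i \cap W^j$ and are contained in $W^i \cup W^j$ are exactly $W^i$ and $W^j$ (the border cases $k \in \{1,m-1\}$ are handled by \Cref{prop:Bordercase}). Letting $\tau = (a\,b)$, the profile $A_0$ is $\tau$-invariant since $\tau$ fixes $W^i \cap W^j$ and $\mathcal{C} \setminus (W^i \cup W^j)$ pointwise. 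I then consider $A = \mu A_0 + B(\ell_1) + \tau(B(\ell_1)) + B(\ell_2) + \tau(B(\ell_2))$ and choose $\mu$ large enough that continuity gives $f(A) \subseteq f(A_0) = \{W^i, W^j\}$. Because $A$ is $\tau$-invariant and $\tau(W^i) = W^j$, neutrality forces $f(A)$ to be $\tau$-invariant and non-empty, so $f(A) = \{W^i, W^j\}$ and, by Claim (1) of \Cref{lem:hyperplanes}, $v(A)\hat u^{i,j} = 0$.

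Finally, I replace the voter in $A$ with ballot $B(\ell_1)$ by one with ballot $B^{**}$; independence of losers applied to $W^i$ and $W^j$ ensures both remain winners in the resulting profile $A^{(1)}$, so $v(A^{(1)}) \hat u^{i,j} = 0$. Subtracting from $v(A) \hat u^{i,j} = 0$ gives $\hat u^{i,j}_{\ell_1} = \hat u^{i,j}_{B^{**}}$; the symmetric manipulation with $B(\ell_2)$ yields $\hat u^{i,j}_{\ell_2} = \hat u^{i,j}_{B^{**}}$, completing the argument. The hardest step will be ensuring $f(A) = \{W^i, W^j\}$: the combination of continuity (to trap $f(A)$ inside $\{W^i, W^j\}$) with $\tau$-symmetry (to exclude $f(A) = \{W^i\}$ alone) is what makes this work.
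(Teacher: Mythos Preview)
Your argument is correct, but it takes a noticeably heavier route than the paper's. Both proofs reduce to showing $\hat u^{i,j}_{\ell_1}=\hat u^{i,j}_{\ell_2}$ by finding a base profile $A$ with $W^i,W^j\in f(A)$ (hence $v(A)\hat u^{i,j}=0$), applying independence of losers to swap one ballot down to a common sub-ballot, and then subtracting. The difference is in the choice of base profile. You build $A_0=\sum_{c\in W^i\cap W^j}A^c+\sum_{c\notin W^i\cup W^j}A^{-c}$ with $f(A_0)=\{W^i,W^j\}$, add the four extra ballots, and then invoke continuity together with the $(a\,b)$-symmetry to force $f(A)=\{W^i,W^j\}$. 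The paper simply takes the profile $A^*$ in which every ballot occurs exactly once; anonymity and neutrality alone give $f(A^*)=\mathcal{W}_k$ and hence $v(A^*)\hat u^{i,j}=0$, bypassing both \Cref{lem:iolProfiles} and continuity entirely. After one ballot replacement via independence of losers the paper subtracts, exactly as you do.

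Two small remarks. First, your parenthetical appeal to \Cref{prop:Bordercase} for $k\in\{1,m-1\}$ is not quite apt: that proposition uses weak efficiency, not independence of losers, and in any case your construction of $A_0$ already works for those $k$ (the only degenerate case $m=2$, $k=1$ is trivial since all relevant ballots have the same size). Second, your intermediate ballot $B^{**}=B(\ell_1)\cap B(\ell_2)$ is a neat device that handles both directions at once; the paper instead permutes the smaller ballot inside the larger one, which is equivalent but requires an explicit case split on which ballot is bigger.
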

\begin{proof}
    Let $f\in\mathcal{F}^2$ denote a non-imposing ABC voting rule and consider two committees $W^i, W^j\in\mathcal{W}_k$ with $|W^i\setminus W^j|=1$. 
    Now, by \Cref{lem:s1}, there are functions $s^1_r$ such that $\hat u^{i,j}_\ell=s_{|B(\ell)|}(|W^i\cap B(\ell)|, |W^j\cap B(\ell)|)$ for all $B(\ell)\in\mathcal{A}$. For proving this lemma, it thus suffices to show that $s_{|B(\ell)|}(|W^i\cap B(\ell)|, |W^j\cap B(\ell)|)=s_{|B(\ell')|}(|W^i\cap B(\ell')|, |W^j\cap B(\ell')|)$ for any two ballots $B(\ell), B(\ell')$ with $|W^i\cap B(\ell)|=|W^i\cap B(\ell')|$ and $|W^j\cap B(\ell)|=|W^j\cap B(\ell')|$. If $|B(\ell)|=|B(\ell')|$, this follows from the definition of the functions $s^1_r$. Moreover, if $|W^i\cap B(\ell)|=|W^i\cap B(\ell')|=|W^j\cap B(\ell)|=|W^j\cap B(\ell')|$, then Claim (2) of \Cref{lem:s1} shows that $s_{|B(\ell)|}(|W^i\cap B(\ell)|, |W^j\cap B(\ell)|)=s_{|B(\ell')|}(|W^i\cap B(\ell')|, |W^j\cap B(\ell')|)=0$. 
    
    Hence, we suppose that $|B(\ell)|\neq|B(\ell')|$ and $|W^i\cap B(\ell)|\neq |W^j\cap B(\ell)|$. Without loss of generality, we make this more precise by assuming that $|B(\ell)|>|B(\ell')|$ and $|W^i\cap B(\ell)|=|W^j\cap B(\ell)|+1$. In particular, the latter observation means that $a\in B(\ell)\cap B(\ell')$ and $b\not\in B(\ell)\cup B(\ell')$ for the candidates $\{a\}=W^i\setminus W^j$ and $\{b\}=W^j\setminus W^i$. By this insight, it it easy to see that there is a permutation $\tau$ such that $B(\ell'')=\tau(B(\ell'))\subseteq B(\ell)$, $|W^i\cap B(\ell)|=|W^i\cap B(\ell'')|$ and $|W^j\cap B(\ell)|=|W^j\cap B(\ell'')|$. Moreover, it holds that $s_{|B(\ell')|}(|W^i\cap B(\ell')|, |W^j\cap B(\ell')|)=s_{|B(\ell'')|}(|W^i\cap B(\ell'')|, |W^j\cap B(\ell'')|)$, so it suffices to show that $s_{|B(\ell'')|}(|W^i\cap B(\ell'')|, |W^j\cap B(\ell'')|)=s_{|B(\ell)|}(|W^i\cap B(\ell)|, |W^j\cap B(\ell)|)$. 

    For this, consider the profile $A$ in which all ballots are reported once. Clearly, $f(A)=\mathcal{W}_k$ by anonymity and neutrality. Next, let $A'$ denote the profile derived from $A$ by replacing the ballot $B(\ell)$ with $B(\ell'')$. Since $B(\ell'')\subseteq B(\ell)$, $|W^i\cap B(\ell)|=|W^i\cap B(\ell'')|$, and $|W^j\cap B(\ell)|=|W^j\cap B(\ell'')|$, this means that we only disapprove candidates $x\in\mathcal{C}\setminus (W^i\cup W^j)$. So, independence of losers implies that $W^i, W^j\in f(A')$. Now, consider the vector $\hat u^{i,j}$ given by \Cref{lem:hyperplanes}. By Claim (1) of this lemma, we have that $v(A)\hat u^{i,j}=v(A')\hat u^{i,j}=0$ because $v(A), v(A')\in \bar R_i^f$ and $v(A), v(A')\in \bar R_j^f$. Hence, $v(A)\hat u^{i,j}- v(A')\hat u^{i,j}=\hat u^{i,j}_\ell-\hat u^{i,j}_{\ell''}=0$. This implies that $s_{|B(\ell)|}(|W^i\cap B(\ell)|, |W^j\cap B(\ell)|)=\hat u^{i,j}_\ell=\hat u^{i,j}_{\ell''}=s_{|B(\ell'')|}(|W^i\cap B(\ell'')|, |W^j\cap B(\ell'')|)$, thus proving the lemma. 
\end{proof}

We note that the function $s^1$ clearly inherits the symmetry properties of the functions $s_r^1$ discussed in Claims (2) and (3) of \Cref{lem:s1}. 

Now, it follows essentially from \Cref{lem:iolProfiles,lem:IoLCommitteeSymmetry} as well as the proof of \Cref{prop:Bordercase} that all rules in $\mathcal{F}^2$ are Thiele rule if $k=1$ or $k=m-1$. We thus focus on the case that $2\leq k\leq m-2$. To show this case, we need to relate the vectors $\hat u^{i,j}$ and $\hat u^{i',j'}$ to each other for committees $W^i, W^j, W^{i'}, W^{j'}$ with $|W^i\setminus W^j|\neq |W^{i'}\setminus W^{j'}|$. For doing so, we consider a sequence of sequence of committees $W^{j_0}, \dots, W^{j_t}$ such that $|W^{j_0}\setminus W^{j_t}|=t$ and $|W^{j_{x-1}}\setminus W^{j_x}|=1$ for every $x\in \{1,\dots,t\}$. Our goal is then to show that  $\hat u^{j_{0},j_t}$ is the (scaled) sum over the vectors $\hat u^{j_{x-1}, j_x}$. To this end, we proceed analogously as in \Cref{app:BSWAV} and first show that the vectors $\{\hat u^{j_0, j_1},\dots, \hat u^{j_{t-1}, j_t}\}$ are linearly independent.

\begin{lemma}\label{lem:iolLinearIndependence}
    Suppose $2\leq k\leq m-2$ and let $f\in\mathcal{F}^2$ denote a non-imposing ABC voting rule. Moreover, consider a sequence of committees $W^{j_0},\dots, W^{j_t}$ for $t\geq 2$ such that $|W^{j_0}\setminus W^{j_t}|=t$ and $|W^{j_{x-1}}\setminus W^{j_{x}}|=1$ for all $x\in \{1,\dots, t\}$. The vectors $\hat u^{j_0,j_1}, \hat u^{j_1,j_2}, \dots, \hat u^{j_{t-1},j_t}$ are linearly independent.
\end{lemma}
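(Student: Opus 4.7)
The plan is to exhibit, for each $i\in\{1,\dots,t\}$, a test vector $v^i\in\mathbb{Q}^{|\mathcal{A}|}$ such that $v^i\cdot \hat u^{j_{i-1},j_i}\neq 0$ while $v^i\cdot \hat u^{j_{x-1},j_x}=0$ for every $x\neq i$. Once this is in place, linear independence follows by pairing any hypothetical relation $\sum_x c_x \hat u^{j_{x-1},j_x}=0$ with $v^i$, which forces $c_i=0$ for each $i$.

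Write $a_x\in W^{j_{x-1}}\setminus W^{j_x}$ and $b_x\in W^{j_x}\setminus W^{j_{x-1}}$ for the candidates removed and added at the $x$-th step. Each step changes the symmetric difference with $W^{j_0}$ by at most $2$; since this symmetric difference must grow to $|W^{j_0}\setminus W^{j_t}|+|W^{j_t}\setminus W^{j_0}|=2t$ in exactly $t$ steps, no candidate can be swapped twice, so $a_1,b_1,\dots,a_t,b_t$ are pairwise distinct. Moreover, combining \Cref{lem:iolProfiles} with consistency yields $f(A^W)=\{W\}$ for every committee $W$, where $A^W=\sum_{c\in W}A^c$; in particular the profiles $A^c$ are available. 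I take $v^i=v(A^{a_i})$.

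For the easy (zero) direction, fix $x\neq i$ and let $\tau$ be the involution that swaps $a_x$ with $b_x$ and fixes every other candidate. Because $a_i\notin\{a_x,b_x\}$, the profile $A^{a_i}$ is $\tau$-invariant up to renaming voters, while $\tau(W^{j_{x-1}})=W^{j_x}$ and $\tau(W^{j_x})=W^{j_{x-1}}$; Claims~(2) and~(3) of \Cref{lem:hyperplanes} then give $\tau(\hat u^{j_{x-1},j_x})=\hat u^{j_x,j_{x-1}}=-\hat u^{j_{x-1},j_x}$. Invariance of the scalar product under simultaneous permutation yields $v^i\cdot \hat u^{j_{x-1},j_x}=\tau(v^i)\cdot \tau(\hat u^{j_{x-1},j_x})=-v^i\cdot \hat u^{j_{x-1},j_x}$, so $v^i\cdot \hat u^{j_{x-1},j_x}=0$.

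The nonzero direction is the main obstacle. I first claim $v(A^{W^{j_{i-1}}})\cdot \hat u^{j_{i-1},j_i}>0$. Suppose for contradiction that this inner product vanishes. Since $f(A^{W^{j_{i-1}}})=\{W^{j_{i-1}}\}$, continuity guarantees for every ballot $B$ some $\lambda\in\mathbb{N}$ with $f(\lambda A^{W^{j_{i-1}}}+\{B\})=\{W^{j_{i-1}}\}$, placing $\lambda v(A^{W^{j_{i-1}}})+v(B)$ inside $\bar R_{j_{i-1}}^f$ and hence forcing $v(B)\cdot \hat u^{j_{i-1},j_i}\geq 0$. Applying the involution $\tau_i$ that swaps $a_i$ with $b_i$ (which bijectively permutes ballots while negating $\hat u^{j_{i-1},j_i}$) reverses the sign, so $v(B)\cdot \hat u^{j_{i-1},j_i}=0$ for every single-ballot profile $B$. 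But this inner product is precisely the coordinate of $\hat u^{j_{i-1},j_i}$ at the index of $B$, which would make $\hat u^{j_{i-1},j_i}$ the zero vector and contradict \Cref{lem:hyperplanes}. Finally, decomposing $v(A^{W^{j_{i-1}}})=v(A^{a_i})+\sum_{c\in W^{j_{i-1}}\cap W^{j_i}}v(A^c)$, every term in the sum vanishes against $\hat u^{j_{i-1},j_i}$ by the same $\tau_i$-symmetry argument (now $\tau_i$ fixes $A^c$ because $c\in W^{j_{i-1}}\cap W^{j_i}$ avoids $\{a_i,b_i\}$), so $v(A^{a_i})\cdot \hat u^{j_{i-1},j_i}=v(A^{W^{j_{i-1}}})\cdot \hat u^{j_{i-1},j_i}>0$, completing the plan.
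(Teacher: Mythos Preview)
Your proof is correct and follows essentially the same approach as the paper: both use the test vectors $v(A^{a_i})$, both establish the vanishing of $v(A^{a_i})\cdot\hat u^{j_{x-1},j_x}$ for $x\neq i$ via the involution swapping $a_x,b_x$, and both extract the nonzero value from the profile $A^{W^{j_{i-1}}}$ by peeling off the zero summands. The one minor variation is that the paper argues $v(A^{W^{j_{i-1}}})$ lies in the \emph{interior} of $\bar R_{j_{i-1}}^f$ (hence all defining inequalities are strict), whereas you instead run a clean contradiction showing that if the inner product were zero then every coordinate of $\hat u^{j_{i-1},j_i}$ would vanish; your route sidesteps any appeal to the interior characterization of a polyhedron and is arguably more self-contained.
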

\begin{proof}
    Let $f\in\mathcal{F}^2$ denote a non-imposing ABC voting rule and consider a sequence of committees $W^{j_0},\dots, W^{j_t}$ as specified by the lemma. The conditions that $|W^{j_0}\setminus W^{j_t}|=t$ and $|W^{j_{x-1}}\setminus W^{x}|=1$ for $x\in \{1,\dots, t\}$ means that when moving from $W^{j_{x-1}}$ to $W^{j_x}$, we need to exchange a candidate $a\in W^{j_{x-1}}\cap (W^{j_0}\setminus W^{j_t})$ with a candidate $b\in W^{j_x}\cap (W^{j_t}\setminus W^{j_0})$. Hence, we can write each committee in this sequence as $W^{j_x}=\{b_1,\dots, b_x, a_{x+1}, \dots, a_t, c_{t+1}, \dots, c_k\}$. In particular, $W^{j_0}\setminus W^{j_t}=\{a_1,\dots a_t\}$, $W^{j_t}\setminus W^{j_0}=\{b_1, \dots, b_t\}$, and $W^{j_0}\cap W^{j_t}=\{c_{t+1}, \dots, c_k\}$.

    For showing that the vectors $\hat u^{j_{x-1},j_x}$ for $x\in \{1,\dots, t\}$ are linearly independent, we consider the matrix $M$ whose $x$-th row corresponds to the vector $\hat u^{j_{x-1},j_x}$ for $x\in \{1,\dots, t\}$. In more detail, we will show that the image of $M$ has full dimension, i.e., $\{w\in \mathbb{R}^t\colon \exists v\in \mathbb{R}^{|\mathcal{A}|}\colon Mv=w\}=\mathbb{R}^t$. This suffices to prove the lemma because the image dimension of a matrix is equivalent to its rank, which is equivalent to the number of linearly independent rows. To this end, we consider the profiles $A^{x}$ constructed in \Cref{lem:iolProfiles}. In more detail, we claim that the vectors $w=M v(A^{a_x})$ satisfy $w_x\neq 0$ and $w_y=0$ for $y\neq x$, which implies that the image of $M$ has full dimension. 

    To prove this claim, we consider first two indices $x,y\in \{1,\dots, t\}$ with $x\neq y$. Now, let $\tau$ denote the permutation defined by $\tau(a_y)=b_y$, $\tau(b_y)=a_y$, and $\tau(z)=z$ for all other candidates. It is easy to see $\tau(v(A^{a_x}))=v(A^{a_x})$ due to the symmetry of this profile and $\tau(\hat u^{j_{y-1}, j_y})=\hat u^{j_{y}, j_{y-1}}=-\hat u^{j_{y-1}, j_y}$ because of \Cref{lem:hyperplanes}. Hence, it follows that $v(A^x)\hat u^{j_{y-1}, j_y}=\tau(v(A^x))\tau(\hat u^{j_{y-1}, j_y})=-v(A^x)\hat u^{j_{y-1}, j_y}$, which is only possible if $v(A^x)\hat u^{j_{y-1}, j_y}=0$.

    Finally, for showing that $v(A^{a_x})\hat u^{j_{x-1}, j_x}\neq 0$, we consider a committee $W^i$ with $a_x\in W^i$, $b_x\not\in W^i$ and define $A$ as the profile that contains a copy of $A^{z}$ for every $z\in W^i$. By an analogous argument as in the last paragraph, we infer that $v(A)\hat u^{j_{x-1}, j_x}=\sum_{z\in W^i} v(A^{z})\hat u^{j_{x-1}, j_x}=v(A^{a_x})\hat u^{j_{x-1}, j_x}$. On the other hand, consistency and \Cref{lem:iolProfiles} show that $f(A)=\{W^i\}$. Moreover, continuity implies that there is a $\lambda\in\mathbb{N}$ such that $f(\lambda A+ B(\ell))=\{W^i\}$ for all ballots $B(\ell)$, so we can infer that $v(A)\in \text{int} \bar R_i^f$. This means that $v(A)\hat u^{j_{x}, j_{x-1}}>0$ by Claim (1) of \Cref{lem:hyperplanes} and thus, $v(A)\hat u^{j_{x}, j_{x-1}}=v(A^{a_x})\hat u^{j_{x-1}, j_x}<0$. 
\end{proof}

Our next goal is to show that the linear independence observed in \Cref{lem:iolLinearIndependence} turns into a linear dependence if we add the vector $\hat u^{j_0, j_t}$ to the set. For this, we first show an auxiliary claim. 

\begin{lemma}\label{lem:iolPathIndependence}
    Assume $2\leq k\leq m$ and let $f\in\mathcal{F}^2$ denote a non-imposing ABC voting rule. Moreover, consider an arbitrary sequence of committees $W^{j_0},\dots, W^{j_t}\in\mathcal{W}_k$ such that $|W^{j_0}\setminus W^{j_t}|=t$ and $|W^{j_{x-1}}\setminus W^{j_x}|=1$ for all $x\in \{1,\dots, t\}$. 
    Let $B(\ell)$ be any ballot such that $|W^{j_0}\cap B(\ell)|  \leq |W^{j_t}\cap B(\ell)|$. It holds that $\sum_{x=1}^t s^1(|B(\ell)\cap W^{j_{x-1}}|, |B(\ell)\cap W^{j_{x}}|) = \sum_{x=|W^{j_0}\cap B(\ell)|+1}^{|W^{j_t}\cap B(\ell)|} s^1(x-1,x)$.
\end{lemma}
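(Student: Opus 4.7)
The plan is to reinterpret the left-hand side as a discrete line integral along the walk $\sigma_x := |B(\ell) \cap W^{j_x}|$ for $x = 0, \dots, t$ on $\mathbb{Z}$. With this notation the sum reads $\sum_{x=1}^{t} s^1(\sigma_{x-1}, \sigma_x)$ and the hypothesis becomes $\sigma_0 \leq \sigma_t$. Because $|W^{j_{x-1}} \setminus W^{j_x}| = 1$ and $|W^{j_0} \setminus W^{j_t}| = t$, each step exchanges exactly one candidate in $W^{j_0} \setminus W^{j_t}$ for one in $W^{j_t} \setminus W^{j_0}$, so $\sigma_x - \sigma_{x-1} \in \{-1, 0, 1\}$ for every $x$.

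First I would use the symmetry of $s^1$ supplied by \Cref{lem:s1}: Claim (2) gives $s^1(y, y) = 0$, so stationary steps contribute nothing; Claim (3) gives $s^1(y_1, y_2) = -s^1(y_2, y_1)$, so a down-step from $y$ to $y-1$ contributes $-s^1(y-1, y)$ while an up-step from $y-1$ to $y$ contributes $+s^1(y-1, y)$. Grouping the sum by adjacent level pairs $(y-1, y)$, and letting $u_y$ (resp.\ $d_y$) denote the number of up-crossings (resp.\ down-crossings) of this edge along the walk, the left-hand side collapses to $\sum_{y \in \mathbb{Z}} (u_y - d_y)\, s^1(y-1, y)$.

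Then I would apply the standard telescoping identity for crossing numbers. For each fixed $y$, the sequence $x \mapsto \mathbf{1}[\sigma_x \geq y]$ increases by $1$ at every up-crossing of level $y$, decreases by $1$ at every down-crossing, and is constant at every stationary step, so summing its increments over $x = 1, \dots, t$ yields $u_y - d_y = \mathbf{1}[\sigma_t \geq y] - \mathbf{1}[\sigma_0 \geq y]$. Under the hypothesis $\sigma_0 \leq \sigma_t$ this difference equals $1$ exactly when $\sigma_0 < y \leq \sigma_t$ and $0$ otherwise, reducing the sum to $\sum_{y = \sigma_0 + 1}^{\sigma_t} s^1(y-1, y)$, namely the right-hand side. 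I do not anticipate any real obstacle: the statement is essentially discrete path independence for the ``one-form'' $y \mapsto s^1(y-1, y)$, with the vanishing on the diagonal and the skew-symmetry of $s^1$ already established in \Cref{lem:s1} doing all the work; the only care needed is bookkeeping, which the conservation-law argument handles automatically even if the walk overshoots $\sigma_t$ or dips below $\sigma_0$ along the way.
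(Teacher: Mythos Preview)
Your proof is correct and follows essentially the same approach as the paper: both reduce the sum to the telescoped form using only the diagonal-zero and skew-symmetry properties of $s^1$ from \Cref{lem:s1}. The paper carries this out by iteratively deleting stationary steps and cancelling adjacent up/down pairs until a monotone subsequence remains, whereas your crossing-number formulation packages the same cancellation as a single telescoping identity; the underlying argument is identical.
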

\begin{proof}
Let $f\in\mathcal{F}^2$ denote an non-imposing ABC voting rule and let the committees $W^{j_0},\dots, W^{j_t}$ and the ballot $B(\ell)$ be defined as in the lemma. Furthermore, let $i^x = |B(\ell)\cap W^{j_x}|$ be the $x$-th intersection size and consider the sum $\sum_{x=1}^{t} s^1(i^{x-1}, i^{x})$. If $i^{x-1}=i^{x}$, then $s^1(i^{x-1}, i^{x})=0$. Hence, we can shorten the sum by removing all such terms without affecting the sum. More rigorously, we define $y_x=0$ and $\hat i^{x}=i^x$ for $x=0$, and $\hat i^{x+1}=i^{y_{x+1}}$ where $y_{x+1}$ is the smallest integer $y>y_x$ such that $i^y\neq i^{y_x}$. Moreover, let $\hat t$ denote the length of this new sequence and observe that $\hat i^x=\hat i^{x+1}-1$ or $\hat i^x=\hat i^{x+1}+1$ for all $x\in \{0,\dots, \hat t-1\}$. Furthermore, by definition, $\sum_{x=1}^{t} s^1(i^{x-1}, i^{x})= \sum_{x=1}^{\hat t} s^1(\hat i^{x-1}, \hat i^{x})$. 

Next, we suppose that $\hat t>|W^{j_t}\cap B(\ell)|-|W^{j_0}\cap B(\ell)|$. In this case, it is straightforward to see that there must be an index $\hat i^{x}$ such that $\hat i^{x}=\hat i^{x+2}$. By \Cref{lem:s1}, we thus have that $s_1(i^x, i^{x+1})=-s_1(i^{x+1}, i^{x})=-s_1(i^{x+1}, i^{x+2})$ and we can hence remove these two terms from our sum. Clearly, we can then compress our indices again and repeat the argument until we have only $\bar t\leq |W^{j_t}\cap B(\ell)|-|W^{j_0}\cap B(\ell)|$ intersection sizes left. Let $\bar i^{0},\dots, \bar i^{\bar t}$ denote this reduced set and note that $|\bar i^{x}-\bar i^{x+1}|=1$ for all $x$. Moreover, it is not difficult to see that $\bar i^{0}=i^0$ and $\bar i^{t}=i^t$, so we have that $\bar i^{x-1}=\bar i^{x}-1$ for all $x\in \{1,\dots, \bar t\}$ and $\bar t=|W^{j_t}\cap B(\ell)|-|W^{j_0}\cap B(\ell)|$. \sloppy{Finally, since we only remove terms that sum up to $0$, it clearly holds that $\sum_{x=1}^t s^1(i^{x-1}, i^x)=\sum_{x=1}^{\bar t} s^1(\bar i^{x-1}, i^x)s^1(\bar i^{x-1}, i^x)=\sum_{x=|W^{j_0}\cap B(\ell)|+1}^{|W^{j_t}\cap B(\ell)|} s^1(x-1,x)$, thus proving the lemma.}
\end{proof}

\sloppy{Based on \Cref{lem:iolLinearIndependence,lem:iolPathIndependence}, we will now show that the vector $\hat u^{j_0, j_t}$ can be represented as (scaled) sum of the vectors $\hat u^{j_0,j_1}, \hat u^{j_1,j_2}, \dots, \hat u^{j_{t-1},j_t}, \hat u^{j_0,j_t}$ are linearly dependent.}

\begin{lemma}\label{lem:iolLinearDependence}
Suppose $2\leq k\leq m-2$ and let $f\in\mathcal{F}^2$ denote a non-imposing ABC voting rule. Moreover, consider an arbitrary sequence of committees $W^{j_0},\dots, W^{j_t}\in\mathcal{W}_k$ such that $|W^{j_0}\setminus W^{j_t}|=t$ and $|W^{j_{x-1}}\setminus W^{j_x}|=1$ for all $x\in \{1,\dots, t\}$. There is a $\delta>0$ such that $\hat u^{j_0, j_t}=\delta\sum_{x=1}^t \hat u^{j_{x-1}, j_x}$.
\end{lemma}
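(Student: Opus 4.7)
The plan is to follow the template of \Cref{lem:convexAdditivity}, the analogous statement for BSWAV rules, in three stages: first establish the linear dependence of $\hat u^{j_0,j_t}$ with $\hat u^{j_0,j_1},\dots,\hat u^{j_{t-1},j_t}$; then, using \Cref{lem:iolLinearIndependence} to obtain a unique representation $\hat u^{j_0,j_t}=\sum_{x=1}^t\delta_x\hat u^{j_{x-1},j_x}$, argue that all $\delta_x$ coincide; and finally show that this common value is strictly positive.

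For the linear dependence I would adapt the contradiction argument of \Cref{lem:convexLinearDependence}. Assuming for contradiction that $\hat u^{j_0,j_t}$ is independent of the path vectors, duality produces a vector $v\in\mathbb{R}^{|\mathcal{A}|}$ with $v\hat u^{j_0,j_t}=-1$ and with $v\hat u^{j_{x-1},j_x}$ taking prescribed values (as in BSWAV, one assigns a positive constant, flipping to a negative value on the reversed hyperplane). Using \Cref{lem:iolProfiles} and consistency, assemble the profile $A^*=\sum_{z\in W^{j_0}\cap W^{j_t}} A^z+\sum_{z\notin W^{j_0}\cup W^{j_t}} A^{-z}$; by consistency $f(A^*)=[W^{j_0},W^{j_t}]$, so Claim~(1) of \Cref{lem:hyperplanes} gives $v(A^*)\hat u^{W,W'}=0$ for $W,W'\in [W^{j_0},W^{j_t}]$. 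For each $W\notin [W^{j_0},W^{j_t}]$, reduce $v(A^*)\hat u^{W,W'}$ to contributions from at most two of its summands via the transposition-based vanishing argument (a swap of the two candidates distinguishing $W$ and $W'$ kills all $A^z$ and $A^{-z}$ contributions with $z$ outside this pair), and then use the sign of $s^1$ extracted at the end of the proof of \Cref{lem:iolLinearIndependence} to obtain strict negativity for a suitable neighbor $W'$. Setting $v^*=Nv(A^*)+v$ for $N$ sufficiently large gives $v^*\notin\bar R_W^f$ for every $W$, contradicting $\bigcup_W\bar R_W^f=\mathbb{R}^{|\mathcal{A}|}$.

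Once the linear combination is in hand, the remaining steps proceed via the profiles $A^{a_y}$. The transposition swapping $a_x\leftrightarrow b_x$ fixes $A^{a_y}$ for $y\neq x$ and sends $\hat u^{j_{x-1},j_x}$ to $-\hat u^{j_{x-1},j_x}$, so $v(A^{a_y})\hat u^{j_{x-1},j_x}=0$ whenever $x\neq y$, giving $v(A^{a_y})\hat u^{j_0,j_t}=\delta_y\cdot v(A^{a_y})\hat u^{j_{y-1},j_y}$. The transposition $a_1\leftrightarrow a_y$ preserves both $W^{j_0}$ and $W^{j_t}$, hence also $\hat u^{j_0,j_t}$, while sending $A^{a_y}$ to $A^{a_1}$, so $v(A^{a_y})\hat u^{j_0,j_t}$ is a constant $C_1$ independent of $y$. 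A direct expansion using \Cref{lem:IoLCommitteeSymmetry} (grouping the ballots in $A^{a_y}$ by $p=|B(\ell)\cap(W^{j_{y-1}}\cap W^{j_y})|$ and by whether $b_y\in B(\ell)$) shows $v(A^{a_y})\hat u^{j_{y-1},j_y}=\sum_p\binom{k-1}{p}\binom{m-k-1}{r-p-1}s^1(p+1,p)$, which is a constant $C_2$ independent of $y$. Hence $\delta_y=C_1/C_2=:\delta$ for all $y$. For positivity, observe that the profile $A^{W^{j_0}}=\sum_{z\in W^{j_0}}A^z$ satisfies $f(A^{W^{j_0}})=\{W^{j_0}\}$ by consistency, so continuity places $v(A^{W^{j_0}})\in\mathrm{int}\,\bar R_{j_0}^f$, giving $v(A^{W^{j_0}})\hat u^{j_0,j_t}>0$. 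The same swap arguments reduce $v(A^{W^{j_0}})\hat u^{j_0,j_t}$ and $v(A^{W^{j_0}})\hat u^{j_{x-1},j_x}$ to $t\cdot C_1$ and $C_2$ respectively, and combining these with $v(A^{W^{j_0}})\hat u^{j_0,j_1}>0$ forces $\delta>0$.

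The main technical obstacle lies in the first stage. The BSWAV version of this step relied heavily on the additivity of the vectors $\hat u^{a,b}$ and the explicit sign calculus of \Cref{lem:uabProperties}, neither of which is available for general Thiele-style hyperplanes. Replacing these with the transposition-based vanishing for $v(A^z)\hat u^{W,W'}$ and the sign of the universal function $s^1$ requires carefully choosing the dual vector $v$ so that every committee---both inside and outside $[W^{j_0},W^{j_t}]$, and in particular the intermediate path committees $W^{j_x}$ whose adjacent hyperplanes in the span of the path vectors are a priori constrained to equal zero on $v(A^*)$---is strictly separated from $v^*$; this is where most of the bookkeeping goes. Once this is set up correctly, the subsequent symmetry arguments for equality and positivity of the $\delta_x$ go through almost mechanically.
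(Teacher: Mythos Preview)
Your second and third stages (equality of the $\delta_x$ and positivity) are essentially the paper's Step~2 and are fine. The gap is in the first stage, and it is not just bookkeeping.

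Your construction $v^*=Nv(A^*)+v$ handles two kinds of committees: those outside $[W^{j_0},W^{j_t}]$ (via $v(A^*)$) and the $t+1$ path committees $W^{j_0},\dots,W^{j_t}$ (via the dual vector $v$, whose values you control on $\hat u^{j_0,j_t}$ and the $t$ path edges). But the convex hull $[W^{j_0},W^{j_t}]$ contains $\binom{2t}{t}$ committees, and for the $\binom{2t}{t}-(t+1)$ non-path committees $W$ inside it you have $v(A^*)\hat u^{W,W'}=0$ for every $W'$ in the hull, while the dual vector $v$ gives no control over $\hat u^{W,W'}$ since these hyperplanes are not in the span of your $t+1$ vectors. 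In the BSWAV proof this was resolved by the additivity identity $\hat u^{a,b}+\hat u^{b,c}=\hat u^{a,c}$ of \Cref{lem:uabProperties}, which let one compute $v\hat u^{a_x,b_y}$ for all $x,y$ from the prescribed values; you correctly note that this identity is unavailable here, but you do not replace it with anything concrete.

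The paper's resolution is structurally different from what you sketch. It proceeds by \emph{induction on $t$}, so that for every pair $W,W'$ in the hull with $|W\setminus W'|<t$ the vector $\hat u^{W,W'}$ is already known to be a positive multiple of a path sum and hence (via \Cref{lem:iolPathIndependence}) computable from $s^1$. It then makes a case split: if $s^1(x+1,x)$ is constant in $x$, the BSWAV machinery applies verbatim; otherwise one fixes the smallest index $p$ with $s^1(p+1,p)\neq s^1(p,p-1)$ and builds two auxiliary profiles of ballots of size $p+1$ (one forcing $\{a_x,b_x\}\subseteq A$, one forcing $\{a_x,b_{x+1}\}\subseteq A$) whose contributions, computed through the induction hypothesis, are zero on all hyperplanes between committees in $\mathcal{W}^3=\{W^{j_0},\dots,W^{j_t}\}$ but strictly signed on $\hat u^{j_0,W}$ for every non-path $W$ in the hull. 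Stacking these with $v(A^*)$ and the dual vector in three layers eliminates all committees. The induction and the exploitation of the first non-constant step of $s^1$ are the missing ideas; without them your first stage does not close.
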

\begin{proof}
    Let $f\in\mathcal{F}^2$ denote an non-imposing ABC voting rule and consider a sequence of committees $W^{j_0}, \dots, W^{j_t}\in\mathcal{W}_k$ as stated by the lemma. Moreover, we define $W^{j_0}\setminus W^{j_t}=\{a_1,\dots, a_t\}$, $W^{j_t}\setminus W^{j_0}=\{b_1,\dots,b_t\}$, and $W^{j_x}=(W^{j_0}\cap W^{j_t})\cup \{b_1,\dots, b_x, a_{x+1}, \dots, b_t\}$. Next, we consider the function $s^1(x+1,x)$ derived in \Cref{lem:IoLCommitteeSymmetry}. First, if this function is constant, then we can use the same arguments as in \Cref{app:BSWAV}. In particular, note here that we show in \Cref{lem:convexEquidistance} an analogous claim and that the subsequent lemmas (\Cref{lem:uabProperties,lem:convexLinearIndependence,lem:convexLinearDependence,lem:convexAdditivity}) do not rely on choice set convexity or weak efficiency. 
    
    We hence suppose that there is a index $p\geq 1$ such $s(p+1,p)\neq s(p,p-1)$. Moreover, we define $p$ as minimal such value and let $\alpha=s^1(1,0)$. In particular, we have that $s^1(x+1,x)=\alpha$ for all $x< p$. In this case, we will prove the lemma by an induction on the length of the considered sequence. First, if $t=1$, the statement is trivial and the induction basis thus holds. Hence, we aim to show the lemma for $t>1$ and suppose that there is a $\delta'$ with $\hat u^{i_0, i_{t'}}=\delta'\sum_{x=1}^{t'} \hat u^{i_{x-1}, i_x}$ for all sequences of committees $W^{i_0}, \dots, W^{i^{t'}}$ with $|W^{i_0}\setminus W^{i_{t'}}|=t'<t$ and $|W^{i_{x-1}}\setminus W^{i_x}|=1$ for all $x\in \{1,\dots, t'\}$. To prove the induction step for our sequence $W^{j_0}, \dots, W^{j_t}$, we proceed in multiple steps. In our first four steps (Steps 1.1 to 1.4), we will show that the vectors $\{\hat u^{j_0,j_1}, \hat u^{j_1,j_2}, \dots, \hat u^{j_{t-1},j_t}, \hat u^{j_0,j_t}\}$ are linearly dependent. Based on \Cref{lem:iolLinearIndependence}, it thus follows that there are coefficients $\delta_x$, not all of which are $0$, such that $\hat u^{j_0,j_t}=\sum_{x=1}^t \delta_x \hat u^{j_{x-1}, j-x}$. In the last step (Step 2), we then show that all $\delta_x$ are the same and greater $0$, thus proving the lemma.\medskip
     
    \textbf{Step 1: The vectors $\{\hat u^{j_0,j_1}, \hat u^{j_1,j_2}, \dots, \hat u^{j_{t-1},j_t}, \hat u^{j_0,j_t}\}$ are linearly independent.}

    Assume for contradiction that the given vectors are linearly independent. To derive a contradiction to this assumption, we will construct a vector $v^4$ that is not contained in $\bar R^f_i$ for any $W^i\in\mathcal{W}_k$. Consequently, $\bigcup_{x\in  \{1,\dots,|\mathcal{W}_k|\}} \bar R_i^f\neq \mathbb{R}^{|\mathcal{A}|}$. Just as for \Cref{lem:convexLinearDependence}, this contradicts the insight that $\bigcup_{x\in  \{1,\dots,|\mathcal{W}_k|\}} R_i^f=\mathbb{Q}^{|\mathcal{A}|}$ and therefore $\bigcup_{x\in  \{1,\dots,|\mathcal{W}_k|\}} \bar R_i^f=\mathbb{R}^{|\mathcal{A}|}$. So, the assumption that the set $\{\hat u^{j_0,j_1}, \hat u^{j_1,j_2}, \dots, \hat u^{j_{t-1},j_t}, \hat u^{j_0,j_t}\}$ is linearly independent must have been wrong. For constructing $v^4$, we will step by step narrow down the choice set.\medskip

    \textbf{Step 1.1:} For our first step, let $\mathcal{W}^1=\{W\in\mathcal{W}^k\colon W^i\cap W^j\subseteq W\subseteq W^{i}\cup W^j\}$, i.e., $\mathcal{W}^1$ is the convex hull of $W^i$ and $W^j$. We will construct a vector $v^1$ such that for every committee $W^i\not\in \mathcal{W}^1$, there is another committee $W^j$ such that $v^1\hat u^{i,j}<0$. This shows that $v^1\not\in \bar R_i^f$ for these committees by Claim (1) of \Cref{lem:hyperplanes}. 
    
    For constructing this vector, we recall the profiles $A^x$ and $A^{-x}$ constructed in \Cref{lem:iolProfiles}. First, we note that $v(A^x)\hat u^{i,j}=v(A^{-x})\hat u^{',j}=0$ for all committees $W^{i},W^{j}$ with $|W^{i}\setminus W^{j}|=1$, $\{x\}\neq \{a\}=W^{i}\setminus W^{j}$, and $\{x\}\neq \{b\}=W^{j}\setminus W^{i}$. For showing this claim, considering the permutation $\tau$ with $\tau(a)=b$, $\tau(b)=a$, and $\tau(z)=z$ for all remaining candidates. It is now easy to verify that $v(A^x) \hat u^{i,j}=\tau (v(A^x)) \tau(\hat u^{i,j})=v(A^x)\hat u^{j,i}=-v(A^x) \hat u^{i,j}$ due to the symmetry of $A^x$ and \Cref{lem:hyperplanes}. This is only possible if $v(A^x) \hat u^{i,j}=0$ and an analogous argument also shows our claim for $A^{-x}$. Furthermore, it holds that $v(A^x)\hat u^{i,j}>0$ and $v(A^{-x})\hat u^{i,j}<0$ for all committees $W^{i},W^{j}$ with $W^{i}\setminus W^{j}=\{x\}$. For showing this, consider the profile $A$ that consists of a copy of $A^{x}$for every $x\in W^{i}$ (the claim for $A^{-x}$ works analogously by considering the profile consisting of $A^{-x}$ for $x\not\in W^{i}$). By consistency, $f(A)=\{W^i\}$, and by continuity, we infer that $v(A)\in \text{int } \bar R_{i}^f$. Hence, by Claim (1) of \Cref{lem:hyperplanes}, $v(A)\hat u^{i,j}>0$. On the other hand, we have that $v(A)\hat u^{i,j}=\sum_{z\in W^{i}} v(A^z)\hat u^{i,j}=v(A^x)\hat u^{i,j}$ as $v(A^z) \hat u^{i,j}=0$ for all $z\in W^{i}\cap W^{j}$. Combining these insights shows that $v(A^x)\hat u^{i,j}>0$.

    Now, for completing this step, we define $A^1$ as the profile that contains a copy of $A^{x}$ for $x\in W^{j_0}\cap W^{j_t}$ and a copy of $A^{-x}$ for $x\in \mathcal{C}\setminus (W^{j_0}\cup W^{j_t})$. By consistency, it is easy to infer that $f(A^1)=\mathcal{W}^1$. Hence, Claim (1) of \Cref{lem:hyperplanes} shows for $v^1=v(A^1)$ and $W^i, W^j\in \mathcal{W}^1$ that $v^1 \hat u^{i,j}=0$. Next, consider a committee $W^i\not\in \mathcal{W}^1$. This means that there is a pair of candidates $a\in W^i, b\not\in W^i$ such that $a\in \mathcal{C}\setminus (W^{j_0}\cap W^{j_t})$ and $b\in W^{j_0}\cap W^{j_t}$, or $a\in \mathcal{C}\setminus (W^{j_0}\cup W^{j_t})$ and $b\in W^{j_0}\cup W^{j_t}$. In both cases, it follows from our previous analysis that $v^1 \hat u^{i,j}<0$ for the committee $W^j$ defined by $W^j=(W^i\setminus \{a\})\cup \{b\}$. For instance, if $a\in \mathcal{C}\setminus (W^{j_0}\cup W^{j_t})\subseteq \mathcal{C}\setminus (W^{j_0}\cap W^{j_t})$, $b\in W^{j_0}\cap W^{j_t}$, then $v^1\hat u^{i,j}=v(A^{-a}) \hat u^{i,j}+v(A^{b})\hat u^{i,j}=v(A^{-a}) \hat u^{i,j}-v(A^{b})\hat u^{j,i}<0$. This completes this step.\medskip

    \textbf{Step 1.2:} For our second step, let $\mathcal{W}^2=\{W\in \mathcal{W}^1\colon \forall x\in \{1,\dots, t\}\colon \{a_x,b_x\}\}$. As second step, we will construct a vector $v^2$ such that for each $W^i\not\in\mathcal{W}^2$, there is a committee $W^j$ such that $v^2\hat u^{i,j}<0$. 
    
    For constructing this vector, recall that $s^1(x+1,x)=\alpha$ for all $x< p$ and $s^1(p+1,p)\neq \alpha$. Moreover, consider two arbitrary committees $W^{i}, W^{j}\in \mathcal{W}^1$ with $\{W^{i}, W^j\}\neq \{W^{j_0}, W^{j_t}\}$. By this assumption, it holds that $|W^i\setminus W^j|=t'<t$, so we can use our induction hypothesis to construct $\hat u^{i,j}$. For doing so, let $W^{i_0},\dots W^{i_{t'}}$ denote a sequence of committees from $W^{i}$ to $W^j$. By the induction hypothesis, $\hat u^{i,j}=\delta \sum_{x=1}^{t'} \hat u^{i_{x+1}, i_x}=$ for some $\delta>0$. In turn, \Cref{lem:iolPathIndependence} shows that $\hat u^{i,j}_\ell=\delta \sum_{x=|B(\ell)\cap W^i|+1}^{|B(\ell)\cap W^j|} s^1(x-1,x)$ for all ballots $B(\ell)$ with $|B(\ell)\cap W^i|\leq |B(\ell)\cap W^j|$. Hence, if additionally $|B(\ell)\cap W^i|\leq p$ and $|B(\ell)\cap W^j|\leq p$, then $\hat u^{i,j}_\ell=-\delta\alpha (|B(\ell)\cap W^j|- |B(\ell)\cap W^i|)=\delta\alpha (|B(\ell)\cap W^i|- |B(\ell)\cap W^j|)$. On the other hand, if $|B(\ell)\cap W^i|\leq p$ and $|B(\ell)\cap W^i|=p+1$, then $\hat u^{i,j}_\ell=\delta\alpha (|B(\ell)\cap W^i|- p)-\delta s(p+1,p)=\delta\alpha (|B(\ell)\cap W^i|- |B(\ell)\cap W^j|)+\delta\alpha-\delta s(p+1,p)$.

    Now for constructing the vector $v^2$, we consider first the profiles $\bar A^{x}$ that contain ballot $A$ with $|A|=p+1$ and $a_x, b_x\in A$ once. Moreover, we define $\bar A$ as the profile that consists of a copy of $\bar A^{x}$ for all $x\in \{1,\dots, t\}$ and let $\bar v=v(\bar A)$. 
    
    First, we observe that all candidates in $(W^{j_0}\cup W^{j_t})\setminus (W^{j_0}\cap W^{j_t})$ are approved by the same number of voters in $\bar A$. Hence all committees in $\mathcal{W}^1$ have the same total number of approvals in $\bar A$, i.e.,  $\sum_{\ell\leq  |\mathcal{A}|} \bar v_\ell |B(\ell)\cap W^i|=\sum_{\ell\leq  |\mathcal{A}|} \bar v_\ell |B(\ell)\cap W^j|$ for all committees $W^i,W^j\in\mathcal{W}^1$. Moreover, note that for every ballot $A$ in $\bar A$ and every committee $W\in\mathcal{W}^2$, it holds that $|W\cap A|\leq p$ because $|A|=p+1$ and $\{b_x,a_x\}\subseteq A$ for some $x\in \{1,\dots, t\}$ but $\{b_x,a_x\}\not\subseteq W$. Hence, for all $W^i, W^j\in \mathcal{W}^2$ with $\{W^i, W^j\}\neq \{W^{j_0}, W^{j_t}\}$, the following equation holds due to our previous analysis. In this equation, we define $I_1=\{\ell\in \{1,\dots, |\mathcal{A}|\}\colon |B(\ell)\cap W^i|\leq |B(\ell)\cap W^j|\}$ and $I_1=\{\ell\in \{1,\dots, |\mathcal{A}|\}\colon |B(\ell)\cap W^i|> |B(\ell)\cap W^j|\}$.
    \begin{align*}
        \bar v \hat u^{i,j}&=\sum_{\ell\in \{1,\dots, |\mathcal{A}|\}} \bar v_\ell \hat u^{i,j}_\ell\\
        &=\sum_{\ell\in I_1|} \bar v_\ell \hat u^{i,j}_\ell - \sum_{\ell\in I_2} \bar v_\ell \hat u^{j,i}_\ell\\
        &=\sum_{\ell\in I_1} \bar v_\ell \delta\alpha(|B(\ell)\cap W^i|- |B(\ell)\cap W^j|) \\
        &\qquad- \sum_{\ell\in I_2} \bar v_\ell \delta\alpha(|B(\ell)\cap W^j|- |B(\ell)\cap W^i|)\\
        &=\delta\alpha \sum_{\ell\in \{1,\dots, |\mathcal{A}|\}} \bar v_\ell |B(\ell)\cap W^i| \\
        &\qquad- \delta\alpha\sum_{\ell\in \{1,\dots, |\mathcal{A}|\}} \bar v_\ell |B(\ell)\cap W^j|\\
        &=0. 
    \end{align*}

    Moreover, it also holds that $\bar v \hat u^{j_0, j_t}=0$. For explaining this, we consider the permutation $\tau$ with $\tau(a_x)=b_x$, $\tau(b_x)=a_x$ for all $x\in \{1,\dots, t\}$ and $\tau(z)=z$ for all remaining candidates. It is easy to see that $\tau(v(\bar A^{x}))=v(\bar A^x)$ for all $x\in \{1,\dots, t\}$ and hence $\tau(\bar v)=\bar v$. By our usual permutation arguments, it thus follows that $\bar v \hat u^{j_0, j_t}=0$. 

    Next, consider a committee $W^i\in\mathcal{W}^1\setminus \mathcal W^2$. Then $a_x,b_x\in W^i$ for some $x\in \{1,\dots, t\}$. Moreover, let $\gamma$ denote the number of ballots $A$ in $\bar A$ such that $|A\cap W^i|=p+1$. 
    Since $p+1\leq k$ and $a_x,b_x\in W^y$, there is at least one ballot $A$ in $\bar A^{x}$ such that $A\subseteq W$ and thus $\gamma\geq 1$. We claim that $\bar v \hat u^{j_0,i} >0$ if $\alpha>s^1(p,p-1)$ and $\bar v \hat u^{j_0,i}  <0$ if $\alpha< s^1(p+1,p)$. 
    Note for this first that $|W^{i_0}\cap W^y|<t$ since $W^i\in \mathcal{W}^1$ and $a_x\in W^i$. Moreover, it holds for every ballot $B(\ell)$ with $\bar v_\ell\neq 0$ that $|W^{j_0}\cap B(\ell)|\leq p$ since $|B(\ell)|=p+1$ and $b_x\in B(\ell)\setminus W^{j_0}$ for some $x\in \{1,\dots, t\}$. On the other hand, $|W^{i}\cap B(\ell)|\leq p+1$. Using our initial insights, we thus derive the following equations, where $I_1=\{\ell\in \{0,\dots,|\mathcal{A}|\} \colon |B(\ell)\cap W^i|=p+1\}$ and $I_2=\{\ell\in \{0,\dots,|\mathcal{A}|\}\colon |B(\ell)\cap W^i|\leq p\}$. 
    \begin{align*}
        \bar v \hat u^{j_0,i} &=\sum_{\ell\in I_1} \bar v_\ell  \hat u^{j_0, i}_\ell +
        \sum_{\ell\in I_2} \bar v_\ell \hat u^{j_0, i}_\ell  \\
        & = \sum_{\ell\in I_1} 
        \bar v_\ell \delta\alpha (|B(\ell)\cap W^{j_0}|-|B(\ell)\cap W^i|)\\
        &\qquad + \sum_{\ell\in I_1} 
        \bar v_\ell(\delta\alpha - \delta s^1(p+1,p))\\
        &\qquad+ \sum_{\ell\in I_2} \bar v_\ell \delta\alpha  (|B(\ell)\cap W^{j_0}|-
        |B(\ell)\cap W^i|)\\
        &=\delta\gamma (\alpha-s^1(p+1, p))\\
        &\qquad+\delta\alpha \sum_{\ell\in \{1,\dots,|\mathcal{A}|\}} \bar v_\ell |B(\ell)\cap W^{j_0}| \\
        &\qquad -\delta\alpha \sum_{\ell\in \{1,\dots,|\mathcal{A}|\}} \bar v_\ell |B(\ell)\cap W^{i}|\\
        &=\delta\gamma (\alpha-s^1(p+1, p)).
    \end{align*}
    In particular, we use in the last steps that all committees in $\mathcal{W}^1$ have the same total number of approvals. Since $\delta>0$ and $\gamma>0$, this shows that $\bar v \hat u^{j_0,i} < 0$ if $\alpha<s(p+1, p)$ and $\bar v \hat u^{j_0,i}>0$ if $\alpha> s(p+1, p)$. Hence, with the right sign in front of $\bar v$, all $W^y\in\mathcal{W}^1\setminus \mathcal W^2$ are dominated.

    Finally, let $v^2=\lambda v^1+\bar v$ if $\alpha>s(p+1, p)$ and $v^2=\lambda v^1-\bar v$ if $\alpha < s(p+1, p)$. In this definition, $\lambda>0$ is so large that for all committees $W^i\in \mathcal{W}_k\setminus \mathcal{W}^1$, there is another committee $W^j\in\mathcal{W}_k$ such that $v^2 \hat u^{i,j}<0$. 
    Now, note that for all $W^i\in \mathcal{W}^1\setminus \mathcal{W}^2$, we have that $v^2 \hat u^{i,j_0}=-v^2 \hat u^{j_0,x}<0$ since $v^1\hat u^{j_0,i}=0$ and we choose the sign of $\bar v$ such that $\pm \bar v \hat u^{j_0,i}>0$. Following a similar reasoning, it is easy to see that $v^2 \hat u^{i,j}=0$ for $W^i, W^j\in \mathcal{W}^2$.\medskip
         
    \textbf{Step 1.3:} For constructing our next vector, we define $\mathcal{W}^3=\{W\in\mathcal{W}^2\colon \forall x\in \{1,\dots, t-1\}\colon \{a_x, b_{x+1}\}\not\in W\}$. Put differently, $\mathcal{W}^3$ consists all committees $W^x$ such that $W^x=W^{j_0}\cap W^{j_t}\cup \{b_1,\dots, b_{x}, a_{x+1}, \dots, b_t\}$, i.e., $\mathcal{W}^3=\{W^{j_0}, \dots, W^{j_t}\}$. We aim to construct a vector $v^3$ such that all committees outside of $\mathcal{W}^3$ are dominated by some other committee. 

    To this end, consider the profile $\hat A^{x}$ for $x\in \{1,\dots, t-1\}$ that contains each ballot $A$ with $|A|=p+1$ and $\{a_x, b_{x+1}\}$ once. Furthermore, define the profile $\hat A$ as follows: $\hat A$ contains a copy of $\hat A^{x}$ for each $x\in \{1,\dots, t-1\}$ and so many copies of the ballots $\{a_t\}$ and $\{b_1\}$ that every candidate in $(W^{j_0}\cup W^{j_t})\setminus (W^{j_0}\cap W^{j_t})$ is approved by the same number of voters.
    Moreover, we define $\hat v=v(\hat A)$. By the definition of $\hat A$, it immediately follows that $\sum_{\ell\in \{1,\dots, |\mathcal{A}|\}} \hat v_\ell |B(\ell)\cap W|=\sum_{\ell\in \{1,\dots, |\mathcal{A}|\}} \hat v_\ell |B(\ell)\cap W'|$ for all $W,W'\in \mathcal{W}^2$. Furthermore, it holds for all committees $W\in\mathcal{W}^3$ and ballots $A$ in $\hat A$ that $|W\cap A|\leq p$. Hence, it follows from exactly the same reasoning as in the last step that $\hat v\hat u^{i,j}=0$ for all $W^i, W^j\in\mathcal{W}^3$ with $\{W^i, W^j\}\neq \{W^{j_0}, W^{j_t}\}$. Moreover, the permutation $\tau$ with $\tau(a_x)=b_{t-x+1}$, $\tau(b_x)=a_{t-x+1}$ for $x\in \{1,\dots, t\}$ and $\tau(z)=z$ for all other candidates maps $\hat A^{x}$ to $\hat A^{t-x+1}$. This means that $\tau(\hat A)=\hat A$ and our permutation argument thus also shows that $\hat v\hat u^{j_0, j_t}=0$. 

    Next, consider a committee $W^i\in\mathcal{W}^2\setminus\mathcal{W}^3$. In particular, this means that $\{a_x, b_{x+1}\}\subseteq W^i$ for some $x\in \{1,\dots, t-1\}$. Now, let $\gamma$ denote the number of ballots $A$ in $\hat A$ such that $|A\cap W^y| = p + 1 $. Since $p + 1\leq k$ and $\{a_x, b_{x+1}\}\subseteq W$, there is clearly a ballot $B(\ell)$ such that $|B(\ell)|=p +1$, $\hat v_\ell\neq 0$, and $B(\ell)\subseteq W$, so $\gamma\geq 1$. Since $a_x\in W^i$, this means that $|W^{j_0}\setminus W^i|<t$. Again, for all ballots $B(\ell)$ with $\hat v_\ell\neq0$, we have $|B(\ell)\cap W^{j_0}|\leq p$ and $|B(\ell)\cap W^{i}|\leq p+1$. So, we can apply the same reasoning as for $\bar v$ to infer that $\hat u^{j_0,i} \hat v > 0$ if $\alpha>s^1(p+1, p)$ and $\hat u^{j_0,i} \hat v < 0$ if $\alpha < s^1(p + 1, p)$.

    Finally, let $v^3=\lambda v^2+\hat v$ if $\alpha>s(p+1, p)$ and $v^3=\lambda v^2-\hat v$ if $\alpha < s(p+1, p)$. Here, we choose $\lambda>0$ again so large that for every committee $W^i\in\mathcal{W}_k\setminus \mathcal{W}^2$, there is another committee $W^j\in\mathcal{W}_k$ such that $v^3 \hat u^{i,j}  < 0$. Moreover, note that for every committee $W^i\in\mathcal{W}^2\setminus \mathcal{W}^3$, we have that $v^3 \hat u^{i,j_0} < 0$ because $\hat u^{i,j_0} v^2 = 0$ and we choose the sign of $\hat v$ so that $\pm \hat v  \hat u^{j_0, i} > 0$. Finally, $v^3 \hat u^{x,y}  = 0$ for all $x,y\in\mathcal{W}^3$ because $ v^2 \hat u^{x,y}= 0$ and $\hat v \hat u^{x,y} = 0$. Hence, for every committee $W^i\in\mathcal{W}_k\setminus \mathcal{W}^3$, there is another committee $W^j$ such that $v^3 \hat u^{i,j}<0$, and for all $W^{i}, W^{j}\in \mathcal{W}^3$, it holds that $v^3\hat u^{i,j}=0$.\medskip 
    
    \textbf{Step 1.4:} As last step, we consider the matrix $M$ that contains the vectors $\hat u^{j_0, j_1}, \dots, \hat u^{j_{x-1}, j-x}, \hat u^{j_0, j_t}$ as rows. More specifically, we assume that the $x$-th row of $M$ is $\hat u^{j_{x-1},j_x}$ for $x\in \{1,\dots, t\}$ and the $t+1$-th row of $M$ is $\hat u^{j_0,j_t}$. Now, by assumption, the rows of $M$ are linearly independent, i.e., the matrix has row rank of $t+1$. This means equivalently that it has a column rank of $t+1$, which in turn implies that the image of $M$ has full dimension. Thus, there is a vector $v^*$ such that $w=Mv^*$ satisfies $w_x=1$ for $x\in \{1,\dots, t\}$ and $w_{t+1}=-1$. 
    
    Next, just as in the previous steps, we define $v^4=\lambda v^3+v^*$, where $\lambda>0$ is so large that for every committee $W^i\in\mathcal{W}_k\setminus \mathcal{W}^3$, there is another committee $W^j\in\mathcal{W}_k$ with $v^4 \hat u^{y,z}<0$. Now, by definition of $v^4$ and Claim (1) of \Cref{lem:hyperplanes}, $v^4\not\in\bar R_{i}^f$ for every $W^i\in\mathcal{W}_k\setminus \mathcal{W}^3$. On the other hand, we have shown in Step 3 that $v^3 \hat u^{i,j}=0$ for all $W^i, W^j\in\mathcal{W}^3$. So $v^4 \hat u^{i,j}=v^*\hat u^{i,j}$ for these committees. This means that $v^4\hat u^{j_0, j_t}=-1<-0$ and $v^4 \hat u^{j_{x-1} j_x}=-v^4\hat u^{j_x, j_{x-1}}=-1<0$ for all $x\in \{1,\dots, t\}$. So, $v^4\not\in \bar R_i^f$ for any committee $W^i\in\mathcal{W}_k$, which gives us the desired contradiction. Hence, the initial assumption is wrong and the vectors $\hat u^{j_0, j_1}, \dots, \hat u^{j_{t-1}, j_t}, \hat u^{j_0, j_t}$ are linearly independent.\medskip

    \textbf{Step 2: There is $\delta>0$ such that $\hat u^{j_0, j_t}=\delta \sum_{x=1}^t \hat u^{j_{x-1}, j-x}$}

    By Step 1, we know that the set $\{\hat u^{j_0, j_1}, \dots, \hat u^{j_{t-1}, j_t}, \hat u^{j_0, j_t}\}$ is linearly dependent, whereas the set $\{\hat u^{j_0, j_1}, \dots, \hat u^{j_{t-1}, j_t}\}$ is linearly independent (\Cref{lem:iolLinearIndependence}). Consequently, there are unique values $\delta_x$ for $x\in \{1,\dots, t\}$, not all of which are $0$, such that $\hat u^{j_0,j_t}=\sum_{x=1}^{t} \delta_x \hat u^{j_{x-1}, j_x}$. Now, consider the profiles $A^x$ constructed in \Cref{lem:iolProfiles}. As discussed before, it holds that $\hat u^{i,j} v(A^c) =0$ for all committees $W^i, W^j\in\mathcal{W}_k$ such that $|W^i\setminus W^j|=1$ and $c\in W^i\cap W^j$ or $c\not\in W^i\cup W^j$. 
    Conversely, $\hat u^{i,j} v(A^c) > 0$ for all committees $W^i, W^j\in\mathcal{W}_k$ with $W^i\setminus W^j=\{c\}$. Moreover, observe that $\hat u^{y_1,z_1} v(A^{c_1}) =\hat u^{y_2,z_2} v( A^{c_2}) \neq 0$ for all committees $W^{y_1}, W^{z_1}, W^{y_2}, W^{z_2}\in\mathcal{W}_k$ with $W^{y_1}\setminus W^{z_1}=\{c_1\}$ and $W^{y_2}\setminus W^{z_2}=\{c_2\}$. This can again be proven by choosing a suitable permutation $\tau$. 
    
    Now, let $x\in \{1,\dots, t\}$ be an arbitrary index. By our previous argument, we have that $v(A^{a_{x}}) \hat u^{j_{x-1},j_x}> 0$ because $a_{x}\in W^{j_{x-1}}\setminus W^{j_x}$. On the other side, $a_{x}\in W^{j_{y-1}}\cap \bar W^{j_y}$ for all $1\leq y<x$, and $a_{x}\not\in \bar W^{j_{y-1}}\cup \bar W^{j_y}$ for all $y>x$. So, we infer that $\hat u^{j_{y-1},j_y} v( A^{a_{x}}) = 0$ for all $y\in \{1,\dots, t\}$ with $y\neq x$. Hence, it follows that $v( A^{a_{x}}) \hat u^{j_0,j_t}=\sum_{y=1}^{t} \delta_y v( A^{a_{x}}) \hat u^{j_{y-1},j_y}=\delta_x v( A^{a_{x}})  \hat u^{j_{x-1}, j_x} $. 

    As next step, we consider two distinct indices $x_1, x_2\in \{1,\dots, t\}$ and the profiles $ A^{a_{x_1}}$ and $ A^{a_{x_2}}$. Moreover, let $\tau:\mathcal{C}\rightarrow\mathcal{C}$ be a permutation such that $\tau(a_{x_1})=a_{x_2}$, $\tau(a_{x_2})=a_{x_1}$, and $\tau(c)=c$ for all other candidates. First, $a_{x_1}, a_{x_2}\in W^{j_0}\setminus W^{j_t}$, so $\tau(W^{j_0})=W^{j_0}$ and $\tau(W^{j_t})=W^{j_t}$. Hence, $v(A^{a_{x_1}})\hat u^{j_0, j_t}=\tau(v(A^{a_{x_1}})) \tau(\hat u^{j_0, j_t})=v(A^{a_{x_2}})\hat u^{j_0, j_t}$ by Claim (3) of \Cref{lem:hyperplanes}. Combining our last two insights thus shows that $\delta_{x_1} \hat u^{j_{x_1-1},j-{x_1}} v( A^{a_{x_1}})=\delta_{x_2} \hat u^{j_{x_2-1},j_{x_2}} v( A^{a_{x_2}})$. Since $v( A^{a_{x_1}}) \hat u^{j_{x_1-1},j_{x_1}}=v( A^{a_{x_2}}) \hat u^{j_{x_2-1},j_{x_2}}\neq 0$, this means that $\delta_{x_1}=\delta_{x_2}$. This proves that $\hat u^{j_0,j_t}=\delta \sum_{x=1}^{t} \hat u^{j_{x-1},j_x}$ for some $\delta\in\mathbb{R}$. Moreover, since there is at least one non-zero $\delta_x$, it follows that $\delta\neq 0$

    Finally, we need to show that $\delta>0$. For this, we note that $W^{j_0}\in f(A^{a_1})$ as $a_1\in W^{j_0}$ and therefore $v(A^{a_1})\in \bar R_{j_0}^f$. Now, by Claim (1) of \Cref{lem:hyperplanes}, this means that $v(A^{a_1}) \hat u^{j_0, j_t}\geq 0$. On the other side, we have already shown that $v(A^{a_1)} \hat u^{i,j}=\delta v(A^{a_1}) \hat u^{j_0,j_1}$ and that $v(A^{a_1}) \hat u^{j_0,j_1}$ and $\delta\neq 0$. Combining these claims then shows that $\delta>0$ and thus proves the lemma.
\end{proof}

Finally, we are now ready to prove \Cref{thm:Thiele}. 

\Thiele*
\begin{proof}
We note that the direction from left to right has been shown in the main body. Hence, we focus on the converse direction and assume that $f\in\mathcal{F}^2$. Now, if $f$ is the trivial rule, it is clearly the Thiele rule defined by $s(0)=0$. On the other hand, we can assume that $f$ is non-imposing if it is non-trivial by \Cref{lem:iolProfiles}. This allows us to access the vectors $\hat u^{i,j}$ constructed in \Cref{lem:hyperplanes} and the function $s^1$ constructed in \Cref{lem:IoLCommitteeSymmetry}. Now, we define the function $s(x)$ as follows: $s(0)=0$ and $s(x)=\sum_{y=1}^x s(y, y-1)$ for all $x\in \{1,\dots, k\}$. Moreover, we extend $s$ to vectors by $\hat s(v,W)=\sum_{\ell\in\{1,\dots,|\mathcal{A}|\}} v_\ell s(|B(\ell)\cap W|)$. We will show this lemma by proving that $f(A)=f'(A):=\{W\in\mathcal{W}_k\colon \forall W'\in\mathcal{W}_k\colon \hat s(v(A), W)\geq \hat s(v(A), W')\}$ and that $f'$ is a Thiele rule. For doing so, we proceed in multiple steps.\medskip

\textbf{Step 1: There is $\delta>0$ such that $\delta\hat u^{i,j}_\ell=s(|B(\ell)\cap W^i|)- s(|B(\ell)\cap W^i|)$ for all ballots $B(\ell)$.}

As the first step, we show that the vectors $\hat u^{i,j}$ can be represented by the function $s$. For this, consider two arbitrary committees $W^i, W^j\in\mathcal{W}_k$ and a ballot $B(\ell)$. First, if $|W^i\setminus W^j|=1$, then this claim follows from \Cref{lem:IoLCommitteeSymmetry}. Hence, suppose that $|W^i\setminus W^j|=t\geq 2$, which requires that $2\leq k\leq m-2$, which means that \Cref{lem:iolLinearDependence} applies. To use this lemma, let $W^{j_0}, \dots, W^{j_t}$ denote a sequence of committees from $W^i$ to $W^j$. Then, we infer that $\hat u^{i,j}=\delta \sum_{x=1}^t \hat u^{j_{x-1}, j_x}$ for some $\delta>0$. Now, suppose that $|B(\ell)\cap W^i|\leq |B(\ell)\cap W^j|$. Then, \Cref{lem:iolPathIndependence} shows that $\sum_{x=1}^t \hat u^{j_{x-1}, j_x}=\sum_{x=|B(\ell)\cap W^i|+1}^{|B(\ell)\cap W^j|} s^1(x-1,x)$. By the definition of $s$ and the fact that $s^1(x-1,x)=-s^1(x,x-1)$, we get that $\sum_{x=|B(\ell)\cap W^i|+1}^{|B(\ell)\cap W^j|} s^1(x-1,x)=-\sum_{x=|B(\ell)\cap W^i|+1}^{|B(\ell)\cap W^j|} s^1(x,x-1)=-(s(|B(\ell)\cap W^j) - s(|B(\ell)\cap W^i))=s(|B(\ell)\cap W^i) - s(|B(\ell)\cap W^j)$. Hence, the claim is proven in this case. 

Next, assume that $|B(\ell)\cap W^i|> |B(\ell)\cap W^j|$. In this case, we can consider the vectors $\hat u^{j,i}$ and our previous argument shows that $\hat u^{j,i}_\ell=\delta (s(|B(\ell)\cap W^j) - s(|B(\ell)\cap W^i))$. Finally, the step follows again since $\hat u^{i,j}_\ell=-\hat u^{j,i}_\ell$.\medskip

\textbf{Step 2: $f(A)\subseteq f'(A)$ for all profiles $A\in\mathcal{A}^*$}

For showing this step, consider an arbitrary profile $A$. By our lemmas, we have that $f(A)=\hat g(v(A))=\{W^i\in\mathcal{W}_k\colon v(A)\in R_i^f\}\subseteq \{W^i\in\mathcal{W}_k\colon v(A)\in \bar R_i^f\}$. Hence, our goal is to show that $v(A)\in \bar R_i^f$ if and only if $\hat s(v(A), W^i)\geq s(v(A), W^j)$ for all committees $W^j\in\mathcal{W}_k$. For doing so, we recall that $\bar R_i^f=\{v\in\mathbb{R}^{|\mathcal{A}|}\colon \forall j\in \{1,\dots, |\mathcal{W}_k|\}\setminus \{i\}\colon v\hat u^{i,j}\geq 0\}$. Hence, it clearly suffices to show that $v(A)\hat u^{i,j}\geq 0$ if and only if $\hat s(v(A), W^i)\geq \hat s(v(A), W^j)$. For this, we observe that by Step 1, $v(A) \hat u^{i,j}=\sum_{\ell\in \{1,\dots, |\mathcal{A}|\}} v(A)_\ell \delta (s(|B(\ell)\cap W^i|)- s(|B(\ell)\cap W^i|))=\delta (\hat s(v(A), W^i) - \hat s(v(A), W^j))$ for some $\delta>0$. This shows that our claim holds and thus this step follows.\medskip

\textbf{Step 3: $f(A)\subseteq f'(A)$ for all profiles $A\in\mathcal{A}^*$ and $f'$ is a Thiele rule}

First, we show that $f'(A)$ is a Thiele rule. For this, we note first that $s(0)=0$ by definition and it thus only remains to prove that $s$ is non-decreasing. Now, assume for contradicting that there is an index $p\in \{1,\dots, k\}$ such that $s(p)<s(p-1)$. First, suppose that $p>1$. In this case, consider the profile $A$ in which every ballot of size $p$ is reported once. By anonymity and neutrality, it follows that $f(A)=f'(A)=\mathcal{W}_k$. Next, we consider two arbitrary committees $W,W'\in\mathcal{W}_k$ and let $c\in\mathcal{W}\setminus W'$. Moreover, let $B(\ell)$ denote a ballot such that $B(\ell)\subseteq W$ and $c\in B(\ell)$. Finally, let $A'$ denote the profile in which we replace $B(\ell)$ with $B(\ell)\setminus \{c\}$. It is easy to see that $s(A',W)=s(A,W)-s(p)+s(p-1)>s(A,W)=s(A,W')=s(A',W')$. Hence, $W'\not\in f'(A')$ and therefore also $W'\not\in f(A')$. However, this contradicts independence of losers as $W'\in f(A)$ and $c\not\in W'$. Hence, we infer that $s(p)\geq s(p-1)$ for all $p\in \{2,\dots, k\}$. As second case suppose that $p=1$, which means that $s(1)<s(0)=0$. In this case, let $A$ denote the profile consisting of all ballots of size $2$. Now, consider a ballot $\{x,y\}$ and let $W$ and $W'$ denote committees such that $x\in W$, $y\not\in W$ and $x\not\in W'$, $y\in W'$. Finally, let $A'$ denote the profile derived from $A$ by replacing the ballot $\{x,y\}$ with the ballot $\{y\}$. It is easy to see that $s(A',W)=s(A,W)-s(1)+s(0)>s(A,W)=s(A,W')=s(A',W')$. Hence, $W'\in f(A)$ and $W'\not\in f(A)$ as $W'\not\in f'(A')$. This contradicts independence of losers as $x\not\in W'$. Both cases combined show that $s$ is non-decreasing, so $f'$ is indeed a Thiele rule. 

Finally, we show that $f(A)=f'(A)$ for all profiles $A\in\mathcal{A}^*$. Assume that this is not the case, which means that there is a profile $A$ and a committee $W$ such that $W\in f'(A)\setminus f(A)$ because of Step 2. Moreover, note that $f'$ is a Thiele rule and hence satisfies consistency and all the other axioms of \Cref{thm:Thiele}. Next, since $s$ is non-zero (as the vectors $\hat u^{i,j}$ are non-zero), $f'$ is not the trivial rule. Hence, we can use \Cref{lem:iolProfiles} to show that $f'$ is non-imposing. In particular, there is a profile $A'$ such that $f(A')=f'(A')=\{W\}$. By consistency of $f'$, this means that $f(\lambda A+ A')=f'(\lambda A+ A')=\{W\}$ for every $\lambda\in\mathbb{N}$. However, this contradicts the continuity of $f$ and therefore, our initial assumption must have been wrong. This shows that $f$ is the Thiele rule defined by $s$. 
\end{proof}

\section{Proof of \Cref{prop:PAV,prop:SAV}}

Finally, we discuss the proofs of \Cref{prop:PAV,prop:SAV}.

\PAV*
\begin{proof}
    First, we show that \texttt{PAV} satisfies party-propor\-tio\-na\-lity. To this end, let $A$ denote a party-list profile with parties $\mathcal{P}_A$ and consider two parties $P_i, P_j$ such that $\frac{n_i}{|P_i|}<\frac{n_j}{|P_j|}$. Moreover, we assume for contradiction that there is a committee $W\in \texttt{PAV}(A)$ such that $P_i\subseteq W$ and $P_j\not\subseteq W$. Now, let $x\in P_i\cap W$ and $y\in P_j\setminus W$ and consider the committee $W'=(W\cup \{y\})\setminus \{x\}$. It is easy to compute that 
    $\hat s_{\texttt{PAV}}(A,W')=\hat s_{\texttt{PAV}}(A,W)-\frac{n_i}{|P_i|}+\frac{n_j}{|W\cap P_j|+1}\geq \hat s_{\texttt{PAV}}(A,W)-\frac{n_i}{|P_i|}+\frac{n_j}{|P_j|}>\hat s_{\texttt{PAV}}(A,W)$. However, this contradicts that $W\in \texttt{PAV}(A)$ and thus shows that our assumption that $P_j\not\subseteq W$ was wrong. So, \texttt{PAV} satisfies party-proportionality. 

    For the other direction, let $f$ denote a Thiele rule satisfying party-proportionality and let $s$ denote its Thiele scoring function. First, we show that $s(1)>0$ and consider to this end the profile $A$ in which two voters approves party $P_1=\{c_1\}$ and one voter approves party $P_2=\{c_2,\dots, c_{k+1}\}$. Now, if $s(0)=0$, it holds for the committee $W=P_2$ that $W\in f(A)$ because $s$ is non-decreasing. However, we have that $\frac{n_1}{|P_1|}>\frac{n_2}{|P_2|}$ and $P_1\not\subseteq W$. Hence, this violates party-proportionality and therefore $s(1)>0$ must be true. Since Thiele rules are invariant under scaling $s$, we subsequently assume that $s(1)=1$. 

    As the next step, we suppose for contradiction that there is an index $\ell\in \{2,\dots,k\}$ such that $s(\ell)\neq \sum_{x=1}^{\ell} \frac{1}{x}$. Moreover, we assume that $\ell$ is minimal, i.e., $s(y)= \sum_{x=1}^{y} \frac{1}{x}$ for all $x<\ell$. Now, we proceed with a case distinction with respect to $s(\ell)$ and first consider the case that $s(\ell)>\sum_{x=1}^{\ell} \frac{1}{x}$. In this case, we define $\Delta=s(\ell)-\sum_{x=1}^{\ell} \frac{1}{x}$ and let $t\in\mathbb{N}$ such that $\Delta t>1$ and $\frac{\ell}{\ell-1}t>t+1$. Furthermore, let $A$ denote the party-list profile where $n_1=\ell t$ voters approve the party $P_1=\{c_1,\dots,c_\ell\}$ and every other candidate is in a singleton party that is approved by $t+1$ voters. Now, it can be checked that for all committees $W,W'\in\mathcal{W}_k$ with $P_1\subseteq W$ and $\ell'=|P_1\cap W'|<\ell$, the following inequality holds:
    \begin{align*}
        \hat s(A,W)&= t\ell s(\ell)+(k-\ell)(t+1)\\
        &=t\ell \sum_{x=1}^\ell \frac{1}{x} + t\ell\Delta + (k-\ell)(t+1)\\
        &>t\ell \sum_{x=1}^{\ell'} \frac{1}{x} +t\ell (\ell-\ell')\frac{1}{\ell} + \ell + (k-\ell)(t+1)\\
        &\geq t\ell \sum_{x=1}^{\ell'} \frac{1}{x} + (\ell-\ell')(t+1)+(k-\ell)(t+1)\\
        &= t\ell \sum_{x=1}^{\ell'} \frac{1}{x} + (k-\ell')(t+1)\\
        &=\hat s(A,W').
    \end{align*}
    This proves that $P_1\subseteq W$ for all $W\in f(A)$. However, this means that there is a party $P_j=\{c\}\neq P_1$ with $c\not\in W$ for some $W\in f(A)$. This contradicts party-proportionality as $\frac{n_j}{|P_j|}=t+1>t=\frac{n_1}{|P_1|}$ and $P_1\subseteq W$, so the assumption that $s(\ell)>\sum_{x=1}^\ell \frac{1}{x}$ must have been wrong. 

    For the second case, we suppose that $s(\ell)<\sum_{x=1}^\ell \frac{1}{x}$, define $\Delta=\sum_{x=1}^\ell \frac{1}{x}-s(\ell)$, and let $t\geq 2$ such that $t\ell\Delta>1$. Moreover, we consider the profile $A$ in which $t\ell$ voters approve the party $P_1=\{c_1,\dots,c_{\ell}\}$ and all other candidates are in singleton parties that are approved by $t-1$ voters. Now, we can compute for all committees $W,W'\in\mathcal{W}_k$ with $P_1\subseteq W$ and $|W'\cap P_1|=\ell-1$ that 
    \begin{align*}
        \hat s(A,W)&=t\ell s(\ell)+(k-\ell)(t-1)\\
        &=t\ell \sum_{x=1}^{\ell}\frac{1}{x} - t\ell\Delta + (k-\ell)(t-1)\\
        &>t\ell \sum_{x=1}^{\ell-1} \frac{1}{x} + t -1 + (k-\ell)(t-1)\\
        &=t\ell \sum_{x=1}^{\ell-1} \frac{1}{x} + (k-\ell+1)(t-1)\\
        &=\hat s(A,W').
    \end{align*}
    Hence, it holds for all committees $W\in f(A)$ that $P_1\not\subseteq W$. In turn, this means that there is a candidates $c\in W\setminus P_1$. However, this contradicts party-proportionality since $\frac{n_1}{|P_1|}=t>t-1=\frac{n_j}{|P_j|}$ for $P_j=\{c\}$ and thus, the assumption that $s(\ell)<\sum_{x=1}^{\ell} \frac{1}{x}$ is wrong. Hence, we now conclude that $s(\ell)=\sum_{x=1}^{\ell} \frac{1}{x}$ for all $\ell\in \{1,\dots,k\}$, so $f$ is \texttt{PAV}.
\end{proof} 

\SAV*
\begin{proof}
    First, it follows immediately from the definition of \texttt{SAV} that this rule satisfies both party-proportionality and aversion to unanimous committees. We thus focus on the converse direction and therefore consider a BSWAV rule $f$ that is party-proportional and has aversion to unanimous committees. Moreover, let $\alpha\in \mathbb{R}^m_{\geq 0}$ denote the weight vector of $f$. First, we show that $\alpha_1>0$ and consider to this end the profile $A$ where $2$ voters approve party $P_1=\{c_1\}$ and $1$ voter approves party $P_2=\{c_2,\dots, c_{k+1}\}$. Now, if $\alpha_1=0$, it is easy to see that $P_2\in f(A)$. However, this contradicts party-proportionality as $\frac{n_1}{|P_1|}>\frac{n_2}{|P_2|}$ and no member of $P_1$ is chosen in the committee  $W=P_2$. Hence, it must hold that $\alpha_1>0$ and we can rescale our weight vector such that $\alpha_1=1$. Moreover, we note that the entry $\alpha_m$ has no effect on $f$ as voters who approve all candidates increase the score of each committee by the same. We thus also suppose that $\alpha_m=\frac{1}{m}$. 

    Next, we assume for contradiction that there is an index $\ell\in \{2,\dots, m-1\}$ such that $\alpha_\ell\neq \frac{1}{\ell}$. Just as for \Cref{prop:AV}, we use a case distinction with respect to $\alpha_\ell$ and first assume that $\alpha_\ell>\frac{1}{\ell}$. In this case, we define $\Delta=\alpha_\ell-\frac{1}{\ell}$ and let $t\in\mathbb{N}$ such that $t\ell\Delta>1$. Moreover, we consider the profile $A$ in which $\ell t$ voters approve the party $P_1=\{c_1,\dots,c_\ell\}$ and each candidate $c\in\mathcal{C}\setminus P_i$ is uniquely approved by $t+1$ voters. We can now compute for every committee $W$ with $|W\cap P_1|=\ell'$ that $\hat s(A,W)=t\ell \ell' \alpha_\ell + (k-\ell')(t+1)=t\ell \ell' (\frac{1}{\ell}+\Delta)+(k-\ell')(t+1)=tk+k+\ell'(t\ell\Delta-1)$. Since $t\ell\Delta>1$, this means that the committees $W$ that maximize $|W\cap P_1|$ have maximal score. Now, if $\ell=|P_1|>k$, this means that $W\subseteq P_1$ for all $W\in f(A)$. However, $f$ then fails aversion to unanimous committees since $n_j=t+1>t=\frac{n_1}{|P_1|}$ for all other parties $P_j$. On the other hand, if $\ell\leq k$, then $f$ fails party-proportionality since $P_1\subseteq W$ for every $W\in f(A)$. This implies that there is another party $P_j=\{c\}$ with $P_j\cap W=\emptyset$. Since $\frac{n_j}{|P_j|}=t+1>t=\frac{n_1}{|P_1|}$, party-proportionality is violated. In summary, this means that the assumption that $\alpha_\ell> \frac{1}{\ell}$ has been wrong. 

    For the second case, suppose that $\alpha_\ell<\frac{1}{\ell}$. Moreover, we define $\Delta=\frac{1}{\ell}- \alpha_\ell$ and let $t\geq 2$ such that $t\ell\Delta>1$. Finally, consider the profile $A$ in which $t\ell$ voters approve $P_1=\{c_1,\dots, c_\ell\}$ and every other candidate is in a singleton party approved by $t-1$ voters. In this profile, every committee $W$ with $|W\cap P_1|=\ell'$ gets a score of $\hat s(A,W)=t\ell \ell' \alpha_\ell + (k-\ell')(t-1)=t\ell \ell' (\frac{1}{\ell}-\Delta)+(k-\ell')(t-1)=tk-k-\ell'(t\ell\Delta-1)$. Since $t\ell\Delta>1$, this means that the committees $W\in f(A)$ minimize $|W\cap P_1|$, so $P_1\not\subseteq W$. On the other hand, this implies that for every $W\in f(A)$ that there is another party $P_j=\{c\}$ such that $c\in W$. This, however, violates party-proportionality: $P_j\subseteq W$ and $P_1\not\subseteq W$ even though $\frac{n_j}{|P_j|}=t-1<t=\frac{n_1}{|P_1|}$. Hence, we also have in this case a contradiction and can now infer that $\alpha_\ell=\frac{1}{\ell}$, which means that $f$ is \texttt{SAV}. 
\end{proof}

\end{document}